\newif
\newcommand{\acdef}[1]{\define{\acl{#1}} \textup{(\acs{#1})}\acused{#1}}	
\definecolor{Bonfire}{HTML}{9E162E}
\definecolor{CardinalRed}{HTML}{C41E3A}
\definecolor{TuckOrange}{HTML}{D94415}
\definecolor{CadmiumGreen}{HTML}{097969}
\definecolor{Dartmouth}{HTML}{00693E}
\definecolor{ForestGreen}{HTML}{12312B}
\definecolor{RichForestGreen}{HTML}{0D1E1C}
\definecolor{SeaGreen}{HTML}{2E8B57}
\definecolor{SpringGreen}{HTML}{EAFAF1}
\definecolor{Jade}{HTML}{009900}
\definecolor{CobaltBlue}{HTML}{0047AB}
\definecolor{NavyBlue}{HTML}{000080}
\definecolor{RiverBlue}{HTML}{267ABA}
\definecolor{RiverNavy}{HTML}{003C73}
\definecolor{KleinBlue}{HTML}{002FA7}
\definecolor{OxfordBlue}{HTML}{002147}
\definecolor{SapphireBlue}{HTML}{0F52BA}
\definecolor{Zaffre}{HTML}{0819A8}
\colorlet{MyRed}{CardinalRed}
\colorlet{MyGreen}{Dartmouth}
\colorlet{MyBlue}{DodgerBlue}
\colorlet{MyViolet}{DarkOrchid}
\colorlet{MyLightRed}{MyRed!25}
\colorlet{MyLightGreen}{MyGreen!25}
\colorlet{MyLightBlue}{MyBlue!25}
\colorlet{PrimalColor}{MidnightBlue!50!DodgerBlue}
\colorlet{PrimalFill}{DodgerBlue!25}
\colorlet{DualColor}{MyRed}
\colorlet{AlertColor}{MyRed}	
\colorlet{BadColor}{MyRed}	
\colorlet{GoodColor}{MyGreen}	
\colorlet{LinkColor}{MediumBlue}	
\colorlet{RevColor}{blue}	
	\colorlet{DraftColor}{MyRed}	
	\colorlet{DraftColor}{black}	
\newcommand{\afterhead}{.\;}	
\newcommand{\para}[1]{\medskip\paragraph{\bfseries#1\afterhead}}	
\newcommand{\asterism}{\ding{70}}
\setlist[1]{topsep=\medskipamount,itemsep=\smallskipamount,left=\parindent}
\setlist[2]{left=0pt}
	\colorlet{refkey}{DraftColor}
	\colorlet{labelkey}{DraftColor}
\crefname{algo}{Algorithm}{Algorithms}
\crefname{assumption}{Assumption}{Assumptions}
	\def\ltx@label#1{\cref@label{#1}}	
	\def\label@in@display@noarg#1{\cref@old@label@in@display{#1}}	
\newcounter{pseudo}
\theoremstyle{plain}
\newtheorem{theorem}{Theorem}	
\newtheorem{corollary}{Corollary}	
\newtheorem{lemma}{Lemma}	
\newtheorem{proposition}{Proposition}	
\newtheorem{claim}{Claim}	
\newtheorem*{theorem*}{Theorem}	
\newtheorem*{corollary*}{Corollary}	
\theoremstyle{definition}
\newtheorem{definition}{Definition}	
\newtheorem{example}{Example}	
\newtheorem*{definition*}{Definition}	
\newtheorem*{assumption*}{Assumptions}	
\newtheorem*{example*}{Example}	
\theoremstyle{remark}
\newtheorem{remark}{Remark}	
\newtheorem*{remark*}{Remark}	
\newtheorem*{notation*}{Notation}	
\newtheorem*{note}{Note}	
\def\endenv{\hfill\asterism}	
\newcounter{proofstep}
\numberwithin{example}{section}	
	\newcommand{\draft}[1]{{\color{DraftColor}#1}}	
	\newcommand{\draft}[1]{#1}	
\newcommand{\define}[1]{\emph{\draft{#1}}}	
\newcommand{\newmacro}[2]{\newcommand{#1}{\draft{#2}}}	
\newcommand{\newop}[2]{\DeclareMathOperator{#1}{\draft{#2}}}	
\newcommand{\newoplims}[2]{\DeclareMathOperator*{#1}{\draft{#2}}}	
\newcommand{\eps}{\varepsilon}	
\DeclarePairedDelimiter{\braces}{\{}{\}}	
\DeclarePairedDelimiter{\parens}{(}{)}	
\DeclarePairedDelimiter{\abs}{\lvert}{\rvert}	
\DeclarePairedDelimiterX{\setdef}[2]{\{}{\}}{#1:#2}	
\DeclarePairedDelimiterXPP{\exclude}[1]{\mathopen{}\setminus}{\{}{\}}{}{#1}	
\DeclarePairedDelimiterX{\braket}[2]{\langle}{\rangle}{#1,#2}	
\DeclarePairedDelimiterX{\inner}[2]{\langle}{\rangle}{#1,#2}	
\DeclarePairedDelimiter{\norm}{\lVert}{\rVert}	
\DeclarePairedDelimiterXPP{\dnorm}[1]{}{\lVert}{\rVert}{_{\draft{\ast}}}{#1}	
\DeclarePairedDelimiterXPP{\onenorm}[1]{}{\lVert}{\rVert}{_{\draft{1}}}{#1}	
\DeclarePairedDelimiterXPP{\twonorm}[1]{}{\lVert}{\rVert}{_{\draft{2}}}{#1}	
\DeclarePairedDelimiterXPP{\pnorm}[1]{}{\lVert}{\rVert}{_{\draft{p}}}{#1}	
\DeclarePairedDelimiterXPP{\qnorm}[1]{}{\lVert}{\rVert}{_{\draft{q}}}{#1}	
\DeclarePairedDelimiterXPP{\supnorm}[1]{}{\lVert}{\rVert}{_{\draft{\infty}}}{#1}	
\newop{\defeq}{\coloneqq}	
\newop{\eqdef}{\eqqcolon}	
\newmacro{\from}{\colon}	
\newop{\too}{\rightrightarrows}	
\newop{\injects}{\hookrightarrow}	
\newop{\surjects}{\twoheadrightarrow}	
\newmacro{\F}{\mathbb{F}}	
\newmacro{\N}{\mathbb{N}}	
\newmacro{\Z}{\mathbb{Z}}	
\newmacro{\Q}{\mathbb{Q}}	
\newmacro{\R}{\mathbb{R}}	
\newmacro{\C}{\mathbb{C}}	
\newmacro{\real}{x}	
\newmacro{\reals}{\R}	
\newmacro{\complex}{z}	
\newmacro{\complexes}{\C}	
\newoplims{\argmax}{arg\,max}	
\newoplims{\argmin}{arg\,min}	
\newoplims{\intersect}{\bigcap}	
\newoplims{\union}{\bigcup}	
\newop{\aff}{aff}	
\newop{\bd}{bd}	
\newop{\bigoh}{\mathcal{O}}	
\newop{\card}{card}	
\newop{\cl}{cl}	
\newop{\conv}{conv}	
\newop{\crit}{crit}	
\newop{\curl}{curl}	
\newop{\diag}{diag}	
\newop{\diam}{diam}	
\newop{\dist}{dist}	
\newop{\diver}{div}	
\newop{\dom}{dom}	
\newop{\eig}{eig}	
\newop{\ess}{ess}	
\newop{\grad}{grad}	
\newop{\Hess}{Hess}	
\newop{\ind}{ind}	
\newop{\im}{im}	
\newop{\intr}{int}	
\newop{\Jac}{Jac}	
\newop{\one}{\mathds{1}}	
\newop{\proj}{proj}	
\newop{\prox}{prox}	
\newop{\rank}{rank}	
\newop{\relint}{ri}	
\newop{\sign}{sgn}	
\newop{\supp}{supp}	
\newop{\Sym}{Sym}	
\newop{\tr}{tr}	
\newop{\unif}{unif}	
\newop{\vol}{vol}	
\newcommand{\cf}{cf.\xspace}	
\newcommand{\eg}{e.g.,\xspace}	
\newcommand{\ie}{i.e.,\xspace}	
\newmacro{\commentsymbol}{\triangleright}	
\newcommand{\eqcomma}{\,,}	
\newcommand{\eqstop}{\,.}	
\newcommand{\alt}[1]{#1'}	
\newmacro{\argdot}{\boldsymbol{\cdot}}	
\newmacro{\dd}{\kern0pt\:d}	
\newmacro{\ddt}{\frac{d}{dt}}	
\newmacro{\del}{\partial}	
\newmacro{\const}{c}	
\newmacro{\Const}{C}	
\newmacro{\param}{\theta}	
\newmacro{\params}{\Theta}	
\newmacro{\coef}{\lambda}	
\newmacro{\fn}{f} 
\newmacro{\pexp}{p}	
\newmacro{\qexp}{q}	
\newmacro{\rexp}{r}	
\newmacro{\iCount}{i}	
\newmacro{\jCount}{j}	
\newmacro{\kCount}{k}	
\newmacro{\nCounts}{n}	
\newmacro{\counts}{\mathcal{I}}	
\newmacro{\point}{x}	
\newmacro{\pointalt}{\alt\point}	
\newmacro{\points}{\mathcal{X}}	
\newmacro{\base}{p}	
\newmacro{\basealt}{q}	
\newmacro{\auxpoint}{u}	
\newmacro{\elem}{a}	
\newmacro{\elemalt}{b}	
\newmacro{\iElem}{a}	
\newmacro{\jElem}{b}	
\newmacro{\kElem}{c}	
\newmacro{\set}{\mathcal{A}}	
\newmacro{\borel}{\mathcal{B}}	
\newmacro{\closed}{\mathcal{C}}	
\newmacro{\cpt}{\mathcal{K}}	
\newmacro{\nhd}{\mathcal{U}}	
\newmacro{\nhdalt}{\mathcal{V}}	
\newmacro{\open}{\mathcal{U}}	
\newmacro{\domain}{\mathcal{D}}	
\newmacro{\region}{\mathcal{R}}	
\newmacro{\interval}{\mathcal{I}}	
\newmacro{\rectangle}{\mathcal{R}}	
\newmacro{\cone}{\mathcal{K}}	
\newmacro{\tstart}{0}	
\newmacro{\timealt}{s}	
\newmacro{\timealtalt}{\tau}	
\newmacro{\horizon}{T}	
\newmacro{\curve}{\gamma}	
\DeclarePairedDelimiterXPP{\curveof}[1]{\curve}{(}{)}{}{#1}	
\DeclarePairedDelimiterXPP{\curveofX}[2]{\curve_{#1}}{(}{)}{}{#2}	
\DeclarePairedDelimiterXPP{\velof}[1]{\dot\curve}{(}{)}{}{#1}	
\DeclarePairedDelimiterXPP{\velofX}[2]{\dot\curve_{#1}}{(}{)}{}{#2}	
\newmacro{\flowmap}{\Phi}	
\DeclarePairedDelimiterXPP{\flowof}[2]{\flowmap_{#1}}{(}{)}{}{#2}	
\newmacro{\traj}{x}	
\newmacro{\dtraj}{\dot\traj}	
\DeclarePairedDelimiterXPP{\trajof}[1]{\traj}{(}{)}{}{#1}	
\DeclarePairedDelimiterXPP{\trajofX}[2]{\traj_{#1}}{(}{)}{}{#2}	
\DeclarePairedDelimiterXPP{\dtrajof}[1]{\dtraj}{(}{)}{}{#1}	
\DeclarePairedDelimiterXPP{\dtrajofX}[2]{\dtraj_{#1}}{(}{)}{}{#2}	
\newmacro{\vdim}{d}	
\newmacro{\realspace}{\R^{\vdim}}	
\newmacro{\iCoord}{i}	
\newmacro{\jCoord}{j}	
\newmacro{\kCoord}{k}	
\newmacro{\nCoords}{\vdim}	
\newmacro{\unitvec}{u}	
\newmacro{\bvec}{e}	
\newmacro{\bvecs}{\mathcal{E}}	
\newmacro{\vecspace}{\mathcal{V}}	
\newmacro{\subspace}{\mathcal{W}}	
\newcommand{\dual}[1][\vecspace]{#1^{\ast}}	
\newmacro{\dspace}{\dual[\vecspace]}	
\newmacro{\hilbert}{\mathcal{H}}	
\newmacro{\banach}{\mathcal{X}}	
\newmacro{\mat}{M}	
\newmacro{\hmat}{H}	
\newmacro{\ones}{\mathbf{1}}	
\newmacro{\eye}{I}	
\newmacro{\zer}{\mathbf{0}}	
\newmacro{\eigval}{\lambda}	
\newmacro{\eigvec}{u}	
\newmacro{\ball}{\mathbb{B}}	
\newmacro{\sphere}{\mathbb{S}}	
\newmacro{\radius}{r}
\newmacro{\Radius}{R}
\newmacro{\mfld}{\mathcal{M}}	
\newmacro{\tanvec}{z}	
\newmacro{\form}{\omega}	
\newmacro{\gmat}{g}	
\newmacro{\gdist}{\dist_{\gmat}}	
\newmacro{\vertex}{v}	
\newmacro{\vertexalt}{w}	
\newmacro{\iVertex}{\vertex_{\iCount}}	
\newmacro{\jVertex}{\vertex_{\jCount}}	
\newmacro{\kVertex}{\vertex_{\kCount}}	
\newmacro{\nVertices}{V}	
\newmacro{\vertices}{\mathcal{V}}	
\newmacro{\edge}{e}	
\newmacro{\edgealt}{\alt\edge}	
\newmacro{\iEdge}{\edge_{\iCount}}	
\newmacro{\jEdge}{\edge_{\jCount}}	
\newmacro{\kEdge}{\edge_{\kCount}}	
\newmacro{\nEdges}{E}	
\newmacro{\edges}{\mathcal{\nEdges}}	
\newmacro{\graph}{\mathcal{G}}	
\newmacro{\graphfull}{\graph(\vertices,\edges)}	
\newop{\minimize}{minimize}	
\newop{\opt}{Opt}	
\newop{\gap}{Gap}	
\newmacro{\cvx}{\mathcal{C}}	
\newmacro{\obj}{f}	
\newmacro{\sobj}{F}	
\newmacro{\oper}{A}	
\newmacro{\vecfield}{v}	
\newmacro{\subd}{\partial}	
\newmacro{\subsel}{\nabla}	
\newmacro{\gvec}{g}	
\newmacro{\gbound}{G}	
\newmacro{\vbound}{V}	
\newmacro{\lips}{L}	
\newmacro{\strong}{\mu}	
\newmacro{\smooth}{\beta}	
\newop{\tcone}{TC}	
\newop{\dcone}{\tcone^{\ast}}	
\newop{\ncone}{NC}	
\newop{\pcone}{PC}	
\newop{\hull}{\Delta}	
\newcommand{\sol}[1][\point]{#1^{\ast}}	
\newop{\ex}{\mathbb{E}}	
\newop{\prob}{\mathbb{P}}	
\newop{\Var}{\mathbb{V}}	
\newop{\cov}{cov}	
\newop{\simplex}{\Delta}	
\newop{\normal}{\mathcal{N}}	
\providecommand\given{}	
\DeclarePairedDelimiterXPP{\exof}[1]{\ex}{[}{]}{}{
\renewcommand\given{\nonscript\,\delimsize\vert\nonscript\,\mathopen{}} #1}
\DeclarePairedDelimiterXPP{\exwrt}[2]{\ex_{#1}}{[}{]}{}{
\renewcommand\given{\nonscript\:\delimsize\vert\nonscript\:\mathopen{}} #2}
\DeclarePairedDelimiterXPP{\probof}[1]{\prob}{(}{)}{}{
\renewcommand\given{\nonscript\:\delimsize\vert\nonscript\:\mathopen{}} #1}
\DeclarePairedDelimiterXPP{\probwrt}[2]{\prob_{#1}}{(}{)}{}{
\renewcommand\given{\nonscript\:\delimsize\vert\nonscript\:\mathopen{}} #2}
\DeclarePairedDelimiterXPP{\oneof}[1]{\one}{\{}{\}}{}{#1}	
\DeclarePairedDelimiterXPP{\varof}[1]{\var}{[}{]}{}{
\renewcommand\given{\nonscript\,\delimsize\vert\nonscript\,\mathopen{}} #1}
\DeclarePairedDelimiterXPP{\covof}[1]{\cov}{(}{)}{}{
\renewcommand\given{\nonscript\,\delimsize\vert\nonscript\,\mathopen{}} #1}
\newmacro{\event}{E}       
\newmacro{\eventalt}{H}       
\newmacro{\sample}{\omega}	
\newmacro{\samples}{\Omega}	
\newmacro{\filter}{\mathcal{F}}	
\newmacro{\probspace}{(\samples,\filter,\prob)}	
\newmacro{\pdist}{P}	
\newmacro{\history}{\mathcal{H}}	
\newmacro{\mean}{\mu}	
\newmacro{\sdev}{\sigma}	
\newmacro{\variance}{\sdev^{2}}	
\newmacro{\covmat}{\Sigma}	
\newmacro{\seq}{a}	
\newmacro{\seqalt}{b}	
\newmacro{\state}{x}	
\newmacro{\statealt}{y}	
\newmacro{\stateaux}{z}	
\newmacro{\beforestart}{0}	
\newmacro{\start}{1}	
\newmacro{\afterstart}{2}	
\newmacro{\running}{\start,\afterstart,\dotsc}	
\newmacro{\run}{t}	
\newmacro{\runalt}{\tau}	
\newmacro{\runaltalt}{s}	
\newmacro{\nRuns}{T}	
\newmacro{\runs}{\mathcal{\nRuns}}	
\newcommand{\init}[1][\state]{\draft{#1}_{\start}}	
\newcommand{\curr}[1][\state]{\draft{#1}_{\run}}	
\renewcommand{\next}[1][\state]{\draft{#1}_{\run+1}}	
\newop{\Nash}{Nash}	
\newop{\CE}{CE}	
\newop{\CCE}{CCE}	
\newop{\NI}{NI}	
\newop{\brep}{br}	
\newop{\val}{val}	
\newmacro{\play}{i}	
\newmacro{\playalt}{j}	
\newmacro{\iPlay}{i}	
\newmacro{\jPlay}{j}	
\newmacro{\kPlay}{k}	
\newmacro{\nPlayers}{N}	
\newmacro{\players}{\mathcal{\nPlayers}}	
\newmacro{\pure}{\alpha}	
\newmacro{\purealt}{\beta}	
\newmacro{\nPures}{K}	
\newmacro{\pures}{\mathcal{A}}	
\newmacro{\strat}{x}	
\newmacro{\stratalt}{\alt\strat}	
\newmacro{\strataux}{q}	
\newmacro{\strats}{\mathcal{X}}	
\newmacro{\intstrats}{\strats^{\circle}}	
\newmacro{\corr}{z}	
\newmacro{\corralt}{\alt\corr}	
\newmacro{\corrs}{\mathcal{Z}}	
\newcommand{\eq}{\sol[\strat]}	
\newcommand{\deq}{\sol[\dpoint]}
\newmacro{\pay}{u}	
\newmacro{\loss}{\ell}	
\newmacro{\cost}{c}	
\newmacro{\pot}{f}	
\newmacro{\payvec}{y}	
\newmacro{\payv}{v}	
\newmacro{\payfield}{\payv}	
\newmacro{\paybound}{M}	
\newmacro{\payspace}{\mathcal{Y}}	
\newmacro{\payspacei}{\payspace_{\play}}	
\newmacro{\game}{\mathcal{G}}	
\newmacro{\gamefull}{\game(\players,\points,\pay)}	
\newmacro{\fingame}{\Gamma}	
\newmacro{\fingamefull}{\Gamma(\players,\pures,\pay)}	
\newmacro{\mixgame}{\Delta(\fingame)}	
\newmacro{\minmax}{L}	
\newmacro{\minvar}{\point_{1}}	
\newmacro{\minvaralt}{\alt\minvar}	
\newmacro{\minvars}{\points_{1}}	
\newmacro{\maxvar}{\point_{2}}	
\newmacro{\maxvaralt}{\alt\maxvar}	
\newmacro{\maxvars}{\points_{2}}	
\newmacro{\source}{O}	
\newmacro{\sink}{D}	
\newmacro{\pair}{i}	
\newmacro{\pairalt}{j}	
\newmacro{\iPair}{i}	
\newmacro{\jPair}{j}	
\newmacro{\kPair}{k}	
\newmacro{\nPairs}{N}	
\newmacro{\pairs}{\mathcal{\nPairs}}	
\newmacro{\route}{p}	
\newmacro{\routealt}{q}	
\newmacro{\nRoutes}{P}	
\newmacro{\routes}{\mathcal{\nRoutes}}	
\newmacro{\flow}{f}	
\newmacro{\flowalt}{\alt\flow}	
\newmacro{\flows}{\mathcal{F}}	
\newmacro{\load}{w}	
\newmacro{\loadalt}{\alt\load}	
\newmacro{\loads}{\mathcal{W}}	
\newmacro{\hreg}{h}	
\newmacro{\proxdom}{\points_{\hreg}}	
\newmacro{\proxdomi}{\points_{\hreg_{\play}}}	
\newmacro{\breg}{D}	
\newmacro{\mprox}{P}	
\newmacro{\hconj}{h^{\ast}}	
\newmacro{\mirror}{Q}	
\newmacro{\fench}{F}	
\newmacro{\hstr}{\mu}	
\newmacro{\hrange}{R}	
\newmacro{\learn}{\eta}	
\newmacro{\weight}{\lambda}	
\DeclarePairedDelimiterXPP{\bregof}[2]{\breg}{(}{)}{}{#1,#2}	
\DeclarePairedDelimiterXPP{\bregofX}[3]{\breg_{#1}}{(}{)}{}{#2,#3}	
\DeclarePairedDelimiterXPP{\fenchof}[2]{\fench}{(}{)}{}{#1,#2}	
\DeclarePairedDelimiterXPP{\fenchofX}[3]{\fench_{#1}}{(}{)}{}{#2,#3}	
\DeclarePairedDelimiterXPP{\proxof}[2]{\mprox_{#1}}{(}{)}{}{#2}	
\newmacro{\zone}{\mathbb{D}}	
\newop{\Eucl}{\Pi}	
\newop{\logit}{\Lambda}	
\newop{\tsal}{Tsal}	
\newop{\dkl}{KL}	
\newmacro{\dvec}{w}	
\newmacro{\dpoint}{y}	
\newmacro{\dpointalt}{\alt\dpoint}	
\newmacro{\dpoints}{\mathcal{Y}}	
\newmacro{\score}{y}	
\newmacro{\scorealt}{\alt\score}	
\newmacro{\scoreaux}{z}	
\newmacro{\scores}{\payspace}	
\newmacro{\signal}{\hat\vecfield}	
\newmacro{\step}{\gamma}	
\newmacro{\runtime}{\tau}	
\newmacro{\apt}{X}	
\DeclarePairedDelimiterXPP{\aptof}[1]{\apt}{(}{)}{}{#1}	
\DeclarePairedDelimiterXPP{\aptofX}[2]{\apt_{#1}}{(}{)}{}{#2}	
\newop{\orcl}{\mathsf{G}}	
\newop{\err}{Z}	
\newmacro{\seed}{\omega}	
\newmacro{\seeds}{\Omega}	
\newmacro{\error}{Z}		
\newmacro{\noise}{U}	
\newmacro{\bias}{b}		
\newmacro{\bbound}{B}		
\newmacro{\totbound}{\paybound}	
\newmacro{\mombound}{V}	
\newmacro{\snoise}{\xi}	
\newmacro{\sbias}{\chi}	
\newmacro{\mix}{\delta}	
\newmacro{\perturb}{z}	
\newmacro{\pivot}{\point}	
\newop{\reg}{Reg}	
\newop{\preg}{\overline{Reg}}	
\newmacro{\bench}{p}	
\newmacro{\test}{p}	
\newcommand{\KL}{\KLmargincomment}
\newmacro{\dstatealt}{z}
\newmacro{\normbound}{B}
\newcommand{\noneq}{\hat\strat}
\newmacro{\dirs}{\mathcal{E}}
\newmacro{\dir}{w}
\newmacro{\mixexp}{p}
\newmacro{\conf}{\delta}
\newmacro{\dev}{z}
\newmacro{\devs}{\mathcal{Z}}
\newmacro{\rate}{\phi}
\newmacro{\pertgame}{\tilde\game}
\newmacro{\pertpay}{\tilde\pay}
\newmacro{\pertpayfield}{\tilde\payfield}
\newmacro{\robust}{m}
\newmacro{\sublevel}{\mathcal{W}}
\newmacro{\lagr}{\mathcal{L}}
\newmacro{\coefalt}{\mu}
\newmacro{\idx}{\beta}
\newmacro{\idxalt}{\alpha}
\newmacro{\idxs}{\mathcal{I}}
\newmacro{\act}{\text{act}}
\newmacro{\fconst}{P}
\newmacro{\devalt}{z'}
\newmacro{\row}{\text{row}}
\newmacro{\spanof}{\text{span}}
\newmacro{\Prp}{\hat\dpoints}
\newmacro{\prppoint}{\hat\dpoint}
\newmacro{\Prl}{\bar\dpoints}
\newmacro{\prlpoint}{\bar\dpoint}
\newop{\IWE}{IWE}
\newmacro{\dstate}{\dpoint}
\newmacro{\solvec}{\vecfield(\sol)}		
\newmacro{\actcoords}{\mathcal{A}}
\newmacro{\flatcoords}{\actcoords_{\flat}}
\newmacro{\sharpcoords}{\actcoords_{\sharp}}
\title
[Robust Equilibria in Continuous Games]	
{Robust Equilibria in Continuous Games:\\
From Strategic to Dynamic Robustness}	
\author
[K.~Lotidis]
{Kyriakos Lotidis$^{\ast,c}$}
\email{klotidis@stanford.edu}
\author
[P.~Mertikopoulos]
{Panayotis Mertikopoulos$^{\diamond}$}
\email{panayotis.mertikopoulos@imag.fr}
\author
[N.~Bambos]
{\\Nicholas Bambos$^{\ast}$}
\email{bambos@stanford.edu}
\author
[J.~Blanchet]
{Jose Blanchet$^{\ast}$}
\email{jose.blanchet@stanford.edu}
\address{$^{\ast}$\,%
Stanford University.}
\address{$^{\ast}$\,%
Univ. Grenoble Alpes, CNRS, Inria, Grenoble INP, LIG, 38000 Grenoble, France.}
\address{$^{c}$\,%
Corresponding author.}
\subjclass[2020]{%
Primary 91A10, 91A26;
secondary 68Q32, 68T05.}
\keywords{%
Robust equilibrium;
regularized learning;
stochastic stability.}
\newacro{LHS}{left-hand side}
\newacro{RHS}{right-hand side}
\newacro{iid}[i.i.d.]{independent and identically distributed}
\newacro{lsc}[l.s.c.]{lower semi-continuous}
\newacro{usc}[u.s.c.]{upper semi-continuous}
\newacro{wp1}[w.p.$1$]{with probability $1$}
\newacro{NE}{Nash equilibrium}
\newacro{VI}{variational inequality}
\newacro{LNE}{local Nash equilibrium}
\newacro{VS}{variational stability}
\newacro{DGF}{distance-generating function}
\newacro{KKT}{Karush\textendash Kuhn\textendash Tucker}
\newacro{DA}{dual averaging}
\newacro{MD}{mirror descent}
\newacro{EW}{exponential\,/\,multiplicative weights}
\newacro{FTRL}{``follow the regularized leader''}
\newacro{GAN}{generative adversarial network}
\newacro{RD}{replicator dynamics}
\newacro{EWD}{exponential weights dynamics}
\newacro{IWE}{importance weighted estimator}
\newacro{SPSA}{single-point stochastic approximation}
\newacro{SFO}{stochastic first-order oracle}
\begin{document}

\maketitle
\allowdisplaybreaks	
\acresetall	

\begin{abstract}
%
%
In this paper, we examine the robustness of Nash equilibria in continuous games, under both strategic and dynamic uncertainty.
Starting with the former, we introduce the notion of a \define{robust equilibrium} as those equilibria that remain invariant to small\textemdash but otherwise arbitrary\textemdash perturbations to the game's payoff structure, and we provide a crisp geometric characterization thereof.
Subsequently, we turn to the question of dynamic robustness, and we examine which equilibria may arise as stable limit points of the dynamics of \acdef{FTRL} in the presence of randomness and uncertainty.
Despite their very distinct origins, we establish a structural correspondence between these two notions of robustness: strategic robustness implies dynamic robustness, and, conversely, the requirement of strategic robustness cannot be relaxed if dynamic robustness is to be maintained.
Finally, we examine the rate of convergence to robust equilibria as a function of the underlying regularizer, and we show that entropically regularized learning converges at a geometric rate in games with affinely constrained action spaces.
\end{abstract}
\acresetall

\allowdisplaybreaks	
\acresetall	
\maketitle

\section{Introduction}
\label{sec:introduction}


A fundamental requirement in game theory\textemdash which predates even the cornerstone notion of a \acl{NE}\textemdash concerns the robustness that should be inherent in any axiomatization of rational behavior.
To quote a famous passage by \citet[p.~32]{vNM44}:
``In whatever way we formulate the guiding principles and the objective justification of rational behavior, provisos will have to be made for every possible conduct of ``the others.''
If the superiority of rational behavior over any other kind is to be established, then its description must include rules of conduct for all conceivable situations\textemdash including those where ``the others'' behaved irrationally in the sense of the standards which the theory will set for them.''

As a byproduct of this tenet, there has been a flurry of activity since the 1970s in proposing refinements of the \acl{NE} concept, all in an effort to dismiss equilibria that are highly fragile or otherwise implausible (\eg because they involve threats that are not credible).%
\footnote{For a masterful introduction to the topic, see the textbook of \citet{vD87}.}
This pursuit of robustness has recently gained increased momentum owing to the applications of game theory to machine learning and data science, two fields where the notion of robustness has been likewise elusive.
Here, even though many game-theoretic solutions perform extremely well on specific tasks\textemdash such as a well-trained \ac{GAN} at equilibrium\textemdash the resulting models tend to have a narrow performance envelope, being brittle, and unable to adapt to situations that deviate from their initial configuration.

In game-theoretic terms, this highlights the fact that, even though a \acl{NE} is resilient to unilateral deviations, it need not be \emph{robust} to small perturbations in the \emph{payoff data} of the game (which, in a machine learning context, could represent distributional shifts, incomplete observations, and/or other sources of uncertainty).
In view of this, it is natural to ask
\begin{quote}
\centering
\itshape
Which equilibria remain robust in the presence of strategic uncertainty?
\end{quote}
This question has been the lodestar of the equilibrium refinement literature, and it has led to a wide array of proposals aiming to get rid of ``unreasonable'' equilibria that may disappear even under the most minute perturbation to the players' payoffs\textemdash from Selten's notion of \emph{trembling hand perfection} \cite{Sel75}, to Myerson's concept of \emph{properness} \cite{Mye78}, and the various criteria of strategic stability introduced by \citet{KM86} (hyperstability, full stability, sequential stability, etc.).

Dually to the above theory of ``strategic refinement'', an important alternative approach has been based on \emph{dynamic} considerations:
that is, the players of a game start off-equilibrium, and in one sense or another learn (or fail to learn) to play an equilibrium over time.
Here, the focus is on the players' learning protocol, the information available during play, and the presence (or absence) of players that may deviate from this protocol.
By the so-called ``folk theorem of evolutionary game theory'' \cite{HS03}, it is well known that only \emph{strict} equilibria are stable and attracting under the replicator dynamics, a result which was extended more recently to a broad class of ``regularized learning'' schemes, in both continuous \cite{FVGL+20} and discrete time \cite{GVM21,GVM21b,MHC24}.

These two viewpoints are not always compatible:
for instance, in $2\times2$ games with two pure equilibria and one mixed (such as the Chicken / Hawk-Dove game), the mixed equilibrium is ruled out by almost all game-theoretic learning algorithms and dynamics, even though it survives a broad range of strategic refinement attacks.
A point of hope here is the equivalence between (setwise) strategic and dynamic stability proved by \citet{RW95}, who showed that a span of pure strategies in the mixed extension of a finite game is strategically stable in the sense of \citet{KM86} if and only if it is asymptotically stable under the replicator dynamics\textemdash see also \cite{BM23,CLM25} for an extension to a wider class of discrete-time models for learning, with different information assumptions.

Notably, these considerations all concern finite games in normal (or extensive) form.
By contrast, most applications of game theory to machine learning and data science involve \emph{continuous games}, that is, games with a finite number of players and a continuum of actions per player\textemdash for example, \acp{GAN}, multi-agent reinforcement learning, Kelly auctions, etc.
In view of this, our paper seeks to answer the following questions in the context of continuous games:
\begin{center}
\itshape
Which equilibria arise as robust predictions of the players' learning dynamics?
\end{center}

We refer to these two types of robustness as \define{strategic} and \define{dynamic robustness}, respectively.
Our paper aims to quantify the interplay between the two, and the links that connect them.

\para{Our contributions in the context of related work}

Aiming for the strongest possible definition of robustness, we propose the following strategic refinement criterion:
\begin{quote}
\centering
\itshape
An equilibrium of a continuous game is strategically robust\\
if it remains an equilibrium in any slightly perturbed, nearby game.
\end{quote}
This requirement is similar in spirit to\textemdash but considerably stronger than\textemdash the classical notion of \define{essentiality} of \citet{WJ62}, which posits that any nearby game has a nearby, possibly different equilibrium.
Importantly, our results apply to \define{local} Nash equilibria, which are especially relevant in machine learning applications where payoff landscapes are typically nonconcave.
This distinction is crucial, as global Nash equilibria do not always exist in general continuous games, making local equilibrium guarantees both meaningful and necessary in practice.

An important point here is that, in contrast to finite games\textemdash where the notion of ``nearby'' is fairly unambiguous\textemdash perturbations to a continuous game involve functional variations and, as such, the metric that quantifies a ``small'' perturbation plays a crucial role.
Importantly, albeit natural, our proposed robustness requirement becomes vacuous if distances are measured with respect to the players' payoff functions:
more precisely, it is always possible to find a payoff perturbation with arbitrarily small $L^{\infty}$-norm that ends up upsetting \emph{any} equilibrium.

The underlying issue here is that a small payoff perturbation may exhibit very high local variability, which can disrupt the first-order stationarity conditions that characterize equilibria in continuous games, thereby eliminating them altogether.
To circumvent this issue, we argue that deviations of continuous games should be measured by comparing their respective \emph{gradient fields}, which encode all the strategic information in the game.
This shift in perspective leads to a crisp geometric characterization of strategically robust equilibria:
they are extreme points of the game's action space, and they are sharp in the sense that the game's individual payoff gradients form a strictly acute angle with any tangent direction (\cf \cref{fig:equilibria} later in the paper).

From a dynamic standpoint, we focus throughout on the family of algorithms known as \acdef{FTRL} \citep{SSS06,SS07,SS11,LS20}.
This is arguably one of the most\textemdash if not \emph{the} most\textemdash popular class of policies for online learning due to its strong regret minimization and convergence guarantees, and it contains as special cases gradient descent/ascent methods \citep{Zin03,AHU58}, \acl{DA} \citep{Nes09,Xia10}, the \acl{EW} algorithm \citep{Vov90,LW94,ACBFS95,ACBFS95,AHK12}, implicitly normalized forecasters \cite{ABL11,ALT15,ZS21}, exponentiated gradient methods \cite{BecTeb03,KW97,TRW05}, and many stochastic approximation schemes, adaptive \cite{HAM21,HACM22} and non-adaptive alike \cite{MZ19,MLZF+19,HMC21,MHC24}.

In this general context, we examine which equilibria admit robust convergence guarantees as stable limit points of the dynamics of \ac{FTRL} in the presence of randomness and uncertainty.
Our first main result is that \emph{strategic robustness implies dynamic robustness}, \ie any strategically robust equilibrium is stable and attracting with high probability under the dynamics of \ac{FTRL}, for any choice of regularizer.
Conversely, we also show that the strategic robustness requirement cannot be lifted, and we provide an example of a game with an extreme, non-robust equilibrium which attracts \emph{all} \ac{FTRL} orbits under a certain choice of regularizer, and \emph{none} under another. 

To the best of our knowledge, this is the first result of its kind for general continuous games.
In the context of \emph{finite} games, \citet{FVGL+20} showed that a point is asymptotically stable under the continuous-time \ac{FTRL} dynamics if and only if it is a strict \acl{NE}, while \cite{BM23} extended this equivalence to discrete-time models of regularized learning under uncertainty.
Strict equilibria are prime examples of strategically robust equilibria, so this part of the analysis of \cite{BM23} is subsumed in ours.
In the context of concave games\textemdash that is, continuous games with individually concave payoff functions\textemdash \citet{MZ19} showed that sharp global equilibria enjoy comparable convergence guarantees under \ac{FTRL} with a vanishing step-size.
While such step-size schedules are effective at suppressing noise in the long run, they do so at the cost of significantly slowing down the algorithm’s convergence.
By contrast, we focus on fast, \emph{constant} step-size schedules, which are widely used in practice due to their simplicity and often superior empirical performance.
In this regime, we show that entropically regularized learning with a constant step-size converges to robust equilibria at a geometric rate, compared to distinctly subgeometric rates in the case of vanishing step-size policies\textemdash subsuming in this way a range of previous results for finite \cite{GVM21b} and stochastic games \cite{GLMV22}.
\acresetall

\section{Preliminaries}
\label{sec:prelims}

We start by briefly reviewing some basics of game theory and regularized learning, introducing the necessary context for our results.

\subsection{The game-theoretic framework}

Throughout our paper, we focus on a class of \define{continuous games} consisting of a finite set of players $\play \in \players = \{1, \dotsc, \nPlayers\}$, and defined by the following primitives:
\begin{enumerate}
\item
Each player $\play\in\players$ has access to a compact convex subset $\points_{\play}$ of some finite dimensional vector space $\vecspace_{\play}$, describing the set of \define{actions} available to said player.
By $\points \defeq \prod_{\play} \points_{\play}$ we denote the space of all ensembles $\point = (\point_{1},\dotsc,\point_{\nPlayers})$ of actions $\point_{\play}\in\points_{\play}$ that are independently chosen by each player $\play\in\players$.
We will also write $\point = (\point_{\play};\point_{-\play})$ to emphasize the action of player $\play \in \players$ against the joint action profile $\point_{-\play} \equiv (\point_{\playalt})_{\playalt\neq\play}$ of all other players.
\item
The players' rewards are determined by their individual \emph{payoff functions} $\pay_{\play}\from\points \to \R$, assumed to be continuously differentiable for all $\play\in\players$.
Denoting by $\dpoints_{\play} \equiv \dspace_{\play}$ the dual space of $\vecspace_{\play}$, we define the individual gradient vector $\payfield_{\play}\from\points\to\dpoints_{\play}$ of player $\play \in \players$ by 
\begin{equation}
	\payfield_{\play}(\point) = \nabla_{\point_{\play}} \pay_{\play}(\point_{\play};\point_{-\play})
\end{equation}
and the ensemble $\payfield(\point) = (\payfield_{1}(\point),\dotsc,\payfield_{\nPlayers}(\point)) \in \dpoints \equiv \prod_{\play\in\players} \dpoints_{\play}$ thereof.
\end{enumerate}

A \define{continuous game} is then defined as a tuple $\game \equiv \gamefull$ with players, actions and payoff functions as above.

\para{\Acl{NE}}

The best known solution concept in game theory is that of a \acdef{NE}, which characterizes the actions $\eq \in \points$ from which no player has incentive to unilaterally deviate.
Formally, $\eq\in\points$ is a \acl{NE} if
\begin{equation}
\label{eq:NE}
\tag{NE}
\pay_{\play}(\eq)
	\geq \pay_{\play}(\point_{\play};\eq_{-\play})
	\quad
	\text{for all $\point_{\play}\in\points_{\play}$, $\play\in\players$}.
\end{equation}
A game $\game \equiv \gamefull$ always admits a \acl{NE} if $\points$ is compact and each player's payoff function $\pay_{\play}$ is \define{individually concave} in the sense that $\pay_{\play}(\point_{\play};\point_{-\play})$ is concave in $\point_{\play}$ for all $\point_{-\play} \in \points_{-\play}$ \cite{Deb52,Ros65}.
In this case, basic arguments from convex analysis \cite{Roc70,RW98} show that $\eq$ is an equilibrium of $\game$ if and only if it satisfies the (Stampacchia) \acl{VI}
\begin{equation}
\label{eq:VI}
\tag{VI}
\braket{\payfield(\eq)}{\point-\eq}
	\leq 0
	\quad
	\text{for all $\point \in \points$}.
\end{equation}

If the players' functions are not individually concave, a game may not admit a \acl{NE}.
In that case, it is more meaningful to consider \emph{local} \aclp{NE}, \ie profiles $\eq\in\points$ such that
\acused{LNE}
\begin{equation}
\label{eq:LNE}
\tag{LNE}
\pay_{\play}(\eq)
	\geq \pay_{\play}(\point_{\play};\eq_{-\play})
	\quad
	\text{for all $\point$ in a neighborhood $\nhd$ of $\eq$ in $\points$}.
\end{equation}
In stark contrast to games with individually concave payoff functions, \eqref{eq:VI} no longer characterizes  \aclp{LNE}:
specifically, by first-order stationarity, we have $\eqref{eq:LNE} \implies \eqref{eq:VI}$ but the converse need not hold;
in fact, a solution $\eq$ of \eqref{eq:VI} may be a global payoff \emph{maximizer} for all $\play\in\players$.

\begin{note}
In the sequel, we will work with general continuous games that may not admit a global equilibrium---but admit \emph{local} \aclp{NE}.
To streamline our presentation, we will use the term ``equilibrium'' without any further qualification to refer to \emph{local} equilibria, and we will say explicitly ``global equilibria'' for profiles satisfying \eqref{eq:NE}.
\end{note}

\subsection{Regularized learning in games}

The most widely used framework for learning in games, is the so called \acdef{FTRL} template, primarily because it leads to no regret in a wide variety of settings \cite{SSS06,SS11}.
The corresponding update rule hinges on the notion of a \define{regularized choice map}, and proceeds as
\begin{equation}
\label{eq:FTRL}
\tag{FTRL}
	\next[\dstate] = \curr[\dstate] + \step\signal_\run, \quad \quad \curr = \mirror(\curr[\dstate]) \quad\quad \text{for $\run = 1,2,\dots$}
\end{equation}
where 
\begin{enumerate*}
[\upshape(\itshape i\hspace*{1pt}\upshape)]
\item
$\curr\in\points$ denotes the players' action profile at step $\run$;
\item
$\dpoint_\run = \parens*{\dpoint_{\play,\run}}_{\play\in\players} \in\dpoints$ is an auxiliary process that aggregates historical feedback into a compact state representation, \ie a proxy for the players' empirical performance up to time $\run$;
\item
$\signal_\run = \parens*{\signal_{\play,\run}}_{\play\in\players} \in\dpoints$ denotes the current gradient-like payoff signal;
\item
$\step > 0$ is the learning rate, or step-size parameter of the process;
and
\item
$\mirror\from\dpoints\to\points$ is a mapping between the auxiliary process on the dual space $\dpoints$, and the players' strategy space $\points$.
\end{enumerate*}
In what follows, we analyze the key components of this framework.
\para{The algorithm's step-size}
Throughout this work, we adopt a \emph{constant} step-size routine.
This stands in contrast to the stochastic approximation literature \cite{Ben99,KC78,Bor08}, where \eqref{eq:FTRL} is typically implemented with a \emph{vanishing} step-size satisfying the Robbins-Monro summability conditions $\sum_{\run} \curr[\step] = \infty$, $\sum_{\run} \curr[\step]^{2} < \infty$ \cite{RM51}, which is known to promote convergence by gradually suppressing the effect of noise \cite{MZ19}.
 
On the other hand, in practical applications, it is common to employ a \emph{constant} (or non-diminishing) step-size for several reasons.
First, constant step-sizes are easier to tune and maintain, making them more suitable for large-scale or production environments. Moreover, methods with vanishing step-sizes often experience long warm-up phases and converge slowly to a neighborhood of the equilibrium. In comparison, constant step-size methods in machine learning settings typically reach the vicinity of a solution much faster—often within 0.1\% accuracy \cite{DDB17}.
Indeed, many state-of-the-art architectures, including transformers and large language models, use step-size schedules that remain effectively constant over billions or even trillions of samples \cite{Deepseek}.

\para{The mirror map}
A central ingredient of regularized learning is the mirror map $\mirror \equiv \parens*{\mirror_{\play}}_{\play\in\players}$, with each $\mirror_{\play}\from\dpoints_{\play}\to\points_{\play}$ induced by a strongly convex \emph{regularizer} $\hreg_{\play}\from\points_{\play}\to \R$ that promotes stability during the learning process.
To streamline our presentation and letting $\hreg(\point) = \sum_{\play\in\players} \hreg_{\play}(\point_{\play})$,
the players' \emph{mirror map} is defined as
\begin{equation}
\label{eq:mirror}
\textstyle
	\mirror(\dpoint) \defeq \argmax_{\point\in\points}\braces*{\braket{\dpoint}{\point} - \hreg(\point)}
\end{equation}
In the rest of our paper, we will write $\proxdom = \im\mirror$ for the image of $\dpoints$ under $\mirror$---and, likewise, $\proxdomi = \im\mirror_{\play}$ for each player $\play\in\players$.
In particular, if $\mirror$ is interior-valued---that is, $\proxdom=\relint\points$---we will say that $\hreg$ is \define{steep} because, in this case, the (sub)gradients of $\hreg$ explode to infinity as $\point\to\bd\points$ (\ie $\hreg$ becomes ``infinitely steep'');
instead, if $\im\mirror = \points$, we will say that $\hreg$ is \define{non-steep}.
For a detailed discussion on this distinction and related concepts, see \cref{app:mirror}.

Different choices of the regularizer $\hreg$ induce different projection-like operations, adapted to the geometry of the underlying space.
We describe two mainstay examples below.

\smallskip
\begin{example}[Euclidean projection]
\label{ex:euclidean}
The quadratic regularizer $\hreg(x) =  \ \twonorm{x}^2/2$ gives rise to the Euclidean projection $\mirror(\dpoint) = \Eucl_\points(\dpoint) = \argmin_{\point\in\points}\twonorm{\dpoint - x}$.
Here, $\hreg$ is non-steep.
\hfill
\endenv
\end{example}

\smallskip
\begin{example}[Exponential weights]
\label{ex:logit}
For $\pures_{\play}$ a \emph{finite} set of actions per player $\play\in\players$, and $\points_{\play} \equiv \simplex(\pures_{\play})$, the entropic regularizer $\hreg_{\play}(x_{\play}) =  \sum_{\pure_{\play}\in\pures_{\play}}x_{\play\pure_{\play}}\log x_{\play\pure_{\play}}$ gives rise to the logit map, defined via $\mirror_{\play}(\dpoint_{\play}) = \exp(\dpoint_{\play})/\onenorm{\exp(\dpoint_{\play})}$, where $\exp(\dpoint_{\play})$ denotes the element-wise exponential of $\dpoint_{\play}$.
\hfill
\endenv
\end{example}

\para{The feedback process}

Throughout this work, we consider two distinct feedback models:
\begin{enumerate*}
[\upshape(\itshape i\hspace*{1pt}\upshape)]
\item
\emph{stochastic gradients};
and
\item
\emph{payoff-based} feedback.
\end{enumerate*}
We describe both frameworks below.

\para{Stochastic gradient feedback}
At every time step $\run$, each player $\play \in \players$ has access to a \acdef{SFO}\textemdash that is, a noisy version of their individual gradient vector of the form:
\begin{equation}
\label{eq:SFO}
\tag{SFO}
	\signal_{\run} = \payfield(\curr) + \noise_\run \quad \text{with}\quad \exof{\noise_\run \given \filter_\run} = 0 
\end{equation}
where $\noise_\run$ is zero-mean and conditionally sub-Gaussian given the information $\filter_\run$ generated up to time $\run \in \N$.
In other words, players observe unbiased estimates of their individual gradient vectors.

\para{Payoff-based feedback}
Unlike \eqref{eq:SFO} where players have access to a black-box oracle that provides noisy gradient information, it is often more realistic to consider a payoff-based paradigm where players observe \emph{only} their realized payoffs\textemdash that is, a single scalar value\textemdash and have to reconstruct an estimate of their individual gradient vectors.

The most widely used method in this setting is the \acdef{SPSA} framework of \cite{BLM18,Spa92}, which is based on finite differences along randomly sampled directions.
Specifically, denoting the set of unit directions $\dirs_{\play} \defeq \braces{\pm e_1, \dots, \pm e_{d_{\play}}}$ that span the affine hull of $\points_{\play}$ of dimension $d_{\play}$, each player $\play \in \players$ draws a direction $\dir_{\play,\run} \in \dirs_{\play}$ uniformly at random in every round $\run \in \N$.
Since the perturbed action $\state_{\play,\run} + \mix_\run \dir_{\play,\run}$ may lie outside $\points_{\play}$ for a perturbation radius $\mix_\run > 0$, we introduce a pivot element $\test_{\play} \in \relint(\points_{\play})$ and a radius $\radius_{\play} > \mix_\run$ such that $\test_{\play} + \radius_{\play} \dir_{\play} \in \points_{\play}$ for all $\dir_{\play} \in \dirs_{\play}$.
Based on these, we define the feasibility-adjusted action
$\state_{\play,\run}^{\mix} \defeq \state_{\play,\run} + (\mix_\run/\radius_{\play})(\test_{\play}-\state_{\play,\run}) \in\points_{\play}$.
Finally, each player queries the perturbed action $\hat\state_{\play,\run}\equiv\state_{\play,\run}^{\mix} + \mix_\run\dir_{\play,\run}$ which is an element of $\points_{\play}$, and observes the realized payoff value $\pay_{\play}\parens{\hat\state_\run}$.%
\footnote{
Since $\radius_{\play} > \mix_\run$, we write $\state_{\play,\run}^{\mix} = \state_{\play,\run}(1-\mix_\run/\radius_{\play}) + (\mix_\run/\radius_{\play})\test_{\play}$ which is a convex combination of points in $\points_{\play}$.
Regarding $\hat\state_{\play,\run}$, note it can be written as $\hat\state_{\play,\run} = \state_{\play,\run}(1-\mix_\run/\radius_{\play}) + (\mix_\run/\radius_{\play})(\test_{\play}+\radius_{\play}\dir_{\play,\run})$, which is also a convex combination of points in $\points_{\play}$.
Thus, both belong to $\points_{\play}$.
}
The gradient vector is, then, estimated via the single-point stochastic approximation scheme:
\begin{equation}
\label{eq:SPSA}
\tag{SPSA}
	\signal_{\play,\run} \defeq (d_{\play}/\mix_\run)\,\pay_{\play}\parens{\hat\state_\run}\,\dir_{\play,\run}
\end{equation}
Importantly, the feasibility adjustment ensures that the perturbed action $\hat\state_{\run}$ remains within the players' action set $\points$, while preserving the direction of the original perturbation $\dir_{\run}$.
As we show in \cref{app:mirror}, \eqref{eq:SPSA} enjoys the bounds
\begin{equation}
    \dnorm{\exof{\signal_\run \given\filter_\run} - \payfield(\curr)} = \bigoh(\mix_\run) 
    \quad \text{and} \quad
    \dnorm{\signal_\run} = \bigoh(1/\mix_\run)
    \eqstop
\end{equation}
These statistical properties of \eqref{eq:SPSA} will play a crucial role in establishing its convergence guarantees;
we will revisit them in \cref{sec:convergence}.

\section{Strategic robustness: Geometric and variational characterization}
\label{sec:strategic}

We begin in this section by addressing the strategic aspects of the equilibrium robustness question, namely:
\begin{quote}
\centering
\itshape
Which equilibria remain invariant under\\
small perturbations of the underlying game?
\end{quote}
We take this desideratum as the starting point for our definition of \define{strategic robustness}, that is, action profiles that remain (local) equilibria under small disturbances in the underlying game.
This leads to a delicate interplay between the variational and geometric aspects of the underlying game, which we detail below.

\subsection{A first approach and insights}

The first step in our analysis is to quantify the meaning of ``small''.
In this regard, a natural way to measure the distance between two concave games, $\game\equiv\gamefull$ and $\pertgame\equiv\game(\players,\strats,\pertpay)$, would be via the uniform distance
\begin{equation}  
\label{eq:temp-dist}
\textstyle
    \rho\bigl(\game,\pertgame\bigr) \defeq \max_{\play\in\players}\sup_{\strat\in\strats}\abs{\pay_\play(\point)-\pertpay_\play(\point)}
    \eqstop
\end{equation}
Intuitively, if this quantity is small enough, the two games are nearly indistinguishable from a strategic perspective, since for every strategy profile $\strat \in\strats$, the payoffs in $\game$ and $\pertgame$
are almost the same. 
Thus, one might expect that at least \emph{some} equilibria of 
$\game$ should persist under sufficiently small perturbations, especially given that a \acl{NE} is defined in terms of the game's payoff functions themselves.

Perhaps surprisingly, as we show below, this definition of distance \emph{cannot} provide a meaningful concept of equilibrium robustness. 
\begin{proposition}
For \emph{any} game $\game$ and \emph{any} equilibrium $\eq\in\strats$ of $\game$, there exists a perturbed game $\pertgame$, \emph{arbitrarily close} to $\game$ in the uniform metric \eqref{eq:temp-dist} such that $\eq \in \strats$ is not an equilibrium of $\pertgame$. 
\end{proposition}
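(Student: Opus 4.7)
The plan is to exploit the tension between uniform ($L^\infty$) closeness of payoff functions and the first-order nature of equilibrium conditions: a perturbation can be made tiny in amplitude while still producing a strictly profitable deviation, because amplitude does not control gradient information. Concretely, fix the equilibrium $\eq\in\strats$ of $\game$ and fix a tolerance $\delta>0$. Assuming the trivial edge case is excluded (if every $\strats_{\play}$ is a singleton, there are no deviations and the statement is vacuous), pick a player $\play$ whose action set $\strats_{\play}$ is non-trivial.

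First I would identify a deviation candidate: since $\strats_{\play}$ is convex and not a singleton, there is a sequence $\point_{\play}^{(n)}\in\strats_{\play}\setminus\{\eq_{\play}\}$ with $\point_{\play}^{(n)}\to\eq_{\play}$, and continuity of $\pay_{\play}$ gives
\begin{equation*}
\pay_{\play}(\eq_{\play};\eq_{-\play}) - \pay_{\play}(\point_{\play}^{(n)};\eq_{-\play}) \;\longrightarrow\; 0.
\end{equation*}
Choose $n$ large enough that this gap is strictly less than $\delta/2$, and set $\point_{\play}' \defeq \point_{\play}^{(n)}$.

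Next I would construct a smooth bump function $\phi\from\strats_{\play}\to\R_{+}$ (say, a standard $C^{\infty}$ mollifier scaled to height $\delta$) supported in a ball around $\point_{\play}'$ of radius smaller than $\tfrac{1}{2}\|\eq_{\play}-\point_{\play}'\|$, and satisfying $\phi(\point_{\play}')=\delta$, $\phi(\eq_{\play})=0$, and $0\leq\phi\leq\delta$ everywhere. Then I define the perturbed game $\pertgame$ by leaving all other payoffs unchanged and setting $\pertpay_{\play}(\point)=\pay_{\play}(\point)+\phi(\point_{\play})$. By construction $\rho(\game,\pertgame)=\|\phi\|_{\infty}=\delta$, yet
\begin{equation*}
\pertpay_{\play}(\point_{\play}';\eq_{-\play}) - \pertpay_{\play}(\eq_{\play};\eq_{-\play})
\;=\; \bigl[\pay_{\play}(\point_{\play}';\eq_{-\play})-\pay_{\play}(\eq_{\play};\eq_{-\play})\bigr] + \delta
\;>\; -\tfrac{\delta}{2}+\delta \;=\; \tfrac{\delta}{2} \;>\; 0,
\end{equation*}
so $\point_{\play}'$ is a strictly profitable deviation for player $\play$ in $\pertgame$. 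Since $\point_{\play}'$ can be taken arbitrarily close to $\eq_{\play}$, the same argument upsets $\eq$ even as a \emph{local} equilibrium, and $\delta>0$ was arbitrary.

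There is no real obstacle: the construction is entirely routine once one recognizes the mechanism, and the only thing to verify carefully is that $\phi$ can be chosen $C^{1}$ (indeed $C^{\infty}$) so that $\pertgame$ remains a bona fide continuous game in the sense of \cref{sec:prelims}. The conceptual takeaway\textemdash and the reason I would emphasize this proof in the exposition\textemdash is that the deviation works precisely because a bump of height $\delta$ can be packed into a neighborhood of width $o(1)$, producing a gradient of arbitrarily large magnitude while the $L^{\infty}$ norm of the perturbation remains $\delta$. This is exactly the phenomenon that motivates the subsequent shift to a gradient-field metric for measuring perturbations.
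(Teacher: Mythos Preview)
Your argument is correct and pleasantly direct, but it follows a different route from the paper's. The paper argues at the level of the first-order condition \eqref{eq:VI} and splits into two cases: when $\braket{\payfield_{\play}(\eq)}{\test_{\play}-\eq_{\play}}<0$ for some player and direction, it adds an exponential term $-\eps\exp\bigl(2\eps^{-1}\braket{\payfield_{\play}(\eq)}{\point_{\play}-\eq_{\play}}\bigr)$ that flips the sign of the gradient at $\eq$; when the VI holds with equality everywhere, it adds a small linear tilt. By contrast, you work purely at the payoff level with a single bump-function construction, which is more elementary and avoids any case distinction. The trade-off is that the paper's perturbations are \emph{concave} in $\point_{\play}$, so the perturbed game remains concave whenever the original is\textemdash this is highlighted in the remark following the examples and is the stronger conclusion, since it shows the uniform metric is inadequate even restricted to the class of concave games. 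Your bump function destroys concavity, so your argument, while valid for the proposition as stated, does not yield this refinement.
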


To show this, we provide \cref{ex:collapse-1,ex:collapse-2} which, taken together, cover all possible types of equilibria in continuous games in the sense of $\eqref{eq:VI}$. 
\begin{example}
\label{ex:collapse-1}
    Let $\game$ be a continuous game, and let $\eq \in \strats$ be an equilibrium of $\game$ such that $\braket{\payfield_\play(\eq)}{p_\play-\eq_\play} <0$ for some player $\play\in\players$ and $\test_\play\in\strats_\play$.
    For arbitrary $\eps >0$, define $\pertpay_\play:\strats \to \R$ as
    \begin{equation}
        \pertpay_\play(\strat) \defeq \pay_\play(x) -\eps\,\exp\parens*{2\,\eps^{-1}\braket{\payfield_\play(\eq)}{\strat_\play-\eq_\play}}
    \end{equation}
    which is a continuously differentiable concave
    function in $\strat_\play$, and let $\pertpay_{\playalt} \equiv \pay_{\playalt}$ for all $\playalt\neq\play$, $\playalt\in\players$. 
    Since $\eq\in\strats$ is an equilibrium of $\game$, it holds $\braket{\payfield_\play(\eq)}{\strat_\play - \eq_\play}  \leq 0$ for all $\strat_\play \in \strats_\play$, which implies that
    \begin{equation}
    \textstyle
        \rho\bigl(\game,\pertgame\bigr) = \sup_{\strat\in\strats}\abs{\pay_\play(\strat)-\pertpay_\play(\strat)} = \eps\sup_{\strat_\play\in\strats_\play}
        \exp\parens*{2\,\eps^{-1}\braket{\payfield_\play(\eq)}{\strat_\play-\eq_\play}} = \eps.
    \end{equation}
    Computing the individual gradient vector of player $\play\in\players$, we obtain
    \begin{align}
          \pertpayfield_\play(\strat) & = \payfield_\play(\strat) - 2\,\payfield_\play(\eq)\exp\parens*{2\,\eps^{-1}\braket{\payfield_\play(\eq)}{\strat_\play-\eq_\play}}
    \end{align}
    and, evaluating it at $\eq \in \strats$, we get,  $\pertpayfield_\play(\eq) = - \payfield_\play(\eq)$.
    Therefore, for $\strat = (p_\play; \eq_{-\play}) \in\strats$, we have
    \begin{align}
        \braket{\pertpayfield(\eq)}{\strat-\eq} = -\braket{\payfield_\play(\eq)}{p_\play - \eq_\play} >0
    \end{align}
    \ie $\eq \in\strats$ is not an equilibrium point of the perturbed game $\pertgame$. \hfill\endenv
\end{example}

\begin{example}
\label{ex:collapse-2}
     Let $\game$ be a continuous game, and let $\eq \in \strats$ be an equilibrium of $\game$ such that $\braket{\payfield(\eq)}{x-\eq} = 0$ for all $x\in\strats$. Fix a player $\play \in \players$ and $p_\play \in \strats_\play$, and let $y_\play \in\dspace_\play$ with $\braket{y_\play}{p_\play-\eq_\play} > 0$.
    For arbitrary $\eps >0$, let $\pertpay_\play:\strats \to \R$ be defined as
    \begin{equation}
        \pertpay_\play(\strat) \defeq \pay_\play(x) + \eps \diam(\strats_\play)^{-1}\dnorm{y_\play}^{-1}\braket{y_\play}{x_\play-\eq_\play}
    \end{equation}
    which is a concave function in $\strat_\play$, and $\pertpay_{\playalt} \equiv \pay_{\playalt}$ for all $\playalt\neq\play$, $\playalt\in\players$. Then, we readily get that
    \begin{equation}
    \textstyle
        \rho\bigl(\game,\pertgame\bigr) = \sup_{\strat\in\strats}\abs{\pay_\play(\strat)-\pertpay_\play(\strat)} = \eps\diam(\strats_\play)^{-1}\dnorm{y_\play}^{-1}\sup_{\strat_\play\in\strats_\play}
        \abs{\braket{y_\play}{x_\play-\eq_\play}} \leq \eps.
    \end{equation}
    Computing the individual gradient vector of player $\play\in\players$, we obtain
    \begin{align}
        \pertpayfield_\play(\strat) & = \payfield_\play(\strat) + \eps\diam(\strats_\play)^{-1}\dnorm{y_\play}^{-1}\,y_\play
    \end{align}
    Therefore, for $\strat = (p_\play; \eq_{-\play}) \in\strats$, it holds by the example's assumptions that
    \begin{align}
        \braket{\pertpayfield(\eq)}{\strat-\eq} = 
        \braket{\payfield_\play(\eq)}{\test_\play - \eq_\play} + \eps\diam(\strats_\play)^{-1}\dnorm{y_\play}^{-1}\braket{y_\play}{\test_\play - \eq_\play} > 0
    \end{align}
    where we used that $\braket{\payfield_\play(\eq)}{\test_\play - \eq_\play} = 0$ and $\braket{y_\play}{\test_\play - \eq_\play} > 0$, as per our original assumptions.
    Thus, $\eq \in\strats$ is not an equilibrium of the perturbed game $\pertgame$.
    \hfill\endenv
\end{example}
\begin{remark}
In \cref{ex:collapse-1,ex:collapse-2}, if $\game$ is concave, so is $\pertgame$, indicating that this notion of distance is not proper even within the class of concave games.
\end{remark}

The preceding examples demonstrate that under the distance \eqref{eq:temp-dist}, even an arbitrarily small perturbation to the payoff function of a \emph{single} player can destroy \emph{any} equilibrium.%
\footnote{
Such variations are not possible in the class of finite games, so, in this much more restrictive class, the sup-norm of the payoff differences is a valid metric to measure robustness.}
This phenomenon arises because, although an equilibrium is defined in terms of payoff functions, the first-order stationarity condition in \eqref{eq:VI} shows that it fundamentally depends on the individual gradient vectors.
Therefore, any meaningful notion of distance between two games must likewise be aware of the behavior of the individual gradient vectors.

\subsection{Defining strategic robustness}

As illustrated in \cref{ex:collapse-1,ex:collapse-2}, small changes in the payoffs, though negligible in the uniform norm, can alter the equilibrium landscape quite significantly.
To address this, we refine the notion of distance between games $\game$ and $\pertgame$ as follows:
\begin{equation}
\label{eq:game-dist}
\textstyle
    \dist\bigl(\game,\pertgame\bigr) \defeq \sup_{x\in\strats}\,\dnorm{\payfield(x) - \pertpayfield(x)}
\end{equation}
With this definition in hand, we are now ready to state the concept of strategic robustness in the class of continuous games.
\begin{definition}
\label{def:strat-robust}
     An equilibrium $\eq \in \strats$ of a  game $\game$ is called \emph{strategically robust} if there exists $\eps >0$ such that for \emph{any} game $\pertgame$ with $\dist\bigl(\game,\pertgame\bigr) < \eps$, $\eq$ is also an equilibrium of $\pertgame$.
\end{definition}

As we explore next, this definition offers a meaningful notion of ``closeness'' for equilibrium stability, one that is grounded not in the payoff values themselves, but in the geometry they induce.


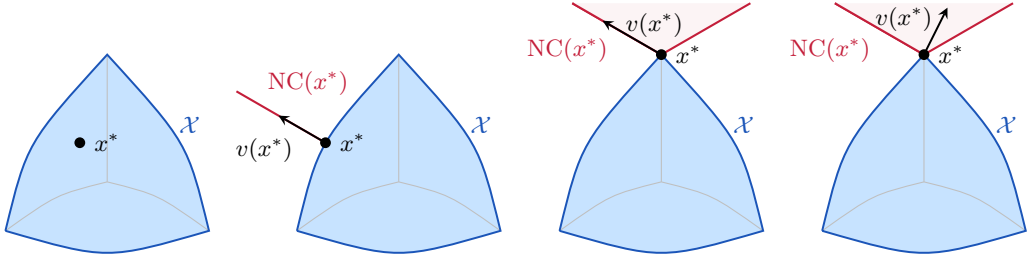
\begin{figure}
\centering
%
%
\begin{tikzpicture}
[scale=.625,
>=stealth,
edgestyle/.style={-, line width=.75pt},
vecstyle/.style = {->, line width=.75pt},
nodestyle/.style={circle, fill=black,inner sep = .5pt},
plotstyle/.style={color=DarkGreen!80!Cyan,thick}]

\def\unit{1.25}
\def\costhirty{0.8660256}
\def\tanthirty{0.57735026919}
\def\upangle{60}
\def\downangle{0}
\def\conescale{1.75}
\small

\coordinate (O) at (0,-\unit/6);

\coordinate (P1) at (-2*\unit*\costhirty,-\unit) {};
\coordinate (P2) at (2*\unit*\costhirty,-\unit) {};
\coordinate (P3) at (0,2*\unit) {};

\coordinate (sol) at ($(P3)!0.5!(P1)+(.5,0)$);
\coordinate (X) at (1.4*\unit,.8*\unit);

\fill [PrimalFill]
	(P1) .. controls ($(P1)!0.5!(P2)+(0,-0.5*\unit)$) .. (P2)
		 .. controls ($(P2)!0.5!(P3)+(0.5*\unit,0)$) .. (P3)
		 .. controls ($(P3)!0.5!(P1)+(-0.5*\unit,0)$) .. (P1);


\vphantom{
\fill [DualColor!5] (P3)++(\upangle+90:\conescale*\unit) -- (P3) --++(\downangle+30:\conescale*\unit);
}

\node (X) at (X) [PrimalColor] {$\points$};

\draw [-, gray!50] (O) .. controls ($(O)!0.5!(P1)+(0,0.2*\unit)$) .. (P1);
\draw [-, gray!50] (O) .. controls ($(O)!0.5!(P2)+(0,0.2*\unit)$) .. (P2);
\draw [-, gray!50] (O) to (P3);


\draw [edgestyle,PrimalColor]
	(P1) .. controls ($(P1)!0.5!(P2)+(0,-0.5*\unit)$) .. (P2)
		 .. controls ($(P2)!0.5!(P3)+(.5*\unit,0)$) .. (P3)
		 .. controls ($(P3)!0.5!(P1)+(-.5*\unit,0)$) .. (P1);

\node [nodestyle, label = right:$\sol$] (sol) at (sol) {.};

\end{tikzpicture}
%
%
\begin{tikzpicture}
[scale=.625,
>=stealth,
edgestyle/.style={-, line width=.75pt},
vecstyle/.style = {->, line width=.75pt},
nodestyle/.style={circle, fill=black,inner sep = .5pt},
plotstyle/.style={color=DarkGreen!80!Cyan,thick}]

\def\unit{1.25}
\def\costhirty{0.8660256}
\def\tanthirty{0.57735026919}
\def\upangle{60}
\def\downangle{0}
\def\conescale{1.75}
\small

\coordinate (O) at (0,-\unit/6);

\coordinate (P1) at (-2*\unit*\costhirty,-\unit) {};
\coordinate (P2) at (2*\unit*\costhirty,-\unit) {};
\coordinate (P3) at (0,2*\unit) {};

\coordinate (sol) at ($(P3)!0.5!(P1)+(-0.38*\unit,0)$);
\coordinate (X) at (1.4*\unit,.8*\unit);

\fill [PrimalFill]
	(P1) .. controls ($(P1)!0.5!(P2)+(0,-0.5*\unit)$) .. (P2)
		 .. controls ($(P2)!0.5!(P3)+(0.5*\unit,0)$) .. (P3)
		 .. controls ($(P3)!0.5!(P1)+(-0.5*\unit,0)$) .. (P1);

\draw [edgestyle, thick, DualColor] (sol.center) -- ++(\upangle+90:\conescale*\unit) node [near end,above right] {$\ncone(\sol)$};

\vphantom{
\fill [DualColor!5] (P3)++(\upangle+90:\conescale*\unit) -- (P3) --++(\downangle+30:\conescale*\unit);
}

\node (X) at (X) [PrimalColor] {$\points$};

\draw [-, gray!50] (O) .. controls ($(O)!0.5!(P1)+(0,0.2*\unit)$) .. (P1);
\draw [-, gray!50] (O) .. controls ($(O)!0.5!(P2)+(0,0.2*\unit)$) .. (P2);
\draw [-, gray!50] (O) to (P3);


\draw [edgestyle,PrimalColor]
	(P1) .. controls ($(P1)!0.5!(P2)+(0,-0.5*\unit)$) .. (P2)
		 .. controls ($(P2)!0.5!(P3)+(.5*\unit,0)$) .. (P3)
		 .. controls ($(P3)!0.5!(P1)+(-.5*\unit,0)$) .. (P1);

\node [nodestyle, label = right:$\sol$] (sol) at (sol) {.};
\draw [vecstyle] (sol.center) -- ($(sol)+(-1,\tanthirty)$) node [midway, below left, black] {$\solvec$};

\end{tikzpicture}
%
%
\begin{tikzpicture}
[scale=.625,
>=stealth,
edgestyle/.style={-, line width=.75pt},
vecstyle/.style = {->, line width=.75pt},
nodestyle/.style={circle, fill=black,inner sep = .5pt},
plotstyle/.style={color=DarkGreen!80!Cyan,thick}]

\def\unit{1.25}
\def\costhirty{0.8660256}
\def\tanthirty{0.57735026919}
\def\upangle{60}
\def\downangle{0}
\def\conescale{1.75}
\small

\coordinate (O) at (0,-\unit/6);

\coordinate (P1) at (-2*\unit*\costhirty,-\unit) {};
\coordinate (P2) at (2*\unit*\costhirty,-\unit) {};
\coordinate (P3) at (0,2*\unit) {};

\coordinate (sol) at (P3);
\coordinate (X) at (1.4*\unit,.8*\unit);

\fill [PrimalFill]
	(P1) .. controls ($(P1)!0.5!(P2)+(0,-0.5*\unit)$) .. (P2)
		 .. controls ($(P2)!0.5!(P3)+(0.5*\unit,0)$) .. (P3)
		 .. controls ($(P3)!0.5!(P1)+(-0.5*\unit,0)$) .. (P1);

\draw [edgestyle, thick, DualColor] (sol.center) -- ++(\upangle+90:\conescale*\unit) node [midway,below left] {$\ncone(\sol)$};
\fill [DualColor!5] (sol)++(\upangle+90:\conescale*\unit) -- (sol) --++(\downangle+30:\conescale*\unit);
\draw [edgestyle, thick, DualColor] (sol.center) -- ++(\upangle+90:\conescale*\unit);
\draw [edgestyle, thick, DualColor] (sol.center) -- ++(\downangle+30:\conescale*\unit);

\node (X) at (X) [PrimalColor] {$\points$};

\draw [-, gray!50] (O) .. controls ($(O)!0.5!(P1)+(0,0.2*\unit)$) .. (P1);
\draw [-, gray!50] (O) .. controls ($(O)!0.5!(P2)+(0,0.2*\unit)$) .. (P2);
\draw [-, gray!50] (O) to (P3);

\draw [edgestyle,PrimalColor]
	(P1) .. controls ($(P1)!0.5!(P2)+(0,-0.5*\unit)$) .. (P2)
		 .. controls ($(P2)!0.5!(P3)+(.5*\unit,0)$) .. (P3)
		 .. controls ($(P3)!0.5!(P1)+(-.5*\unit,0)$) .. (P1);

\node [nodestyle, label = right:$\sol$] (sol) at (sol) {.};

\draw [vecstyle] (sol.center) -- ($(sol)+(-\unit,\tanthirty*\unit)$) node [very near end,right,black] {\;$\solvec$};

\end{tikzpicture}
%
%
\begin{tikzpicture}
[scale=.625,
>=stealth,
edgestyle/.style={-, line width=.75pt},
vecstyle/.style = {->, line width=.75pt},
nodestyle/.style={circle, fill=black,inner sep = .5pt},
plotstyle/.style={color=DarkGreen!80!Cyan,thick}]

\def\unit{1.25}
\def\costhirty{0.8660256}
\def\tanthirty{0.57735026919}
\def\upangle{60}
\def\downangle{0}
\def\conescale{1.75}
\small

\coordinate (O) at (0,-\unit/6);

\coordinate (P1) at (-2*\unit*\costhirty,-\unit) {};
\coordinate (P2) at (2*\unit*\costhirty,-\unit) {};
\coordinate (P3) at (0,2*\unit) {};

\coordinate (sol) at (P3);
\coordinate (X) at (1.4*\unit,.8*\unit);

\fill [PrimalFill]
	(P1) .. controls ($(P1)!0.5!(P2)+(0,-0.5*\unit)$) .. (P2)
		 .. controls ($(P2)!0.5!(P3)+(0.5*\unit,0)$) .. (P3)
		 .. controls ($(P3)!0.5!(P1)+(-0.5*\unit,0)$) .. (P1);

\draw [edgestyle, thick, DualColor] (sol.center) -- ++(\upangle+90:\conescale*\unit) node [midway,below left] {$\ncone(\sol)$};
\fill [DualColor!5] (sol)++(\upangle+90:\conescale*\unit) -- (sol) --++(\downangle+30:\conescale*\unit);
\draw [edgestyle, thick, DualColor] (sol.center) -- ++(\upangle+90:\conescale*\unit);
\draw [edgestyle, thick, DualColor] (sol.center) -- ++(\downangle+30:\conescale*\unit);

\node (X) at (X) [PrimalColor] {$\points$};

\draw [-, gray!50] (O) .. controls ($(O)!0.5!(P1)+(0,0.2*\unit)$) .. (P1);
\draw [-, gray!50] (O) .. controls ($(O)!0.5!(P2)+(0,0.2*\unit)$) .. (P2);
\draw [-, gray!50] (O) to (P3);

\draw [edgestyle,PrimalColor]
	(P1) .. controls ($(P1)!0.5!(P2)+(0,-0.5*\unit)$) .. (P2)
		 .. controls ($(P2)!0.5!(P3)+(.5*\unit,0)$) .. (P3)
		 .. controls ($(P3)!0.5!(P1)+(-.5*\unit,0)$) .. (P1);

\node [nodestyle, label = right:$\sol$] (sol) at (sol) {.};

\draw [vecstyle] (sol.center) -- ($(sol)+(1/2,1)$) node [near end,left,black] {$\solvec$};

\end{tikzpicture}
\caption{%
Different equilibrium configurations:
an interior equilibrium ($\solvec = 0$);
a boundary, non-extreme equilibrium (normal cone with empty topological interior);
an extreme, non-robust equilibrium ($\solvec$ on the boundary of the normal cone);
a robust equilibrium ($\solvec$ in the interior of the normal cone).
Only the robust equilibrium remains invariant under strategic perturbations of the underlying game.}
\label{fig:equilibria}
\end{figure}


\para{Geometric characterization}
To provide a geometric characterization, we zoom in on the variational structure that governs Nash equilibria.
Specifically, we show that strategically robust equilibria $\eq\in\strats$ are precisely those solutions of \eqref{eq:VI} for which the inequality is strict for all feasible deviations. Formally, we have the following characterization:
\begin{restatable}{theorem}{GEOM}
\label{thm:geometry}
Let $\eq \in \strats$ be a joint action profile in $\gamefull$.
Then the following are equivalent:
\begin{enumerate}
[\upshape(\itshape i\hspace*{.5pt}\upshape)]
\item
\label{itm:robust}
$\eq$ is a strategically robust equilibrium.
\item
\label{itm:growth}
$\braket{\payfield(\eq)}{\dev} \leq - \robust\norm{\dev}$ for some $\robust >0$ and all $\dev \in \tcone(\eq)$, where $\tcone(\eq)$ is the closure of all rays emanating from $\eq$ and intersecting $\points$ in at least one other point.
\item
\label{itm:cone}
$\payfield(\eq) \in \intr\parens*{\pcone(\eq)}$, where $\pcone(\eq)\defeq \braces{\dpoint\in\dpoints: \braket{\dpoint}{\dev} \leq 0, 
        \;\text{for all}\;
        \dev\in\tcone(\eq)}$.
    \end{enumerate}
\end{restatable}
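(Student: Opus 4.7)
The plan is to establish the equivalence through the chain (iii)~$\Leftrightarrow$~(ii)~$\Rightarrow$~(i)~$\Rightarrow$~(ii), with the first link being a routine application of convex duality, the second a quantitative perturbation bound, and the third---the substantive step---a contrapositive argument based on an explicit linear perturbation. A useful preliminary is the product identity $\tcone(\eq) = \prod_{\play}\tcone_{\play}(\eq_{\play})$, which holds for products of convex sets and decouples the tangent cone into player-wise factors. For (iii)~$\Leftrightarrow$~(ii), I unfold the definition of the topological interior: $\payfield(\eq) \in \intr(\pcone(\eq))$ if and only if there is some $m>0$ such that $\payfield(\eq) + b \in \pcone(\eq)$ whenever $\dnorm{b} \leq m$; taking the supremum of $\braket{b}{\dev}$ over the dual ball of radius $m$ turns this into the bound $\braket{\payfield(\eq)}{\dev} \leq -m\norm{\dev}$ for every $\dev \in \tcone(\eq)$, which is exactly (ii).

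For (ii)~$\Rightarrow$~(i), I fix any perturbed game $\pertgame$ with $\dist(\game,\pertgame) < m$. Then, for every $\dev \in \tcone(\eq)$,
\[
\braket{\pertpayfield(\eq)}{\dev}
\leq \braket{\payfield(\eq)}{\dev} + \dnorm{\pertpayfield(\eq) - \payfield(\eq)}\,\norm{\dev}
\leq -\bigl(m - \dist(\game,\pertgame)\bigr)\norm{\dev},
\]
so the strict first-order condition persists under $\pertpayfield$ with positive margin. Passing to individual coordinates via the product structure and invoking the continuous differentiability of each $\pertpay_{\play}$---which justifies a first-order Taylor expansion along any feasible segment---one concludes that no player can improve her payoff by a small unilateral deviation, so $\eq$ is a local equilibrium of $\pertgame$ in the sense of \eqref{eq:LNE}.

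For the converse (i)~$\Rightarrow$~(ii), I argue by contrapositive. If (ii) fails, there is a sequence of unit vectors $\dev_{n}\in\tcone(\eq)$ with $\braket{\payfield(\eq)}{\dev_{n}}\to 0$; compactness of the unit sphere and closedness of $\tcone(\eq)$ yield a limit direction $\dev^{\ast}\in\tcone(\eq)$ with $\norm{\dev^{\ast}}=1$ and $\braket{\payfield(\eq)}{\dev^{\ast}}=0$, the reverse inequality coming from $\eq$ being a local equilibrium of $\game$. The product structure then forces $\braket{\payfield_{\play}(\eq)}{\dev^{\ast}_{\play}}=0$ for every $\play$, and since $\dev^{\ast}\neq 0$ some coordinate $\dev^{\ast}_{\play_{0}}$ is nonzero. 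Picking $b\in\dpoints_{\play_{0}}$ with $\dnorm{b}=1$ and $\braket{b}{\dev^{\ast}_{\play_{0}}}>0$ via Hahn\textendash Banach, and given any tolerance $\eps>0$, I define the linear perturbation $\pertpay_{\play_{0}}(\strat) \defeq \pay_{\play_{0}}(\strat) + (\eps/2)\braket{b}{\strat_{\play_{0}}}$ and $\pertpay_{\playalt} \equiv \pay_{\playalt}$ for $\playalt\neq\play_{0}$, so that $\dist(\game,\pertgame)=\eps/2 < \eps$ and $\braket{\pertpayfield_{\play_{0}}(\eq)}{\dev^{\ast}_{\play_{0}}}=(\eps/2)\braket{b}{\dev^{\ast}_{\play_{0}}}>0$. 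The main obstacle is closing this last gap: $\dev^{\ast}_{\play_{0}}$ is in general a \emph{limit} of feasible rays rather than a feasible ray itself, so one must first approximate it by a genuine displacement $\strat^{\prime}_{\play_{0}} - \eq_{\play_{0}}$ with $\strat^{\prime}_{\play_{0}}\in\strats_{\play_{0}}$ along which the strict inequality is preserved by continuity of $\braket{\pertpayfield_{\play_{0}}(\eq)}{\argdot}$, and only then invoke a first-order Taylor expansion on $[\eq_{\play_{0}},\strat^{\prime}_{\play_{0}}]$ to produce a strict payoff increase for player $\play_{0}$, contradicting the local equilibrium property of $\eq$ in $\pertgame$.
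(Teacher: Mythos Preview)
Your proof is correct and follows essentially the same route as the paper: the same linear perturbation for the contrapositive of (i)~$\Rightarrow$~(ii), and the same dual-norm ball argument for the equivalence with (iii). The only organizational difference is that the paper runs the cycle (i)~$\Rightarrow$~(ii)~$\Rightarrow$~(iii)~$\Rightarrow$~(i), whereas you do (iii)~$\Leftrightarrow$~(ii) directly; in (ii)~$\Rightarrow$~(i) you are actually more careful than the paper in spelling out the Taylor expansion that turns the strict first-order inequality into a genuine \eqref{eq:LNE} conclusion. One minor simplification: in your final step for (i)~$\Rightarrow$~(ii), once you have $\braket{\pertpayfield_{\play_{0}}(\eq)}{\strat'_{\play_{0}}-\eq_{\play_{0}}}>0$ for some $\strat'_{\play_{0}}\in\strats_{\play_{0}}$, you are done---this already violates \eqref{eq:VI}, and since \eqref{eq:LNE}~$\Rightarrow$~\eqref{eq:VI}, no Taylor expansion is needed to conclude that $\eq$ is not a local equilibrium of $\pertgame$.
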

Intuitively, \cref{thm:geometry} suggests that strategically robust equilibria are precisely those points $\eq\in\strats$ where the associated gradient vector $\payfield(\eq)$ lies in the topological interior of the polar cone $\pcone(\eq)$, \ie
\begin{equation}
\braket{\payfield(\eq)}{\dev}
	< 0
	\quad
	\text{for all $\dev \in \tcone(\eq)$}.
\end{equation}
We thus conclude that strategic robustness can only occur at boundary points where the tangent cone is pointed;
if the feasible set is locally flat at $\eq\in\strats$, the corresponding polar cone has empty interior, and robustness is not possible. This phenomenon is illustrated in \cref{fig:equilibria}, and the full proof of \cref{thm:geometry} is provided in \cref{app:strat}.
\begin{remark}
    Both \cref{ex:collapse-1,ex:collapse-2} violate the condition in \cref{def:strat-robust}, but for different reasons.
    In the former case, although the perturbed payoffs can be made arbitrarily close to the original, the perturbed gradient vector at equilibrium can become arbitrarily large, making the distance $\dist\bigl(\game,\pertgame\bigr)$ exceed any $\eps >0$.
    In the latter case, the polar cone $\pcone(\eq)$ at the equilibrium has empty interior, so strategic robustness cannot hold at $\eq$.
    \endenv
\end{remark}

\begin{remark}
Several important classes of games admit robust equilibria for all but a measure-zero set of instances. Examples include nonatomic, non-splittable routing games with arbitrary increasing cost functions \cite{Ros73},
Markov potential games arising in multi-agent reinforcement learning \cite{LOPP22},
Cournot oligopolies in which firms have no or limited price-setting power \cite{MS96},
etc.
\end{remark}

In the next section, we examine the dynamic implications of this result by studying the robustness of such equilibria under \eqref{eq:FTRL}.


\section{From strategic to dynamic robustness: Convergence results}
\label{sec:convergence}

So far, we focused on strategic robustness, a static notion determined solely by the underlying game and the local geometry around the equilibrium in question. 
In this section, we shift to the dynamic perspective of our central question and explore which equilibria admit robust convergence guarantees, namely, equilibria that can emerge as stable outcomes of regularized learning under feedback and initialization uncertainty, regardless of the specific choice of regularizer.

To this end, we first establish that non-equilibrium points cannot arise as limits of \eqref{eq:FTRL}, even with \emph{perfect} gradient feedback.
Formally, we have the following proposition, whose proof is provided in \cref{app:convergence}.

\begin{restatable}{proposition}{NONEQ}
\label{prop:noneq}
Suppose that \eqref{eq:FTRL} is run with perfect gradient feedback of the form $\signal_{\run} = \payfield(\curr)$ for all $\run=\running$, and assume that $\curr$ converges to some $\noneq\in\strats$.
Then $\noneq$ is an equilibrium of $\game$.
\end{restatable}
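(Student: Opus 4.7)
The plan is to argue by contradiction. Suppose $\curr \to \noneq$ but $\noneq$ is \emph{not} an equilibrium, so that there exists a profitable deviation $\test \in \strats$ with $\braket{\payfield(\noneq)}{\test - \noneq} = 2c > 0$. The overall strategy is to show that the aggregate dual state $\curr[\dstate]$ drifts in the direction of $\test - \noneq$ strongly enough to violate a uniform lower bound forced by the mirror map.

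First, I would extract the variational inequality concealed in the mirror map. By definition $\curr = \mirror(\curr[\dstate]) = \argmax_{\strat \in \strats}\{\braket{\curr[\dstate]}{\strat} - \hreg(\strat)\}$, and testing this optimality at $\test$ yields
\begin{equation}
\braket{\curr[\dstate]}{\curr - \test} \;\geq\; \hreg(\curr) - \hreg(\test).
\end{equation}
Since $\hreg$ is strongly convex on the compact set $\strats$ (and hence lower semi-continuous, with $\hreg(\curr)$ bounded below as $\curr \to \noneq$), the right-hand side stays bounded from below by a constant independent of $\run$.

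Second, I would unroll the dual recursion: from $\next[\dstate] = \curr[\dstate] + \step\,\payfield(\curr)$ we get $\curr[\dstate] = \init[\dstate] + \step\sum_{\runaltalt=1}^{\run-1}\payfield(\state_\runaltalt)$, whence
\begin{equation}
\braket{\curr[\dstate]}{\curr - \test}
= \braket{\init[\dstate]}{\curr - \test} + \step\sum_{\runaltalt=1}^{\run-1}\braket{\payfield(\state_\runaltalt)}{\curr - \test}.
\end{equation}
The first term is uniformly bounded in $\run$. For each summand, I would use the decomposition $\braket{\payfield(\state_\runaltalt)}{\curr - \test} = \braket{\payfield(\state_\runaltalt)}{\state_\runaltalt - \test} + \braket{\payfield(\state_\runaltalt)}{\curr - \state_\runaltalt}$: continuity of $\payfield$ together with $\state_\runaltalt \to \noneq$ drives the first piece to $\braket{\payfield(\noneq)}{\noneq - \test} = -2c$, while the second is bounded in absolute value by $\dnorm{\payfield(\state_\runaltalt)}\,\norm{\curr - \state_\runaltalt}$, which tends to zero as $\runaltalt, \run \to \infty$ (using that $\payfield$ is bounded on compact $\strats$ and both $\state_\runaltalt,\curr$ approach $\noneq$). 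Choosing $N$ large enough that the summand is $\leq -c$ whenever $\runaltalt,\run \geq N$, the tail contributes at most $-c\step(\run - N) \to -\infty$, while the finitely many early terms add only a bounded amount. Combining with the first step gives $\braket{\curr[\dstate]}{\curr - \test} \to -\infty$ against a finite lower bound, the desired contradiction.

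The only real subtlety I foresee is the uniform control of $\braket{\payfield(\state_\runaltalt)}{\curr - \test}$ across the \emph{two} indices $\runaltalt$ and $\run$: one needs a single threshold $N$ so that the estimate holds simultaneously for all $\runaltalt, \run \geq N$. This is a standard diagonal-type argument that leverages only continuity of $\payfield$ and compactness of $\strats$, and requires no deeper ingredient. (Incidentally, the proof uses neither strong convexity of $\hreg$ nor any smallness assumption on the step-size $\step$, which is consistent with the constant-step-size regime adopted in the paper.)
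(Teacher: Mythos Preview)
Your proof is correct and follows essentially the same approach as the paper: both argue by contradiction, extract the subdifferential inequality $\braket{\curr[\dstate]}{\curr-\test}\geq\hreg(\curr)-\hreg(\test)$ from the mirror-map optimality, unroll the dual recursion, decompose each summand as $\braket{\payfield(\state_\runaltalt)}{\state_\runaltalt-\test}+\braket{\payfield(\state_\runaltalt)}{\curr-\state_\runaltalt}$, and conclude that the left-hand side tends to $-\infty$ against a finite lower bound. The only cosmetic difference is that the paper fixes a neighborhood $\nhd$ of $\noneq$ and a radius $\const/4B$ up front (yielding the uniform two-index control directly), whereas you phrase the same step via a threshold $N$; the underlying argument is identical.
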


Having excluded non-equilibrium points as positive probability outcomes of a learning process, we now turn to identifying equilibria that are robust from a dynamic standpoint, and more precisely, under that of \eqref{eq:FTRL}. 
In this regard, strategically robust equilibria serve as natural candidates, as their stability with respect to game perturbations suggests they may also admit robust convergence guarantees.
This is further supported by the finding that equilibrium points in the interior of the strategy space $\strats$ cannot be limit points:
in particular, we show below that, even under i.i.d. stochastic noise, the iterates of \eqref{eq:FTRL} diverge from such equilibria almost surely.

\begin{restatable}{proposition}{RELINT}
\label{prop:lim-interior}
	Let $\eq \in \relint(\strats)$ be a \acl{NE} of $\gamefull$, and $\parens*{\curr}_{\run\in\N}$ be the sequence of play induced by \eqref{eq:FTRL} with $\signal_\run = \payfield(\curr) +\noise_\run$, where $\noise_\run$ i.i.d. with $\exof{\noise_\run} = 0$ and $\cov(\noise_\run) \succ 0$ for all $\run\in\N$.
	Then:
	\begin{equation}
		\probof*{\lim_{\run\to\infty}\curr = \eq} =0 \quad\quad \text{for any $\init\in\proxdom$.}
	\end{equation}
\end{restatable}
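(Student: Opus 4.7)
The plan is to proceed by contradiction, assuming $\prob(A) > 0$ where $A \defeq \braces{\lim_{\run\to\infty}\curr = \eq}$. The pivotal structural fact is that near an interior equilibrium only the \emph{tangential} component of the dual iterate $\curr[\dpoint]$ ever influences $\curr = \mirror(\curr[\dpoint])$, while the interior first-order condition forces $\payfield(\eq)$ to lie entirely in the \emph{normal} direction; hence along $A$, the tangential dynamics are driven purely by the projected noise, which must then converge to zero---impossible for an i.i.d.\ sequence with positive-definite covariance.

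Concretely, let $V \defeq \tanspace{\eq}{\strats}$ and $V^{\circ} \defeq \setdef{y \in \dpoints}{\braket{y}{v}=0 \text{ for all } v \in V}$. Because $\eq \in \relint(\strats)$, both $\eq \pm \epsilon v \in \strats$ for every $v \in V$ and sufficiently small $\epsilon > 0$; applying the equilibrium first-order condition in both signs gives $\braket{\payfield(\eq)}{v} = 0$ for all $v \in V$, so $\payfield(\eq) \in V^{\circ}$. Separately, for any $y \in V^{\circ}$ the functional $\strat \mapsto \braket{y}{\strat}$ is constant on $\strats$, hence adding $y$ to $\dpoint$ leaves the argmax in \eqref{eq:mirror} invariant: $\mirror(\dpoint + y) = \mirror(\dpoint)$. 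Therefore $\mirror$ factors through the orthogonal projection $P_V : \dpoints \to V$, and the induced restriction $\mirror|_V : V \to \relint(\strats)$ is a continuous bijection with continuous inverse (it is the gradient of the strictly convex $\hconj|_V$ arising from the strong convexity of $\hreg$).

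Projecting the \eqref{eq:FTRL} update onto $V$ yields
\begin{equation*}
    P_V \next[\dpoint] - P_V \curr[\dpoint]
    = \step\, P_V \payfield(\curr) + \step\, P_V \noise_\run.
\end{equation*}
On the event $A$, $P_V \curr[\dpoint] = (\mirror|_V)^{-1}(\curr) \to (\mirror|_V)^{-1}(\eq)$ by continuity of the local inverse, so the left-hand side above tends to zero; and $P_V \payfield(\curr) \to P_V \payfield(\eq) = 0$ by continuity of $\payfield$ and the interior first-order condition. Consequently $P_V \noise_\run \to 0$ on $A$.

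To close the argument, observe that $\braces{P_V \noise_\run}_{\run\in\N}$ is i.i.d.\ with covariance $P_V \cov(\noise_\run) P_V^{\top}$, positive definite on $V$ because $\cov(\noise_\run) \mg 0$ and $P_V$ restricts to the identity on $V$. Hence there exist $\epsilon, c > 0$ with $\prob(\twonorm{P_V \noise_\run} > \epsilon) \geq c$ for every $\run$, and the second Borel--Cantelli lemma gives $\twonorm{P_V \noise_\run} > \epsilon$ infinitely often, almost surely; this contradicts $P_V \noise_\run \to 0$ on the positive-probability event $A$ and forces $\prob(A) = 0$. The step I expect to be most delicate is the continuity of $(\mirror|_V)^{-1}$ near $\eq$: for steep regularizers like the entropic one in \cref{ex:logit} this is immediate from the smoothness and strict convexity of $\hconj$, while for non-steep ones such as in \cref{ex:euclidean} it requires restricting to the open subset of duals whose image lies in $\relint(\strats)$ before invoking the inverse function theorem---after which the rest of the argument is insensitive to the choice of regularizer.
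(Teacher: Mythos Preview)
Your proposal is correct and follows essentially the same approach as the paper: both decompose the dual space into the annihilator of the tangent space at $\eq$ (the paper's $\Prp$, your $V^{\circ}$) and its complement, observe that $\payfield(\eq)$ lies in the annihilator so that the tangential projection of the increments is eventually driven by the noise alone, and derive a contradiction with Borel--Cantelli. The paper packages the uniqueness and continuity of the tangential component of $\partial\hreg$ at interior points into two short lemmas and then tests against a single direction $\test-\eq$, whereas you phrase the same content as ``$\mirror$ factors through $P_V$ and $\mirror|_V$ is a local homeomorphism'' and work with the full vector $P_V\noise_\run$; the delicate continuity step you flag at the end is exactly the ``we readily get $\lim_{\run\to\infty}\Pi\dpoint_\run = \Pi\deq$'' line in the paper, and both arguments implicitly rely on $\hreg$ being differentiable on $\relint(\strats)$.
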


\begin{remark}
The condition $\cov(\noise_\run) \succ 0$ is not necessary.
In fact, it suffices to have $\cov(\noise_\run)$ non-degenerate in a direction $\test-\eq$ for $\test\in\strats$, but we state our result under stronger assumptions for simplicity.
\endenv
\end{remark}

The key idea of the proof, which is deferred to \cref{app:convergence}, is that, since $\eq\in\relint(\strats)$, we have $\inner{\payfield(\eq)}{\point-\eq} = 0$ for all $\point\in\strats$. 
At the same time, as $\cov(\noise_\run)\succ 0$, the quantity $\inner{\noise_\run}{\point-\eq}$ fluctuates and remains bounded away from zero infinitely often, thereby preventing convergence.

\subsection{Learning with gradient-based feedback}

In view of the impossibility result of \cref{prop:lim-interior}, we shift our focus on the convergence of $\eqref{eq:FTRL}$ toward strategically robust equilibria.
We first consider the gradient feedback model, where each player receives an unbiased estimate of their individual gradient vector via \eqref{eq:SFO}. 
Specifically, we analyze the behavior of \eqref{eq:FTRL} and we establish local convergence guarantees toward strategically robust equilibria with high probability. 
This is encoded in the following theorem:

\begin{restatable}{theorem}{ConvergenceSFO}
\label{thm:sfo}
Let $\eq \in \strats$ be a strategically robust equilibrium of $\gamefull$.
Fix a confidence level $\conf > 0$, and let $(\curr)_{\run \in \N}$ be the iterates of \eqref{eq:FTRL}  with feedback provided by \eqref{eq:SFO}, and step-size $\step > 0$ sufficiently small.
Then, there exists a neighborhood $\nhd$ of $\eq$ in $\proxdom$ such that:
\begin{equation}
\probof*{\lim_{\run\to\infty} \curr = \eq} \geq 1-\conf
	\quad
	\text{if $\init \in\nhd$.}
\end{equation}
\end{restatable}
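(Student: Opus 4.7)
The plan is to perform a local Lyapunov analysis centered at $\eq$ using the Fenchel coupling $\fench(\eq,\curr[\dstate])$, and to combine this with a martingale stopping-time argument that traps the iterate near $\eq$ with probability at least $1-\conf$ for suitably small $\step$ and suitably chosen $\init$.

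First, strategic robustness has to be converted into a local drift condition on $\payfield$. By part~(ii) of \cref{thm:geometry}, there is $\robust>0$ with $\braket{\payfield(\eq)}{\dev}\leq -\robust\,\norm{\dev}$ for every $\dev\in\tcone(\eq)$; since $\strats$ is convex, $\point-\eq\in\tcone(\eq)$ for all $\point\in\strats$, and continuity of $\payfield$ then produces a neighborhood $\open$ of $\eq$ in $\strats$ and a constant $\robust'\in(0,\robust)$ with
\[
\braket{\payfield(\point)}{\point-\eq}\leq -\robust'\,\norm{\point-\eq}\qquad\text{for all }\point\in\open.
\]
Setting $E_\run\defeq\fench(\eq,\curr[\dstate])$, the smoothness of $\hconj$ (dual to the strong convexity of $\hreg$) yields the one-step Fenchel estimate
\[
E_{\run+1}\leq E_{\run}+\step\braket{\signal_\run}{\curr-\eq}+\frac{\step^{2}}{2\hstr}\dnorm{\signal_\run}^{2}.
\]
Substituting $\signal_\run=\payfield(\curr)+\noise_\run$, taking conditional expectation, and invoking the drift inequality on $\{\curr\in\open\}$ gives
\[
\exof{E_{\run+1}\given\filter_\run}\leq E_\run-\step\robust'\,\norm{\curr-\eq}+\step^{2}C
\]
for a constant $C$ depending only on $\open$, $\hstr$, and the (sub-Gaussian) noise variance.

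Next I would define the exit time $\tau\defeq\inf\{\run:\curr\notin\open\}$ and analyze the stopped process $E_{\run\wedge\tau}$. The martingale part $M_\run\defeq\sum_{\runaltalt<\run}\step\braket{\noise_\runaltalt}{\state_\runaltalt-\eq}$ has sub-Gaussian increments of scale $O(\step)$, so Freedman's inequality bounds $\sup_\run\abs{M_{\run\wedge\tau}}$ with probability at least $1-\conf/2$. Choosing $\init$ close enough to $\eq$ in $\proxdom$ makes $E_{0}$ arbitrarily small, and then picking $\step$ small enough ensures that the combined ``energy budget'' never reaches the value of $\fench(\eq,\cdot)$ needed to cross $\partial\open$; this shows $\probof{\tau<\infty}\leq\conf/2$.

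The delicate part\textemdash and, I expect, the main technical obstacle\textemdash is upgrading ``stays in $\open$'' to the actual convergence $\curr\to\eq$: standard supermartingale convergence does not apply, because under a constant step-size the noise term $\step^{2}C$ only drives $E_\run$ into an $O(\step)$ neighborhood of $0$. The resolution is to exploit the cone structure of \cref{thm:geometry}(iii). On $\{\tau=\infty\}$, the dual iterate $\curr[\dstate]$ has conditional-mean increment $\step\payfield(\curr)\approx\step\payfield(\eq)\in\intr(\pcone(\eq))$, while the aggregated noise $\step\sum_{\runaltalt<\run}\noise_\runaltalt$ grows only as $O(\step\sqrt{\run})$ by sub-Gaussian concentration; hence, with probability at least $1-\conf/2$, $\curr[\dstate]$ drifts linearly into the interior of $\eq+\pcone(\eq)$. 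For non-steep $\hreg$ this absorbs $\curr=\mirror(\curr[\dstate])$ at $\eq$ in finite time, while for steep $\hreg$ the divergence of $\curr[\dstate]$ in the $\payfield(\eq)$-direction forces $\mirror(\curr[\dstate])\to\eq$. A union bound over the two low-probability events then yields the stated $1-\conf$ guarantee.
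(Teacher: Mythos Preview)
Your approach and the paper's diverge substantially. The paper never touches the Fenchel coupling; instead it works entirely in the dual. Since $\payfield(\eq)\in\intr(\pcone(\eq))$, a short cone lemma produces finitely many unit vectors $\dev_{1},\dotsc,\dev_{r}$ spanning a polyhedral cone that contains $\tcone(\eq)$ and satisfies $\braket{\payfield(\eq)}{\dev_{j}}<0$ for every $j$. Continuity then gives a neighborhood $\nhd$ on which $\braket{\payfield(\point)}{\dev_{j}}\leq-\const$ uniformly in $j$, so each scalar process $\braket{\curr[\dstate]}{\dev_{j}}$ is a random walk with \emph{constant} negative drift $-\step\const$ plus a martingale $S_{\run}=\step\sum_{\runalt}\braket{\noise_{\runalt}}{\dev_{j}}$. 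A Doob--Burkholder bound shows $\sup_{\runaltalt\leq\run}\abs{S_{\runaltalt}}\leq\const(\step\run)^{\mu}$ for all $\run$ with probability $\geq 1-\conf/r$ (summable tails once $\mu>1/2+1/\qexp$ and $\step$ is small), and then a single induction simultaneously (i) keeps $\braket{\curr[\dstate]}{\dev_{j}}<-M$ (hence $\curr\in\nhd$) and (ii) forces $\braket{\curr[\dstate]}{\dev_{j}}\to-\infty$. A separate proposition (proved via the subgradient inequality for $\hreg$) converts ``$\braket{\curr[\dstate]}{\dev_{j}}\to-\infty$ for every generator $\dev_{j}$'' into $\mirror(\curr[\dstate])\to\eq$, with no steep/non-steep case split. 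Confinement and convergence are thus handled by the same dual calculation, so your Fenchel/Freedman machinery, while not wrong, is a detour.

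The genuine gap in your outline is the final paragraph. First, ``$\curr[\dstate]$ drifts into the interior of $\eq+\pcone(\eq)$'' mixes primal and dual objects; what you need is that $\braket{\curr[\dstate]}{\dev}\to-\infty$ \emph{uniformly} over unit $\dev\in\tcone(\eq)$, and this is exactly where the finitely-generated cone reduction earns its keep. Second, ``noise grows as $O(\step\sqrt{\run})$'' is a pointwise statement; to conclude anything for all $\run$ with a single probability budget you need a maximal/summable bound of the form $\sup_{\runaltalt\leq\run}\abs{S_{\runaltalt}}\leq c(\step\run)^{\mu}$, and the induction that keeps $\curr\in\nhd$ must be interlaced with it (the drift is only $\leq-\const$ while $\curr\in\nhd$). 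Third, the claim that for non-steep $\hreg$ the iterate is ``absorbed at $\eq$ in finite time'' is not how the argument closes: you still need the implication ``$\braket{\curr[\dstate]}{\dev}\to-\infty$ for all generators $\implies\mirror(\curr[\dstate])\to\eq$'', which follows from the subgradient inequality $\hreg(\eq)\geq\hreg(\curr)+\braket{\curr[\dstate]}{\eq-\curr}$ and holds for steep and non-steep $\hreg$ alike. Filling these in essentially reproduces the paper's argument, at which point the Fenchel-coupling stage becomes redundant.
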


Before proceeding, a few remarks are in order. Since continuous games may admit multiple Nash equilibria, global convergence guarantees are in general unattainable. As such, our analysis focuses on the local convergence landscape of \eqref{eq:FTRL}. 
As a sidenote, it is important to emphasize that the convergence result is robust to the choice of regularizer, relying solely on the general conditions outlined in \cref{sec:prelims} rather than any particular functional form.
We outline below the main steps of the proof, with full details provided in \cref{app:convergence}.

\emph{Proof Sketch.}
The key idea is that the auxiliary process $\dstate_\run$, which aggregates the players’ gradient updates, diverges to infinity in a direction that steers the induced sequence $\state_\run = \mirror(\dstate_\run)$ toward the equilibrium in question. 
More formally, the proof relies on the following intuition. From \cref{thm:geometry} and the continuity of the players’ payoffs, strategic robustness implies that, in a neighborhood of a robust equilibrium $\eq$, the players’ individual gradient fields point toward $\eq$. Consequently, the process $\dstate_\run$ accumulates gradient steps that, on average, are aligned with the interior of the normal cone $\ncone(\eq)$ to the action space $\strats$ at $\eq$. As a result, $\dstate_\run$ exhibits a consistent drift that carries it deeper into a “copy’’ of the normal cone $\ncone(\eq)$ embedded in the gradient space $\dpoints$. Moreover, we show that, with high probability, once $\dstate_\run$ enters this region, it remains there, provided that the algorithm is not initialized too far from $\eq$. Combining the above, \cref{prop:aux-conv} establishes that the sequence of actions $\state_\run = \mirror(\dstate_\run)$ generated by \eqref{eq:FTRL} converges to $\eq$.
\endenv

While our main focus lies on the qualitative convergence behavior of \eqref{eq:FTRL}, stronger guarantees can be obtained under additional structural assumptions on the strategy space and the regularizer.
In particular, suppose that $\strats$ is a polyhedral domain of the form $\strats \defeq \braces*{\point\in\R^{d}_+ \mid A\point = b}$
for some $A \in \R^{m \times d}$ and $b\in\R^m$, and $\hreg$ is decomposable with kernel function $\theta$, \ie $\hreg$ can be written as $\hreg(\point) = \sum_{j=1}^d \theta(\point_j)$ for some continuous function $\theta: \R_+ \to \R$ with locally Lipschitz $\theta''$ and $\theta'' > 0$.
Under these conditions, we obtain the explicit convergence rates for the \eqref{eq:FTRL} dynamics, as follows.

\begin{restatable}{theorem}{RATES}
\label{thm:rate}
If, in addition, $\strats$ is a polyhedral domain and $\hreg$ is decomposable with kernel $\theta$, on the event $\event \defeq \braces{\lim_{\run\to\infty}\curr=\eq}$ it holds:
\begin{equation}
	\norm{\curr-\eq} = \rate\parens*{-\Theta(\run)}
\end{equation}
where $\rate$ is the rate function defined via
\begin{equation}
	\rate(z) \defeq 
	\begin{cases}
		(\theta')^{-1}(z) \quad & \text{if $z > \theta'(0^+)$} \\
		0 \quad & \text{if $z \leq \theta'(0^+)$}
	\end{cases}
\end{equation}
\end{restatable}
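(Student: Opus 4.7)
The plan is to exploit the decomposable structure of $\hreg$ and the polyhedrality of $\strats$ to derive a coordinate-wise expression for the mirror map in terms of $\rate$, and then to control the dual-variable dynamics on the event $\event$ using strict complementarity at the strategically robust equilibrium $\eq$.

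First, I would appeal to \cref{thm:geometry} to extract the relevant geometry. Because $\eq$ is strategically robust and $\strats = \{\point\in\R^d_+ : A\point = b\}$ is polyhedral, $\eq$ must be a vertex; writing $J_0 \defeq \{j : \eq_j = 0\}$ and $J_+ \defeq \{j : \eq_j > 0\}$, there exist KKT multipliers $\lambda^{\ast}\in\R^m$ and $\mu^{\ast}\in\R^d_+$ (with $\mu^{\ast}_{J_+} = 0$) such that $\payfield(\eq) = A^{\top}\lambda^{\ast} - \mu^{\ast}$, and the hypothesis $\payfield(\eq)\in\intr\parens*{\pcone(\eq)}$ translates into the strict complementarity condition $\mu^{\ast}_j > 0$ for every $j\in J_0$ --- precisely the positive drift that will drive the advertised rate. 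Next, writing the Lagrangian of \eqref{eq:mirror} with multipliers $\lambda\in\R^m$ for $A\point=b$ and $\nu\in\R^d_+$ for $\point\geq 0$, decomposability of $\hreg$ yields the coordinate-wise stationarity $\theta'(x_j) = \dpoint_j - (A^{\top}\lambda)_j + \nu_j$ together with $\nu_j x_j = 0$; inverting through $\theta'$ with the convention $(\theta')^{-1}(z)=0$ for $z\leq\theta'(0^+)$ yields
\begin{equation}
	\label{eq:mirror-coord}
	[\mirror(\dpoint)]_j \;=\; \rate\bigl(\dpoint_j - (A^{\top}\lambda(\dpoint))_j\bigr),
\end{equation}
where $\lambda(\dpoint)\in\R^m$ is chosen so that $A\,\mirror(\dpoint) = b$.

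The heart of the argument is an asymptotic expansion of $\curr[\dstate]$ on $\event$. Since $\curr \to \eq$ on $\event$, a Cesàro argument applied to $\payfield(\curr)\to\payfield(\eq)$ together with the martingale strong law for the sub-Gaussian noise in \eqref{eq:SFO} yields
\begin{equation}
	\curr[\dstate] \;=\; \run\,\step\,\payfield(\eq) + o(\run) \quad \text{a.s.\ on } \event.
\end{equation}
Setting $\lambda_{\run} \defeq \lambda(\curr[\dstate])$, for $j\in J_+$ the $j$-th coordinate of $\curr$ converges to $\eq_j > 0$ and $\theta'$ stays bounded there, so by \eqref{eq:mirror-coord} the quantity $[\curr[\dstate]]_j - (A^{\top}\lambda_{\run})_j$ stays bounded as $\run\to\infty$. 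Combined with vertex nondegeneracy (linear independence of the columns of $A$ indexed by $J_+$), this pins down $A^{\top}\lambda_{\run} = \run\,\step\,A^{\top}\lambda^{\ast} + o(\run)$. Substituting back into \eqref{eq:mirror-coord} for $j\in J_0$, strict complementarity gives
\begin{equation}
	[\curr[\dstate]]_j - (A^{\top}\lambda_{\run})_j \;=\; -\run\,\step\,\mu^{\ast}_j + o(\run) \;=\; -\Theta(\run),
\end{equation}
so that the $j$-th coordinate of $\curr$ equals $\rate(-\Theta(\run))$ for every $j\in J_0$.

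Finally, because $\eq$ is a vertex with $A\,\curr = A\eq = b$, the affine constraint forces $(\curr - \eq)_{J_+} = -A_{J_+}^{-1}\,A_{J_0}\,(\curr - \eq)_{J_0}$, so the active-coordinate error decays at the same rate as the inactive one, and summing coordinate contributions yields $\norm{\curr-\eq} = \rate(-\Theta(\run))$. The principal obstacle I anticipate is the pathwise handling of the $o(\run)$ remainder on $\event$: because $\event$ is itself defined through long-time convergence, the martingale strong law must be invoked on a full-probability event that contains $\event$ rather than in conditional expectation. A secondary subtlety is justifying the expansion $A^{\top}\lambda_{\run} = \run\,\step\,A^{\top}\lambda^{\ast} + o(\run)$, which relies on the vertex nondegeneracy implicit in the polyhedral structure guaranteed by strategic robustness.
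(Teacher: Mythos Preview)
Your approach has a genuine gap in the step where you ``pin down'' the multiplier $\lambda_{\run}$. You correctly observe that strategic robustness forces $\eq$ to be a vertex, and that the columns of $A$ indexed by $J_{+}$ are linearly independent. But linear independence only gives $|J_{+}|\leq m$; to recover $A^{\top}\lambda_{\run}$ on $J_{0}$ from its values on $J_{+}$ you need the map $\lambda\mapsto (A^{\top}\lambda)_{J_{+}}$ to be \emph{injective}, which requires those columns to \emph{span} $\R^{m}$, i.e.\ $|J_{+}|=m$. That is vertex nondegeneracy in the LP sense, and it is \emph{not} implied by strategic robustness. For a concrete counterexample take $\strats=\{x\in\R^{3}_{+}: x_{1}+x_{2}=1,\ x_{1}+x_{3}=1\}$ with $\eq=(1,0,0)$: here $J_{+}=\{1\}$, $m=2$, the tangent cone is the ray spanned by $(-1,1,1)$, so the polar cone is a half-space and $\eq$ can certainly be strategically robust; yet from the single relation $(A^{\top}\lambda_{\run})_{1}=o(\run)$ you cannot infer anything about $(A^{\top}\lambda_{\run})_{2}$ or $(A^{\top}\lambda_{\run})_{3}$ individually. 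The same issue resurfaces in your final step, where $A_{J_{+}}^{-1}$ is written as if $A_{J_{+}}$ were square.

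The paper avoids tracking $\lambda_{\run}$ altogether. It invokes a Farkas-type separation lemma to grow a set of ``good'' active indices inductively: at each step, either some not-yet-good active coordinate is dominated on $\strats$ by the already-good ones (this is exactly what rescues the degenerate example above, where $x_{3}=x_{2}$ identically on $\strats$), or there exists a vector $z'\in\ker A$ supported only on the remaining active coordinates, so that pairing the KKT identity against $z'$ annihilates the $A^{\top}\lambda_{\run}$ term and yields $\sum_{\beta}\theta'(x_{\beta,\run})\,z'_{\beta}\leq -\Theta(\run)$, producing at least one new good index. This two-pronged induction is the missing ingredient in your argument. Under the additional nondegeneracy hypothesis your route would go through and is arguably more transparent, but the theorem as stated covers degenerate vertices and your proof does not.
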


\begin{remark}
For the setting of \cref{ex:logit}, with $\strats = \simplex(\pures)$, $\theta(z) = z\log z$ and $\eq$ a \emph{strict} Nash equilibrium, the convergence rate of \eqref{eq:FTRL} as per \cref{thm:rate}, becomes $\norm{\curr-\eq} = \exp\parens*{-\Theta(\run)}$.
\endenv
\end{remark}

\begin{remark}
For \emph{finite games}, \cite{GVM21b} showed that under a step-size schedule of the form $\step_\run\propto 1/\run^{\pexp}$, the Robins-Monro summability conditions require $\pexp \in (1/2,1]$, leading to convergence rates from $\rate\parens*{-\Theta(\run^{1-\pexp})}$ to $\rate\parens*{-\Theta(\log\run)}$. 
Our convergence guarantees remain valid under these step-size schedules, though they yield the slower aforementioned rates.
\endenv
\end{remark}

\begin{remark}
    It is important to highlight the different behavior of \eqref{eq:FTRL}, often referred to as a ``lazy'' variant of mirror descent \cite{Haz19}, with that of the mirror descent algorithm, defined via the update
    \begin{equation}
        \label{eq:MD}
        \tag{MD}
            \next = \mirror\parens*{\nabla\hreg(\curr) + \step \signal_\run} \quad\quad \text{for $\run = 1,2,\dots$}
    \end{equation}
where $\nabla\hreg(\point)$ denotes a continuous selection of $\subd\hreg(\point)$ \cite{BLM18}.
To illustrate the difference, consider the single-agent problem of maximizing $\pay(\point) = \point$ over $\points = [0,1]$, where $\eq=1$ is a robust equilibrium. 
Using the Euclidean regularizer (see \cref{ex:euclidean}), \eqref{eq:MD} reduces to the projected gradient algorithm
\begin{equation}
\label{eq:SGA}
\tag{SGA}
   \point_{\run+1} = \Pi(\point_\run + \step \signal_\run) 
\end{equation}
where $\signal_\run$ is a stochastic gradient of $\pay$ at $\point_\run$, \ie $\signal_\run = 1 + U_\run$ where $U_n$ is a Bernoulli process with $U_\run = \pm 1$ with probability $1/2$. However, even if $\point_\run = \eq$ for some $\run$, we then have that $\point_{\run+1} = 1-\step$ with probability $1/2$. Thus, by a straightforward application of the Borel-Cantelli lemma, we conclude that, with probability $1$, \eqref{eq:SGA} does not converge to $\eq$.
Through this toy example, note that although \eqref{eq:MD} and \eqref{eq:FTRL} use the same mirror map $\mirror$ to select actions, they differ fundamentally in how feedback is processed.  
The “eager” nature of \eqref{eq:MD} makes it more sensitive to noise, whereas \eqref{eq:FTRL} maintains a cumulative dual variable $\dstate_\run$ that aggregates all past feedback, effectively smoothing out fluctuations over time.
Also, note that for steep regularizers, the iterations \eqref{eq:MD} and \eqref{eq:FTRL} coincide, as the mirror map $\mirror$ is essentially injective (see \cref{app:mirror} for more details). Therefore, differences in their behavior arise in the case where steepness does not hold.
\endenv
\end{remark}

\subsection{Learning with payoff-based feedback}

We now turn to the payoff-based feedback model, where players observe only their realized payoffs and use them to estimate gradients indirectly via \eqref{eq:SPSA}.
This feedback model introduces higher variance and structural bias due to the diminishing sampling radius and the feasibility corrections. 
Nevertheless, we show that strategically robust equilibria retain their dynamic robustness: they still locally attract the \eqref{eq:FTRL} dynamics with high probability, as the following theorem suggests.

\begin{restatable}{theorem}{ConvergenceSPSA}
\label{thm:SPSA}
	Let $\eq\in\strats$ be a strategically robust equilibrium of $\game$. 
	Fix a confidence level $\conf > 0$, and let $(\curr)_{\run \in \N}$ be the iterates of \eqref{eq:FTRL} run with \eqref{eq:SPSA} with $\mix_\run \propto 1/\run^{\mixexp}$ for some $\mixexp \in (0,1/2)$ and step-size $\step > 0$ sufficiently small.
	Then, there exists a neighborhood $\nhd$ of $\eq$ such that:
	\begin{equation}
		\probof*{\lim_{\run\to\infty} \curr = \eq} \geq 1-\conf \quad\quad \text{if $\init \in\nhd$.}
	\end{equation}
	If, in addition, $\strats$ is affinely constrained and $\hreg$ is decomposable with kernel $\theta$, then, whenever $\curr$ converges to $\eq$, we have:
	\begin{equation}
		\norm{\curr-\eq} = \rate\parens*{-\Theta(\run)}
		\eqstop
	\end{equation}	
\end{restatable}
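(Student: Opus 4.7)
The plan is to adapt the argument of \cref{thm:sfo} to the payoff-based setting by carefully controlling the bias and variance of the \eqref{eq:SPSA} estimator. First I would decompose the feedback as $\signal_\run = \payfield(\curr) + \sbias_\run + \snoise_\run$, where $\sbias_\run \defeq \exof{\signal_\run \given \filter_\run} - \payfield(\curr)$ is the bias and $\snoise_\run \defeq \signal_\run - \exof{\signal_\run \given \filter_\run}$ is a martingale-difference noise term. The statistical bounds stated after \eqref{eq:SPSA} give $\dnorm{\sbias_\run} = \bigoh(\mix_\run)$ and $\dnorm{\snoise_\run} = \bigoh(1/\mix_\run)$. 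Writing the aggregate dual variable as
\begin{equation*}
	\dstate_\run = \init[\dstate] + \step\sum_{\runalt=1}^{\run-1}\payfield(\state_\runalt) + \step\sum_{\runalt=1}^{\run-1}\sbias_\runalt + \step\sum_{\runalt=1}^{\run-1}\snoise_\runalt,
\end{equation*}
the proof reduces to showing that the first, deterministic ``drift'' term dominates the other two in the direction of $\ncone(\eq)$, so that the mechanism underlying \cref{thm:sfo} can be re-used.

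The drift analysis proceeds exactly as in \cref{thm:sfo}: by \cref{thm:geometry}, strategic robustness implies $\inner{\payfield(\point)}{\dev} \leq -\robust\norm{\dev}/2$ for all $\point$ in a neighborhood of $\eq$ and all tangent directions $\dev \in \tcone(\eq)$, so each drift step pushes $\dstate_\run$ linearly into (a copy of) $\ncone(\eq)$. To show this drift is not washed out, I would bound the two residual sums with $\mix_\run \propto 1/\run^\pexp$: the cumulative bias satisfies $\norm{\sum_{\runalt<\run}\sbias_\runalt} = \bigoh(\sum_{\runalt<\run}\mix_\runalt) = \bigoh(\run^{1-\pexp})$, while the martingale $M_\run \defeq \sum_{\runalt<\run}\snoise_\runalt$ has predictable quadratic variation $\bigoh(\sum_{\runalt<\run}1/\mix_\runalt^2) = \bigoh(\run^{1+2\pexp})$. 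A Freedman-type concentration inequality for martingales with bounded increments and bounded conditional variance then gives $\norm{M_\run} = \bigoh(\run^{1/2+\pexp})$ with probability at least $1-\conf$, uniformly over $\run$. The constraint $\pexp \in (0, 1/2)$ is exactly what ensures that both residual terms are $o(\run)$, so that after choosing $\nhd$ sufficiently small and $\step$ sufficiently small the drift-plus-residual is trapped in the normal-cone region required by (the \eqref{eq:SPSA} analogue of) \cref{prop:aux-conv}, yielding $\state_\run = \mirror(\dstate_\run) \to \eq$.

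For the rate statement, on the convergence event the dual iterate grows linearly in time along $\ncone(\eq)$ because the drift term dominates and, in the affinely constrained decomposable setting, the mirror map admits the coordinatewise expression through the kernel $\theta$. Inverting the relation between $\dstate_\run$ and $\mirror(\dstate_\run)$ via the rate function $\rate$ defined in \cref{thm:rate} then yields $\norm{\curr - \eq} = \rate(-\Theta(\run))$ exactly as in the gradient-feedback case, since the $o(\run)$ residuals are absorbed into the $\Theta(\run)$ notation.

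The main obstacle I anticipate is the martingale concentration: unlike in \cref{thm:sfo} where the noise is (sub-)Gaussian with bounded variance, here the increments $\snoise_\runalt$ are only bounded in norm by $\bigoh(1/\mix_\runalt)$, so a naïve Azuma bound would give $\bigoh(\run^{1/2+\pexp})$ only after paying extra log factors and would not yield the required uniform control over all $\run$. Obtaining a clean uniform-in-time high-probability bound requires a Freedman or time-uniform Bernstein inequality that exploits the fact that the \emph{conditional variance} of $\snoise_\runalt$ grows only like $1/\mix_\runalt^2$, together with a stopping-time argument that restarts the analysis should $\curr$ ever leave the trapping neighborhood of $\eq$. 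This is the technical heart of the argument and the only place where the proof genuinely departs from that of \cref{thm:sfo}.
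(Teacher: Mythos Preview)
Your proposal is correct and follows the same overall architecture as the paper: decompose the feedback into drift, bias, and martingale noise; show the bias accumulates as $\bigoh(\run^{1-\mixexp})$ and the noise as $\bigoh(\run^{1/2+\mixexp})$ so that both are $o(\run)$ precisely when $\mixexp\in(0,1/2)$; then run the induction/trapping argument and invoke \cref{prop:aux-conv}, with the rate following verbatim from \cref{thm:rate}.

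The one place where the paper differs from your plan is the martingale step you flag as ``the technical heart.'' Rather than a Freedman or time-uniform Bernstein bound, the paper reuses its \cref{lem:BDG} (Doob's maximal inequality plus Burkholder's $q$-th moment inequality) with $\sdev_\runalt=\Theta(1/\mix_\runalt)$, comparing the running supremum of the martingale against an envelope $(\step\run)^\mu$ for a free exponent $\mu$. Summability of the resulting tail probabilities forces $\mixexp<\mu-1/2-1/\qexp$, while making the bias fit under the \emph{same} envelope forces $1-\mu<\mixexp$; together these give $\mu\in(3/4,1)$ and recover $\mixexp\in(0,1/2)$. This avoids both the log factors and the separate stopping-time machinery you anticipate, and it dovetails directly with the induction from the proof of \cref{thm:sfo} since the noise and bias are already packaged against a common sublinear bound. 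Your Freedman route would also work (the increments are bounded and the conditional variance is $\bigoh(\run^{1+2\mixexp})$, so log factors are harmless), but the paper's moment-based approach is slightly cleaner here because it requires no assumption beyond finite $q$-th moments and plugs into the exact same template as the \ac{SFO} case.
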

Despite the scarcity of information inherent in the payoff-based feedback model, strategically robust equilibria retain not only their convergence properties but also their convergence speed, under the additional structural assumptions on the regularizer and domain, matching that of the \eqref{eq:SFO} feedback setting. This is further discussed along with the proof of the theorem in \cref{app:convergence}.

\begin{figure}[tbp]
\centering
\includegraphics[width=.45\textwidth]{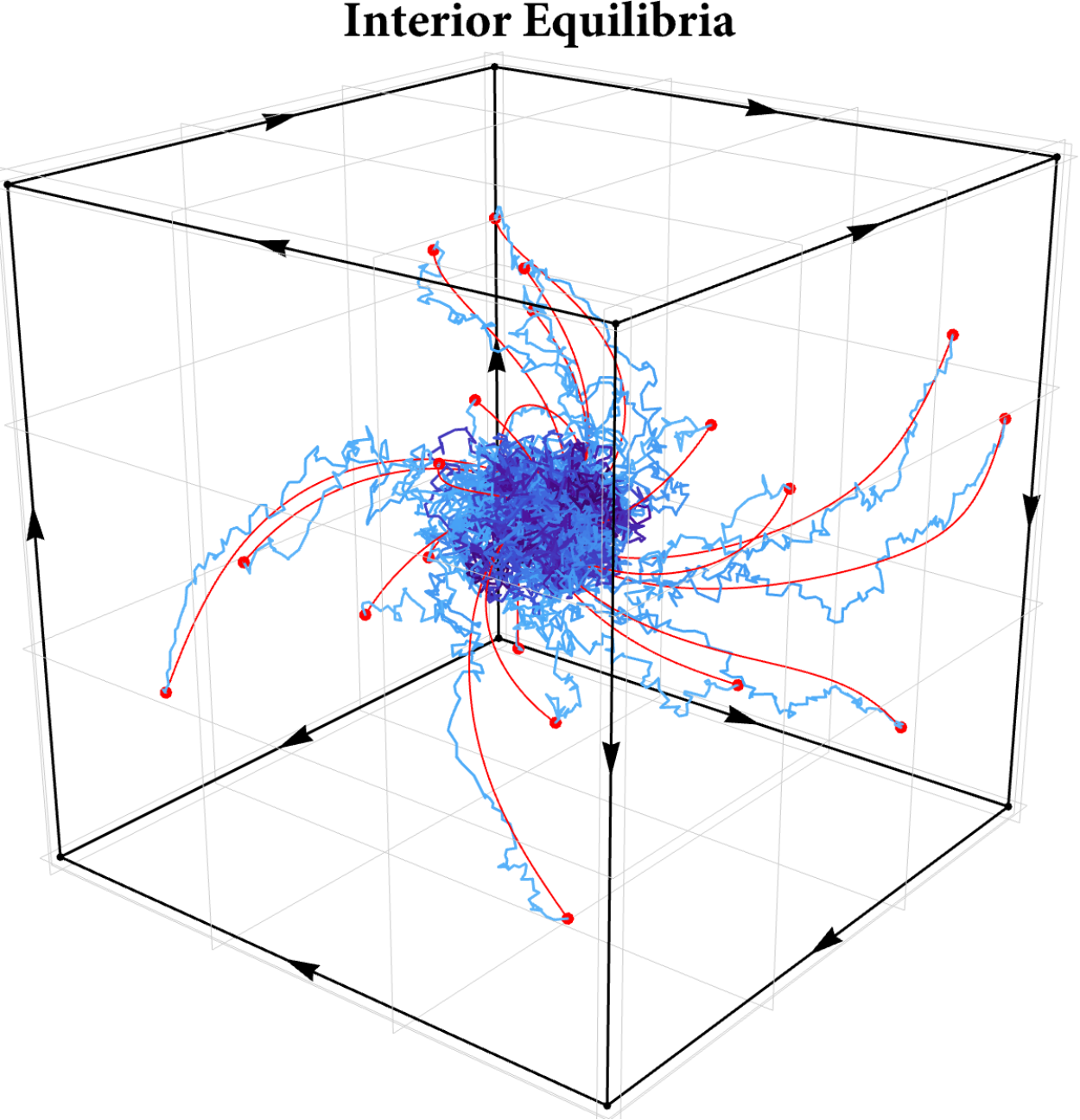}
\quad
\includegraphics[width=.45\textwidth]{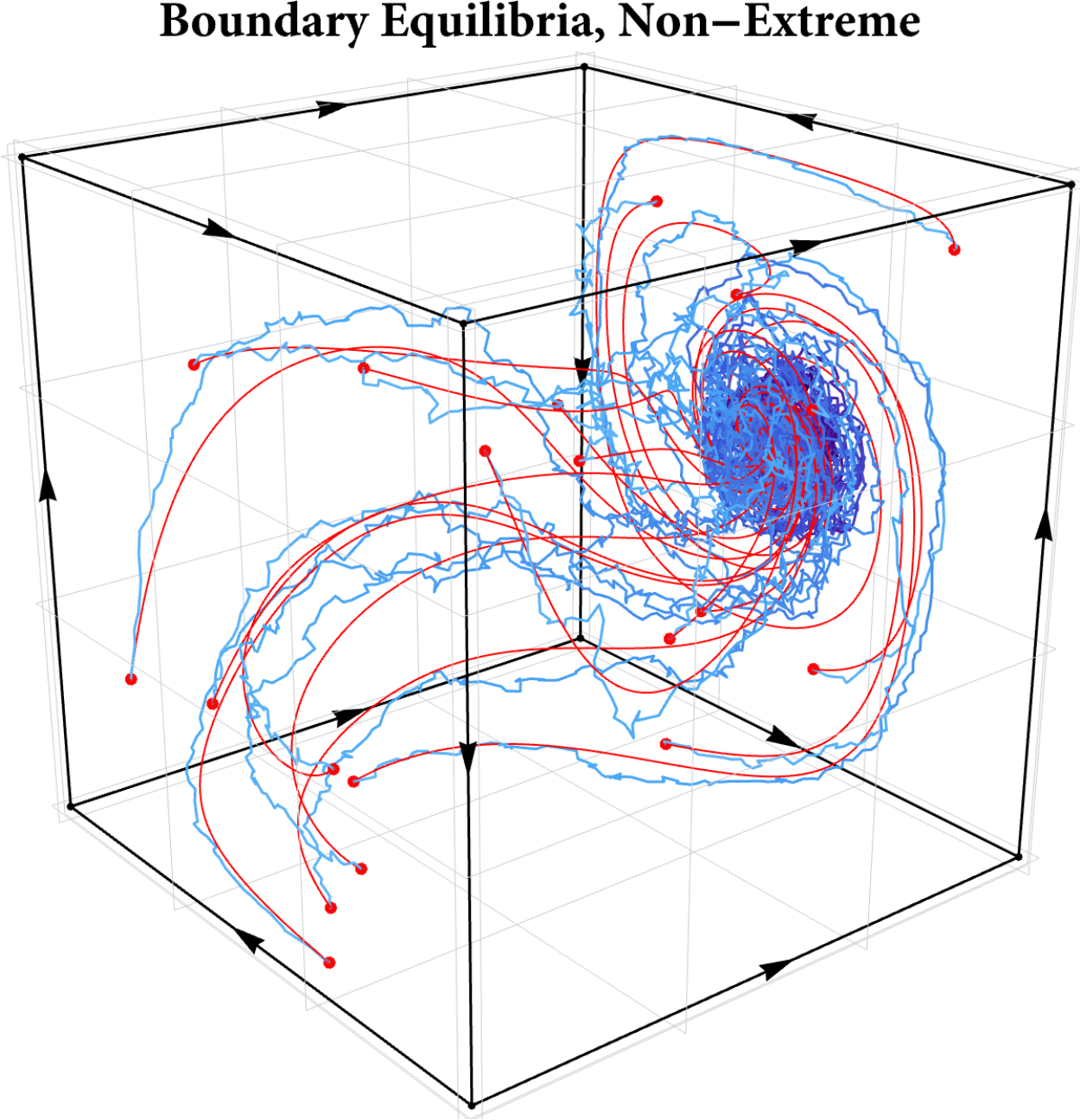}
\\[\medskipamount]
\includegraphics[width=.45\textwidth]{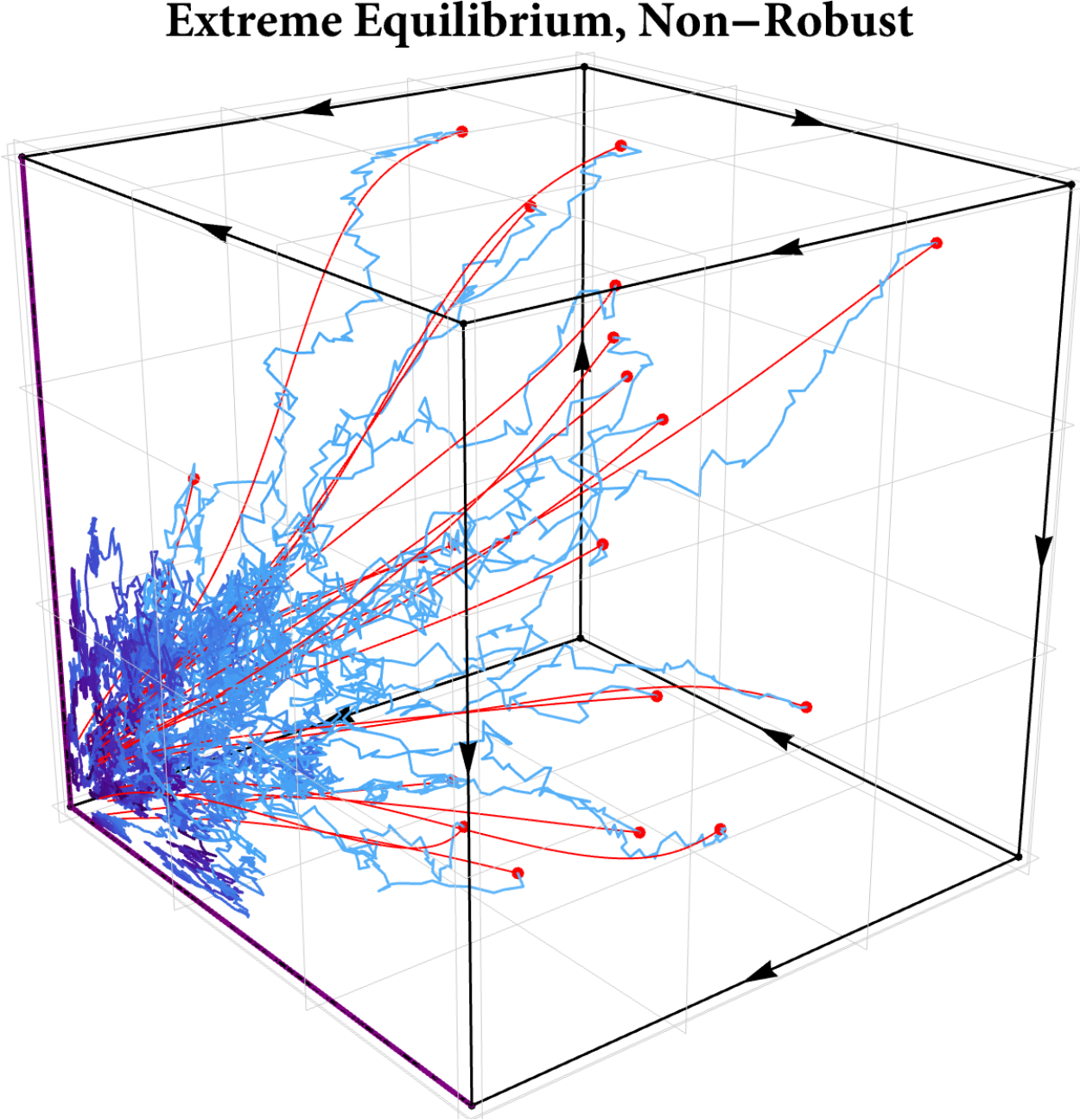}
\quad
\includegraphics[width=.45\textwidth]{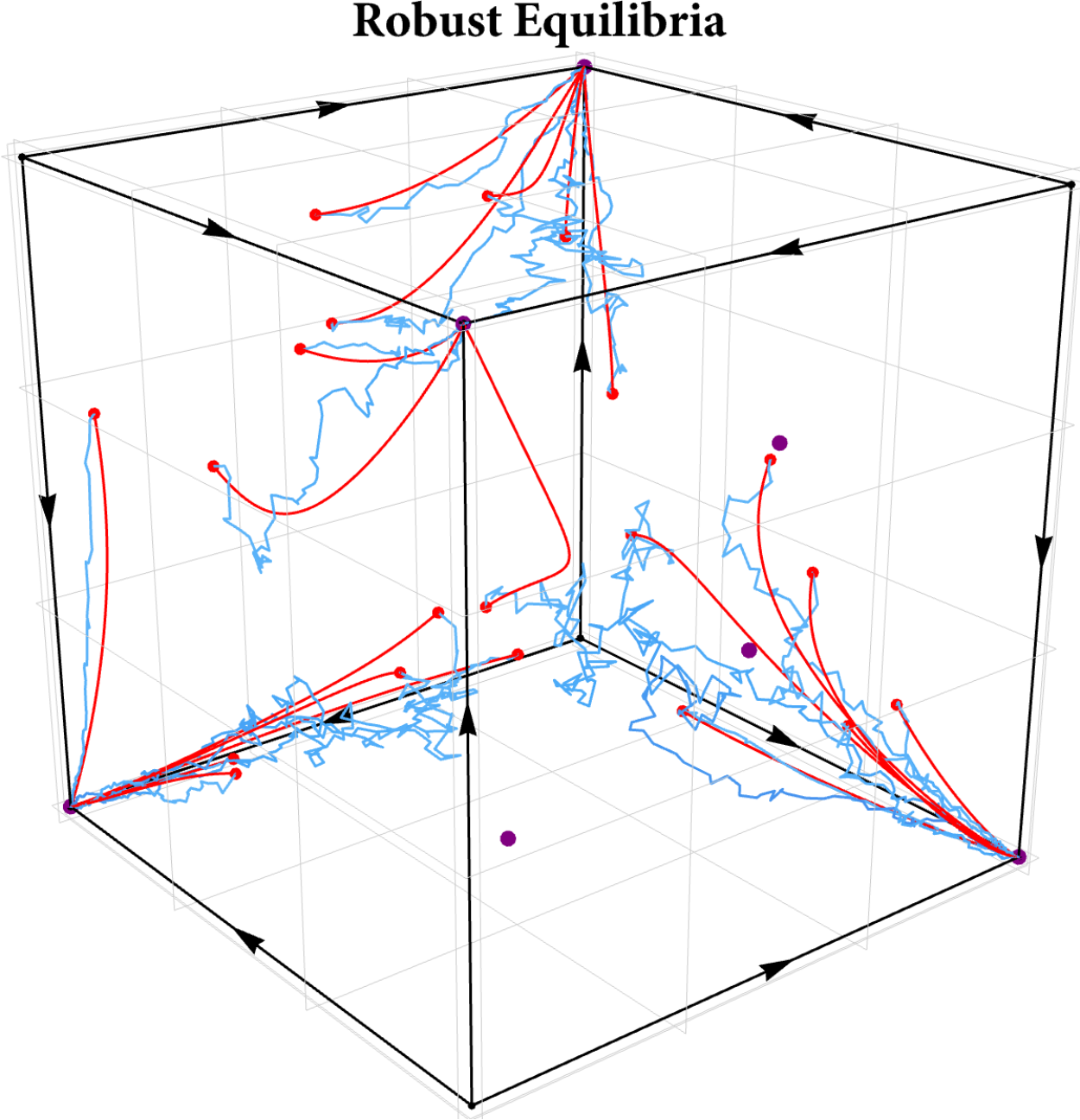}
\caption{%
Convergence and non-convergence to different type of equilibria.
Only robust equilibria are stochastically asymptotically stable under \eqref{eq:FTRL}.}
\label{fig:dynamics}
\end{figure}

\subsection{Convergence landscape beyond strategic robustness}

Having established the robust convergence properties of strategically robust equilibria, a natural question arises:
\textit{Can we expect robust convergence guarantees toward equilibria that lack this structural property?}
As we show below, the answer is not encouraging:
strategic robustness is
essentially necessary for robust convergence.

To make this limitation precise, we move beyond the interior of the strategy space, where \cref{prop:lim-interior} rules out equilibria as potential limit points, and shift our focus to non-robust equilibria on the boundary.
To illustrate the behavior of \eqref{eq:FTRL} in this setting, we construct a game with a unique equilibrium that exhibits fundamentally different long-run behavior depending on the regularizer.

\begin{restatable}{proposition}{CONVERSE}
\label{prop:lim-boundary}
	Consider the 1-player game $\game$ with $\strats = [0,1]$, $\pay(x) = -\frac{3}{4}x^{4/3}$ and $\eq = 0$. Let $\parens{\curr}_{\run\in\N}$ be the iterates of \eqref{eq:FTRL} with $\step <1$, and  $\signal_\run = \payfield(\curr) +\noise_\run$, where $\noise_\run $ are i.i.d. standard normal random variables for all $\run\in\N$. Then, for any initial condition $\init[\dpoint] \in \R$, we have:
	\begin{enumerate}[(i), itemsep=0pt, topsep=0pt]
		\item For $\hreg(x) = x\log x$, it holds \,$\probof*{\lim_{\run\to\infty}\curr=\eq} = 0$.
		\item For $\hreg(x) = -2\sqrt{x}$, it holds \,$\probof*{\lim_{\run\to\infty}\curr=\eq} = 1$.
	\end{enumerate}
\end{restatable}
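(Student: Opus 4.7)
The plan is to recast the dynamics as a one-dimensional stochastic recursion in the dual variable $\curr[\dstate]$ and analyze its asymptotics separately under each regularizer. A direct computation of the two mirror maps gives
\[
    \mirror_{\textup{(i)}}(\dpoint) = e^{\dpoint-1}
        \ \text{for}\ \dpoint \leq 1,
    \qquad
    \mirror_{\textup{(ii)}}(\dpoint) = 1/\dpoint^{2}
        \ \text{for}\ \dpoint \leq -1,
\]
with both maps clipped to $1$ elsewhere. Since both regularizers are steep at $\eq = 0$, we have $\curr \to \eq$ if and only if $\curr[\dstate] \to -\infty$, and \eqref{eq:FTRL} reduces to the scalar recursion $\next[\dstate] = \curr[\dstate] - \step\,\curr^{1/3} + \step\,\noise_\run$. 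The two regularizers produce drifts $-\step\,\curr^{1/3}$ that decay at fundamentally different rates as $\curr[\dstate] \to -\infty$: exponentially in (i), but only polynomially (at rate $1/|\curr[\dstate]|^{2/3}$) in (ii). The central question is whether this drift is strong enough to overcome the $\Theta(\sqrt{\run})$ fluctuations of the accumulated Gaussian noise.

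For part (i), I would show that $\curr[\dstate]$ is recurrent to every compact subset of $\R$, so that $\probof*{\curr[\dstate] \to -\infty} = 0$. Using the decomposition $\curr[\dstate] = \init[\dstate] + \step\,M_{\run-1} - \step\,S_{\run-1}$, where $M_\run = \sum_{k \leq \run} \noise_k$ is a Gaussian random walk and $S_\run = \sum_{k \leq \run} \state_k^{1/3} \geq 0$ is the accumulated drift, the analysis on $\{\curr[\dstate] \to -\infty\}$ splits into two cases: either $S_\infty < \infty$, in which case $\curr[\dstate]$ tracks $\step M_{\run-1}$ up to a bounded offset, so the law of the iterated logarithm (LIL) forces $\limsup \curr[\dstate] = +\infty$, contradicting divergence; or $S_\infty = \infty$, in which case the exponential smallness of $\state_k^{1/3} = e^{(\dstate_k - 1)/3}$ on $\{\curr[\dstate] \to -\infty\}$ caps the growth of $S_\run$ well below the LIL envelope $\step\sqrt{2\run\log\log\run}$ of $M_\run$, again yielding $\limsup \curr[\dstate] > -\infty$. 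The cleanest abstract framing is Lamperti's criterion for one-dimensional Markov chains: the asymptotic ratio $|\curr[\dstate]|\cdot|\mu(\curr[\dstate])|/\step^{2} \to 0$ places the chain squarely in the null recurrent regime.

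For part (ii), I would establish $\probof*{\curr[\dstate] \to -\infty} = 1$ via a Lyapunov argument. A scaling heuristic positing $|\curr[\dstate]| \sim \run^{\alpha}$ gives cumulative drift of order $\run^{1 - 2\alpha/3}$, which matches $\run^{\alpha}$ self-consistently at $\alpha = 3/5 > 1/2$, dominating the $O(\sqrt{\run})$ diffusive fluctuations. To make this rigorous, I would use the Lyapunov function $\phi(\dpoint) = (-\dpoint)^{5/3}$ on $\{\dpoint \ll 0\}$; combining $\curr^{1/3} = 1/|\curr[\dstate]|^{2/3}$ with $\phi'(\dpoint) = -(5/3)(-\dpoint)^{2/3}$ in a Taylor expansion yields
\[
    \exof{\phi(\next[\dstate]) - \phi(\curr[\dstate]) \given \filter_\run}
        \geq c\,\step
\]
for an explicit constant $c > 0$ uniformly in $\curr[\dstate]$ sufficiently negative, so that $\phi(\curr[\dstate]) \to +\infty$ almost surely by a standard submartingale argument---equivalently $\curr[\dstate] \to -\infty$. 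Lamperti's criterion again applies, now in the transient regime, since $|\curr[\dstate]|\cdot|\mu(\curr[\dstate])|/\step^{2} = |\curr[\dstate]|^{1/3}/\step \to +\infty$. An initialization step handles $\init[\dstate] > -1$: in that region the drift is uniformly $-\step$, so the process enters $\{\dpoint \leq -1\}$ in finite time almost surely, at which point the Lyapunov argument takes over.

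The main technical hurdle lies in part (i), specifically in ruling out the scenario $S_\infty = \infty$, which requires delicate control over the interplay between the cumulative drift $S_\run$ and the LIL envelope of the martingale $M_\run$ on the event of purported divergence. Lamperti's theorem on one-dimensional Markov chains with vanishing drift provides the cleanest abstract framework and simultaneously unifies the treatment of both parts through a single criterion separating null recurrence (case (i)) from transience (case (ii)).
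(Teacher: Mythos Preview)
Your proposal correctly identifies Lamperti's recurrence/transience criterion as the decisive tool, computes the mirror maps and drift rates correctly, and draws the right dichotomy (exponentially vanishing drift in (i) versus $|\dpoint|^{-2/3}$ drift in (ii)). The paper's proof is built on the same Lamperti machinery, but packages it differently via a device you do not mention: since the cited version of Lamperti's theorem applies only to \emph{nonnegative} processes, the paper sets $\dstatealt_\run = -\dstate_\run$ and then constructs an auxiliary process $\dstatealt'_\run$ obtained by applying the positive-part operator to the same recursion. A short monotonicity argument (the one-step map $g(z)$ is increasing because $\step<1$) shows $\dstatealt'_\run \geq \dstatealt_\run$ pathwise. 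Lamperti's theorem then applies directly to $\dstatealt'_\run$; recurrence in (i) immediately gives $\probof{\dstatealt'_\run\to\infty}=0$, which transfers to $\dstatealt_\run$ by domination, while transience in (ii) gives $\dstatealt'_\run\to\infty$ almost surely, and a few extra lines push this back down to $\dstatealt_\run$.

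Your LIL decomposition for (i) and the Lyapunov function $\phi(\dpoint)=(-\dpoint)^{5/3}$ for (ii) are legitimate alternatives, and the Lyapunov calculation in particular would go through cleanly. But the hurdle you flag in (i)---controlling $S_\run$ against the LIL envelope on the event $\{S_\infty=\infty\}$---is real and somewhat circular to close without additional machinery, whereas the paper's domination-then-Lamperti route handles both cases uniformly with no case split. If you want to invoke Lamperti directly as you suggest at the end, you should either cite a version valid on all of $\R$ or insert the reflection/domination step; without it there is a small but genuine gap.
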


The core idea of the proof of \cref{prop:lim-boundary} (which we present in detail in \cref{app:convergence}) is to construct a process $\dstatealt_{\run}$ that dominates $\dstate_{\run}$.
Importantly, the process $\dstatealt_{\run}$ can be then viewed as a random walk with a diminishing drift whose rate of decay depends on the choice of regularizer.
Depending on the magnitude of this drift, the process exhibits two sharply contrasting long-term behaviors:
if the drift decays sufficiently fast, the process behaves like a zero-mean random walk and returns infinitely often with probability $1$ (recurrence);
conversely, if the drift diminishes at a slower rate, the process behaves like a random walk with constant drift and escapes to infinity with probability $1$ (transience).

In view of the above, we conclude that strategic robustness cannot be relaxed without compromising convergence guarantees, even when the equilibrium lies on the boundary of the game's action space.
For a graphical illustration, \cf \cref{fig:dynamics}.

\section{Concluding remarks}
\label{sec:discussion}

Our aim in this paper was to examine the robustness of \aclp{NE} in continuous games, under both strategic and dynamic uncertainty.
From a strategic standpoint, the notion of \define{strategic robustness} characterizes those (local) equilibria which remain invariant under small perturbations of the underlying game, and we derived a tight geometric characterization thereof in terms of the variational geometry of the game.
From a dynamic standpoint, we focused on the stability of regularized learning under uncertainty, and we established a deep structural connection between the two notions.
Strategic robustness guarantees dynamic robustness under \eqref{eq:FTRL}, and this implication is essentially tight:
without strategic robustness, dynamic robustness cannot be ensured.
To the best of our knowledge, this is the first study of its kind for continuous games, and we believe that this two-way implication elucidates the delicate interplay between static and dynamic considerations.

\appendix
\crefalias{section}{appendix}
\crefalias{subsection}{appendix}

\numberwithin{equation}{section}	
\numberwithin{lemma}{section}	
\numberwithin{proposition}{section}	
\numberwithin{theorem}{section}	
\numberwithin{corollary}{section}	

\section{Auxiliary results}
\label{app:mirror}

As a preamble to our analysis, we provide some basic properties of the regularizers and the mirror maps, and present some auxiliary results from martingale theory and Markov processes that we will use throughout the sequel.
\KL{Erratum: (i) Add Lipschitz smooth}

\subsection{Mirror maps and results from convex analysis}

In this section, we provide a more detailed discussion of key notions from convex analysis, including mirror maps and regularizers.

To begin, let $(\vecspace, \norm{\cdot})$ be a finite-dimensional normed vector space. Its dual space is denoted by $(\dspace, \dnorm{\cdot})$, where the dual norm is defined as
\begin{equation}
\dnorm{y} \equiv \max\braces{\inner{y}{x} : \norm{x} \leq 1},
\end{equation}
and $\inner{y}{x}$ denotes the canonical pairing between $y \in \dspace$ and $x \in \vecspace$. To maintain consistency with the notation used throughout the paper, we will refer to $\dspace$ as $\dpoints$ from this point onward.

Given a closed convex set $\strats \subseteq \vecspace$ and a point $\test \in \strats$, we define the tangent cone $\tcone(\test)$ and the polar cone $\pcone(\test)$ as follows:
\begin{align}
\label{eq:tcone}
    &\tcone(\test) = \cl\setdef{\dev\in\vecspace}{\test+t\tanvec\in\points \; \text{for some $t>0$}}
\end{align}
and
\begin{align}
\label{eq:pcone}
    &\pcone(\test)= \setdef{\dpoint\in\dpoints}{\inner{\dpoint}{\dev} \leq 0, \, \text{for all $\dev\in\tcone(\test)$}}
\end{align}

For a strongly convex regularizer $\hreg:\points\to \R$, 
the \define{subdifferential} of $\hreg$ at $\point\in\points$ is defined as
\begin{equation}
\label{eq:subdiff}
\subd\hreg(\point)
	\defeq \setdef{\dpoint\in\dpoints}{\hreg(\pointalt) \geq \hreg(\point) + \braket{\dpoint}{\pointalt - \point} \; \text{for all $\pointalt\in\points$}}
\end{equation}
and we denote the \define{domain of subdifferentiability} of $\hreg$ as
\begin{equation}
\label{eq:domdiff}
\proxdom
	= \setdef{\point\in\points}{\subd\hreg(\point) \neq \varnothing}
	\eqstop
\end{equation}
In addition, the mirror map $\mirror$, defined via
\begin{equation}
\label{eq:mirror-app}
    \mirror(\dpoint) = \argmax_{\point\in\points} \{ \braket{\dpoint}{\point} - \hreg(\point) \}
\end{equation}
is single-valued on $\dpoints$, since the maximization problem admits a unique solution, as $\hreg$ is strongly convex. Finally, by the optimality conditions of \eqref{eq:mirror-app}, we get that 
\begin{equation}
    \point
	= \mirror(\dpoint)
	\quad
	\text{if and only if}
	\quad
\dpoint
	\in \subd\hreg(\point)
	\eqstop
\end{equation}
since $0 \in y-\partial\hreg(x)$. This readily implies that $\proxdom = \im\mirror$. In general, we have
\begin{equation}
\relint(\points)
	\subseteq\proxdom\subseteq\points
\eqcomma
\end{equation}
where the first inclusion follows from standard results on the subdifferentiability of convex functions \citep[Chap.~26]{Roc70}, whereas the second is immediate from the definition of $\proxdom$.
This leads to two contrasting regimes:
\begin{enumerate*}
[\upshape(\itshape i\hspace*{1pt}\upshape)]
\item
$\proxdom = \relint(\points)$, in which case $\hreg$ is called \define{steep};
and
\item
$\proxdom = \points$, in which case $\hreg$ is called \define{non-steep}.
\end{enumerate*}

Finally, we include here for future reference an elementary result concerning solid (convex) cones.

\begin{lemma}
\label{lem:cone}
Let $\cone$ be a convex cone in $\R^{\vdim}$ with nonempty topological interior, and let $\tanvec\in\intr(\cone)$.
Then there exists a finitely generated cone $\alt\cone$ such that $\tanvec \in \intr\alt\cone \subseteq \intr\cone$.
\end{lemma}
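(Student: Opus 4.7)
The plan is to exhibit a polyhedral (hence finitely generated) cone $\alt\cone$ explicitly, obtained by taking as generators a symmetric cross of points clustered tightly around $\tanvec$. Since $\tanvec\in\intr(\cone)$, there exists some $r>0$ such that the closed ball $\ball(\tanvec,r)\subseteq\cone$. Fix an orthonormal basis $\bvec_{1},\dotsc,\bvec_{\vdim}$ of $\R^{\vdim}$, pick any $\eps\in(0,r)$, and define the $2\vdim$ vectors
\begin{equation}
\tanvec_{\iCoord}^{+} \defeq \tanvec + \eps\,\bvec_{\iCoord}
\quad\text{and}\quad
\tanvec_{\iCoord}^{-} \defeq \tanvec - \eps\,\bvec_{\iCoord}
\quad\text{for $\iCoord=1,\dotsc,\vdim$.}
\end{equation}
By construction, every $\tanvec_{\iCoord}^{\pm}$ lies in $\ball(\tanvec,r)\subseteq\cone$. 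I will then let $\alt\cone$ be the conical hull of the finite set $\{\tanvec_{\iCoord}^{\pm}\}_{\iCoord=1}^{\vdim}$, so that $\alt\cone$ is finitely generated by definition.

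The first step is to verify the inclusion $\alt\cone\subseteq\cone$. This is immediate: $\cone$ is a convex cone containing each generator $\tanvec_{\iCoord}^{\pm}$, so it contains all nonnegative linear combinations thereof, which is precisely $\alt\cone$. From $\alt\cone\subseteq\cone$ we also obtain $\intr\alt\cone\subseteq\intr\cone$ for free, since if $\point\in\intr\alt\cone$, any open ball around $\point$ contained in $\alt\cone$ is contained in $\cone$.

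The second step is to show $\tanvec\in\intr\alt\cone$, which is the crux of the argument. The key observation is that $\conv\{\tanvec_{\iCoord}^{+},\tanvec_{\iCoord}^{-}\}_{\iCoord=1}^{\vdim}$ is a cross-polytope of ``radius'' $\eps$ centered at $\tanvec$; indeed, averaging the $2\vdim$ generators with equal weights $1/(2\vdim)$ recovers $\tanvec$, and more generally the displacements $\tanvec_{\iCoord}^{\pm}-\tanvec=\pm\eps\,\bvec_{\iCoord}$ span $\R^{\vdim}$. In particular, $\tanvec$ lies in the relative (and hence topological) interior of this convex hull in $\R^{\vdim}$, so some open neighborhood of $\tanvec$ is contained in $\conv\{\tanvec_{\iCoord}^{\pm}\}\subseteq\alt\cone$. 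This yields $\tanvec\in\intr\alt\cone$, and combining with the previous step concludes the proof.

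The only place that requires care is verifying that $\tanvec$ is interior to the cross-polytope; everything else is bookkeeping. Since the $2\vdim$ generators are distributed symmetrically around $\tanvec$ along a basis of $\R^{\vdim}$, this is a short explicit computation — for instance, any point $\tanvec+\eta$ with $\supnorm{\eta}<\eps/\vdim$ can be written as a convex combination of the $\tanvec_{\iCoord}^{\pm}$ by absorbing each coordinate of $\eta$ into the pair $\tanvec_{\iCoord}^{\pm}$ — so no serious obstacle is anticipated.
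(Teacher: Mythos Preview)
Your proof is correct and follows essentially the same approach as the paper: pick finitely many generators inside a ball around $\tanvec$ contained in $\cone$, and check that $\tanvec$ lies in the interior of their conical hull. The only cosmetic difference is that the paper uses $\vdim$ linearly independent generators forming a simplex containing $\tanvec$ in its relative interior (so that the generated cone is a full-dimensional simplicial cone), whereas you use the $2\vdim$ vertices of a cross-polytope centered at $\tanvec$; your choice is more explicit and makes the verification that $\tanvec\in\intr\alt\cone$ a direct computation, at the cost of twice as many generators.
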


\begin{remark*}
We stress here that, by $\intr\cone$ we mean the topological interior of $\cone$ (which is nonempty by assumption), not the relative interior $\relint\cone$ thereof (whis is always nonempty).
\end{remark*}

\begin{proof}
Since $\cone$ is closed and $\tanvec\in\intr\cone$, there exists a closed ball $\borel$ centered at $\tanvec$, which is entirely contained in $\intr\cone$ (an immediate consequence of the fact that $\tanvec$ is well-separated from the boundary $\bd\cone$ of $\cone$).
Since $\borel$ is not contained in any lower-dimensional subspace of $\R^{\vdim}$, it is possible to find inductively $\vdim$ linearly independent vectors $\tanvec_{1},\dotsc,\tanvec_{\vdim} \in \borel$ on the boundary $\bd\borel$ of $\borel$ such that $\tanvec$ is contained in the convex hull $\simplex(\tanvec_{1},\dotsc,\tanvec_{\vdim})$ (and, in particular, in the relative interior thereof).
Thus, letting $\alt\cone \equiv \cone(\tanvec_{1},\dotsc,\tanvec_{\vdim})$ be the polyhedral cone generated by $\tanvec_{1},\dotsc,\tanvec_{\vdim}$, we have $\alt\cone \subseteq \cone$ and $\tanvec \in \intr\alt\cone$ by construction, and our proof is complete.
\end{proof}

\subsection{Statistical bounds and results from probability theory}

In this section, we provide some basic statistical bounds for \eqref{eq:SPSA}, and we present some results that we will use freely in the sequel.
We start with our bounds for \eqref{eq:SPSA}, specifically:

\begin{proposition}
\label{prop:SPSA}
The estimator \eqref{eq:SPSA} enjoys the following bounds:
\begin{equation}
    \dnorm{\exof{\signal_\run \given\filter_\run} - \payfield(\curr)} = \bigoh(\mix_\run) 
    \quad \text{and} \quad
    \dnorm{\signal_\run} = \bigoh(1/\mix_\run)
    \eqstop
\end{equation}
\end{proposition}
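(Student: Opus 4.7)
The second bound is immediate: since each payoff $\pay_\play$ is continuous on the compact set $\strats$, it is uniformly bounded by some constant $M$, and $\|\dir_{\play,\run}\| = 1$ by construction, so \eqref{eq:SPSA} yields the deterministic estimate $\dnorm{\signal_{\play,\run}} \leq d_\play M/\mix_\run$, whence $\dnorm{\signal_\run} = \bigoh(1/\mix_\run)$.

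For the first bound, the plan is the standard one-point gradient estimator analysis, adapted to the per-player perturbation structure. Conditional on $\filter_\run$, the feasibility-adjusted profile $\curr^\mix \defeq (\state_{\playalt,\run}^\mix)_{\playalt}$ is deterministic, and the only randomness is in $\dir_\run = (\dir_{\playalt,\run})_{\playalt}$, whose components are independent across players with $\dir_{\playalt,\run}$ drawn uniformly on $\{\pm e_1,\dots,\pm e_{d_\playalt}\}$. Assuming $\nabla\pay_\play$ is Lipschitz (the natural regularity required for an $\bigoh(\mix_\run)$ rate, as the authors' erratum note acknowledges), I would write the first-order Taylor expansion
\begin{equation}
\pay_\play(\hat\state_\run) = \pay_\play(\curr^\mix) + \mix_\run \braket{\nabla\pay_\play(\curr^\mix)}{\dir_\run} + R_{\play,\run},
\end{equation}
with $|R_{\play,\run}| \leq (L/2)\mix_\run^2 \|\dir_\run\|^2 = \bigoh(\mix_\run^2)$. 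After multiplying by $(d_\play/\mix_\run)\,\dir_{\play,\run}$ and taking $\exof{\cdot\given\filter_\run}$, the constant term vanishes since $\exof{\dir_{\play,\run}\given\filter_\run} = 0$; the cross terms in the split $\braket{\nabla\pay_\play(\curr^\mix)}{\dir_\run} = \sum_\playalt \braket{\nabla_\playalt\pay_\play(\curr^\mix)}{\dir_{\playalt,\run}}$ with $\playalt\neq\play$ vanish by independence; and the diagonal term reduces to $(1/d_\play)\,\nabla_\play\pay_\play(\curr^\mix) = (1/d_\play)\,\payfield_\play(\curr^\mix)$ via the elementary symmetry identity $\exof{\braket{g}{\dir}\dir} = g/d$ for $\dir$ uniform on $\{\pm e_1,\dots,\pm e_d\}$. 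The prefactor $(d_\play/\mix_\run)\cdot\mix_\run = d_\play$ exactly cancels the $1/d_\play$, leaving $\payfield_\play(\curr^\mix)$, while the Taylor remainder contributes $(d_\play/\mix_\run)\cdot\bigoh(\mix_\run^2) = \bigoh(\mix_\run)$.

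It remains to transfer this from $\curr^\mix$ to $\curr$: by construction of the feasibility correction, $\state_{\playalt,\run}^\mix - \state_{\playalt,\run} = (\mix_\run/\radius_\playalt)(\test_\playalt - \state_{\playalt,\run}) = \bigoh(\mix_\run)$, so Lipschitz continuity of $\payfield_\play$ gives $\dnorm{\payfield_\play(\curr^\mix) - \payfield_\play(\curr)} = \bigoh(\mix_\run)$, and assembling across players delivers the claim. The only genuine obstacle is control of the Taylor remainder: with just the $C^1$ regularity stated in \cref{sec:prelims} one would only obtain an $o(\mix_\run)$ bias, so Lipschitz smoothness of the payoffs must be invoked — beyond that the argument is routine bandit-feedback bookkeeping.
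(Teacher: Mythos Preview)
Your proposal is correct and follows essentially the same route as the paper: both arguments use the symmetry identity $\exof{\braket{g}{\dir}\dir} = g/d$ for the diagonal term, kill the constant and cross-player terms via $\exof{\dir_{\play,\run}\given\filter_\run}=0$ and independence, and control the Taylor remainder via Lipschitz smoothness of $\nabla\pay_\play$ (which, as you note, the erratum flags as a missing assumption). The only organizational difference is that the paper expands $\pay_\play(\hat\state_\run)$ directly around $\state_\run$, bundling the feasibility correction into a single perturbation vector $\zeta_\run = \mix_\run(\dir_\run + (\test-\state_\run)/\radius)$ and exploiting that the deterministic part of $\zeta_\run$ vanishes when paired with $\dir_{\play,\run}$ in expectation, whereas you expand around $\state_\run^\mix$ and then transfer to $\state_\run$ in a separate step via Lipschitzness of $\payfield_\play$; both reach the same conclusion with the same regularity.
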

\begin{proof}
    Letting $\zeta_\run\defeq\mix_\run\parens*{\dir_\run+(\test-\state_\run)/\radius}$, we write $\hat\state_{\run} = \state_\run + \zeta_\run$, and we have for player $\play\in\players$:
    \begin{align}
        \pay_\play(\hat\state_{\run})\dir_{\play,\run} & = \pay_\play(\state_\run)\dir_{\play,\run} + \inner{\nabla\pay_\play(\state_\run)}{\zeta_\run}\dir_{\play,\run} + \int_0^1 \inner{\nabla\pay_\play(\state_\run+\tau\zeta_\run)-\nabla\pay_\play(\state_\run)}{\zeta_\run}d\tau \dir_{\play,\run}
    \end{align}
    Now, the middle term can be unfolded as 
    \begin{align}
        \inner{\nabla\pay_\play(\state_\run)}{\zeta_\run}\dir_{\play,\run} & =
        \inner{\nabla_\play\pay_\play(\state_\run)}{\zeta_{\play,\run}}\dir_{\play,\run} + \sum_{\playalt\neq\play}\inner{\nabla_\playalt\pay_\play(\state_\run)}{\zeta_{\playalt,\run}}\dir_{\play,\run}
    \end{align}
    and, noting that $\exof{\dir_{\play,\run}\given\filter_\run} = 0$, we take conditional expectation, and we get:
    \begin{align}
        \exof{\inner{\nabla\pay_\play(\state_\run)}{\zeta_\run}\dir_{\play,\run}\given\filter_\run} 
        &= \mix_\run\exof{\inner{\nabla_\play\pay_\play(\state_\run)}{\dir_{\play\run}}\dir_{\play,\run}\given\filter_\run} = (\mix_\run/d_\play)\payfield_\play(\curr)
    \end{align}
    and
    \begin{equation}
        \exof{\pay_\play(\state_\run)\dir_{\play,\run} \given\filter_\run} = 0
    \end{equation}
    Therefore, we have:
    \begin{align}
        \norm{\exof{\signal_{\play,\run}\given\filter_\run} - \payfield_\play(\curr)}
        & = \norm*{ \exof*{\int_0^1 \inner{\nabla\pay_\play(\state_\run+\tau\zeta_\run)-\nabla\pay_\play(\state_\run)}{\zeta_\run}d\tau \dir_{\play,\run}\given\filter_\run}} = \bigoh(\mix_\run)
    \end{align}
    Now, for the second bound, since $\pay_\play$ is continuous on a compact domain, it is bounded, and we readily get that:
    \begin{equation}
        \dnorm{\signal_{\play,\run}} = \bigoh(1/\mix_\run)
    \end{equation}
as was to be shown.
\end{proof}

Moving forward, we provide some useful results from probability theory.
The first two statements below are adapted from the classical textbook of \citet{HH80}, while the third one is a simplified version of \cite[Theorem 3.2]{Lam60} on the recurrence of a nonnegative Markov process with diminishing drift. Namely, we have:

\begin{theorem}{\textnormal{(Doob’s maximal inequality, \cite[Corollary 2.1]{HH80})}}
\label{thm:Doob}
    If $S_{\run}$ is a martingale, we have:
    \begin{equation}
    \probof*{\sup_{\runalt \leq \run} \abs{S_{\runalt}} > t}
    	\leq \frac{\exof{\abs{S_{\run}}}}{t}
    	\quad
    	\text{for all $t>0$}.
    \end{equation}
\end{theorem}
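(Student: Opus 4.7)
The plan is to reduce the statement to the standard Doob inequality for nonnegative submartingales, and then establish that via a stopping time argument. First, I would observe that since $x \mapsto \abs{x}$ is convex, Jensen's inequality applied conditionally gives
\begin{equation}
\exof{\abs{S_{\run+1}} \given \filter_{\run}} \geq \abs{\exof{S_{\run+1} \given \filter_{\run}}} = \abs{S_{\run}}\eqcomma
\end{equation}
so $\abs{S_{\run}}$ is a nonnegative submartingale adapted to the same filtration. This reduces the task to proving the maximal inequality
\begin{equation}
t \cdot \probof*{\max_{\runalt \leq \run} X_{\runalt} > t} \leq \exof{X_{\run}}
\end{equation}
for any nonnegative submartingale $X_{\run}$, applied to $X_{\run} = \abs{S_{\run}}$.

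Next, I would introduce the stopping time $\runtime \defeq \inf\setdef{\kCount \leq \run}{X_{\kCount} > t}$, with the convention $\runtime = \infty$ if no such index exists. Denoting the event $\event \defeq \braces{\max_{\runalt \leq \run} X_{\runalt} > t} = \braces{\runtime \leq \run}$, the key observation is that on $\event$ we have $X_{\runtime} > t$ by the very definition of $\runtime$. Partitioning over the value of $\runtime$, I would write
\begin{equation}
t \cdot \probof{\event} = \sum_{\kCount=1}^{\run} t \cdot \probof{\runtime = \kCount} \leq \sum_{\kCount=1}^{\run} \exof{X_{\kCount}\,\one_{\braces{\runtime = \kCount}}}\eqstop
\end{equation}

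The main step is then to bound each term by $\exof{X_{\run}\,\one_{\braces{\runtime=\kCount}}}$. Since $\braces{\runtime = \kCount} \in \filter_{\kCount}$ (the stopping time is adapted), the submartingale property of $X_{\run}$ yields $\exof{X_{\run} \given \filter_{\kCount}} \geq X_{\kCount}$, and multiplying by the $\filter_{\kCount}$-measurable indicator $\one_{\braces{\runtime = \kCount}}$ and taking expectations gives $\exof{X_{\kCount}\,\one_{\braces{\runtime = \kCount}}} \leq \exof{X_{\run}\,\one_{\braces{\runtime = \kCount}}}$. Summing over $\kCount = 1,\dotsc,\run$ collapses the right-hand side to $\exof{X_{\run}\,\one_{\event}} \leq \exof{X_{\run}}$, yielding the desired bound.

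Since this is a classical result quoted verbatim from \cite{HH80}, there is no genuine obstacle here; the only delicate point is invoking the submartingale inequality with the correct $\sigma$-algebra, which is handled cleanly by conditioning on $\filter_{\kCount}$ before introducing the indicator. Finally, to recover the statement as written with $\abs{S_{\runalt}}$ in place of $X_{\runalt}$, one substitutes $X_{\run} = \abs{S_{\run}}$ and notes that the strict inequality $\abs{S_{\runtime}} > t$ used above gives the claim with $>$ on the left-hand side of the probability, matching the statement exactly.
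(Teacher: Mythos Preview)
Your proof is correct and is the standard stopping-time argument for Doob's maximal inequality. However, the paper does not actually prove this statement: it is quoted without proof as an auxiliary result from \cite[Corollary 2.1]{HH80}, so there is no ``paper's own proof'' to compare against. Your write-up simply supplies the classical argument that the paper omits by citation.
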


\begin{theorem}{\textnormal{(Burkholder's inequality, \cite[Theorem 2.10]{HH80})}}
\label{thm:BDG}
    Let $S_\run \defeq \sum_{\runalt=1}^\run D_\runalt$, where $(D_\runalt)_{\runalt\in\N}$ is a martingale difference sequence, and let $\qexp \in (1,\infty)$. Then, there exists a constant $\Const$ that depends only on $\qexp$ such that:
    \begin{equation}
        \exof{\abs{S_\run}^\qexp} \leq \Const \exof*{\abs*{\sum_{\runalt=1}^\run D_\runalt^2}^{\qexp/2}}
    \end{equation}
\end{theorem}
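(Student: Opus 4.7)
The plan is to establish this classical Burkholder--Davis--Gundy inequality via the good-$\lambda$ method pioneered by Burkholder. First, by a routine truncation and monotone convergence argument, I would reduce to the case where the martingale differences $(D_\runalt)$ are uniformly bounded, so that all quantities under consideration are integrable. I would then introduce the maximal process $S_\run^\ast \defeq \max_{\runalt\leq\run}\abs{S_\runalt}$ and the quadratic variation $Q_\run \defeq \sum_{\runalt=1}^{\run}D_\runalt^2$, and observe that since $\abs{S_\run} \leq S_\run^\ast$, it suffices to prove the stronger maximal bound $\exof{(S_\run^\ast)^\qexp} \leq \Const\,\exof{Q_\run^{\qexp/2}}$.

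The core of the argument is the \emph{good-$\lambda$ inequality}: there exist constants $\beta > 1$ and $\delta_0 > 0$, together with a function $c(\delta)$ with $c(\delta)\to 0$ as $\delta\to 0$, such that for all $\lambda > 0$ and all $\delta\in(0,\delta_0)$,
\begin{equation*}
\probof*{S_\run^\ast > \beta\lambda,\; Q_\run^{1/2}\leq\delta\lambda} \leq c(\delta)\,\probof*{S_\run^\ast > \lambda}.
\end{equation*}
To prove this, I would introduce the stopping times $\sigma \defeq \inf\braces*{\runalt : \abs{S_\runalt} > \lambda}$ and $\rho \defeq \inf\braces*{\runalt : Q_{\runalt+1} > \delta^2\lambda^2}$, and apply Doob's maximal inequality (\cref{thm:Doob}) to the centered, stopped martingale $M_\runalt \defeq S_{\runalt\wedge\rho} - S_{\sigma\wedge\rho}$, whose quadratic variation is at most $\delta^2\lambda^2$ by construction. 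On the event in question one has $\sigma\leq\rho$, and an additional oscillation past level $\lambda$ of size comparable to $(\beta-1)\lambda$ must occur, producing the small factor $c(\delta)$ after handling the overshoot at time $\sigma$.

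With the good-$\lambda$ inequality in hand, integrating both sides against $\qexp\lambda^{\qexp-1}\,d\lambda$ and invoking Fubini's theorem yields
\begin{equation*}
\exof{(S_\run^\ast)^\qexp} \leq \beta^\qexp c(\delta)\,\exof{(S_\run^\ast)^\qexp} + \Const\,\delta^{-\qexp}\,\exof{Q_\run^{\qexp/2}}\eqcomma
\end{equation*}
whereupon choosing $\delta$ small enough that $\beta^\qexp c(\delta) < 1/2$ lets us absorb the first term on the right into the left-hand side and conclude. \textbf{The main obstacle} is the stopping-time construction itself: one must carefully handle the overshoot of $\abs{S_\sigma}$ past level $\lambda$, which in discrete time is controlled only by the size of the last jump $\abs{D_\sigma}$ and therefore requires a separate truncation step isolating a ``small-jump'' part of order $\delta\lambda$ from a residual handled by Chebyshev. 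An alternative route in the range $\qexp \geq 2$ would apply a Taylor expansion of $x\mapsto\abs{x}^\qexp$ to the telescoping decomposition $\abs{S_\run}^\qexp = \sum_\runalt(\abs{S_\runalt}^\qexp - \abs{S_{\runalt-1}}^\qexp)$, kill the first-order martingale contribution by taking expectations, and close via H\"older's inequality with exponents $\qexp/(\qexp-2)$ and $\qexp/2$; the complementary range $\qexp\in(1,2)$ would then follow by a duality argument against the conjugate exponent $\qexp/(\qexp-1)$.
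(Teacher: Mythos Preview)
The paper does not prove this statement at all: it is stated as an auxiliary result quoted verbatim from \cite[Theorem~2.10]{HH80}, alongside Doob's maximal inequality and Lamperti's theorem, with the explicit remark that ``the first two statements below are adapted from the classical textbook of \citet{HH80}''. There is therefore no proof in the paper to compare your proposal against.

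Your sketch of the good-$\lambda$ argument is a faithful outline of the standard proof (and indeed essentially the one in Hall--Heyde), so nothing is mathematically wrong with it. But in the context of this paper the result is used purely as a black box inside \cref{lem:BDG}, and reproducing its proof is not expected. If anything, note that the paper only ever invokes the inequality for $\qexp > 2$ (cf.\ the proofs of \cref{thm:sfo} and \cref{thm:SPSA}), so even the simpler Taylor-expansion route you mention at the end would suffice for all the applications here, and the more delicate $\qexp\in(1,2)$ range is never needed.
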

\begin{theorem}{\textnormal{(\citet[Theorem 3.2]{Lam60})}}
\label{thm:lamperti}
    Let the non-negative stochastic process $(\curr)_{\run\in\N}$ be defined as 
    \begin{equation}
        \next = \parens*{
        \curr + f(\curr) + \snoise_\run}^{+}
    \end{equation}
    for some $x\mapsto f(x)$ bounded measurable function, and $\snoise_\run$ i.i.d. with $\exof{\snoise_\run} = 0$, $\Var(\snoise_\run) = \variance \neq 0$ and finite $2+\eps$ moment for some $\eps > 0$. Then:
    \begin{enumerate}[(i)]
        \item if $f(x) \leq \variance/2x$ for all $x$ large enough, the process is recurrent in the sense there exists $\const < \infty$ such that
        \begin{equation}
            \probof*{\liminf_{\run\to\infty}\curr \leq \const} =1
        \end{equation}
        \item if $f(x) \geq \theta\variance/2x$ for some $\theta > 1$ and all $x$ large enough, the process is transient in the sense that
        \begin{equation}
            \probof*{\lim_{\run\to\infty}\curr = \infty} =1
        \end{equation}
    \end{enumerate} 
\end{theorem}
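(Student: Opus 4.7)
The plan is to apply the Foster--Lyapunov method, which reduces recurrence and transience questions to the existence of a test function with controlled one-step drift. Since the statement is the classical result of \citet{Lam60}, I sketch the standard argument specialized to the reflected process $\next = (\curr + f(\curr) + \snoise_\run)^{+}$.

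For the transience part (ii), the idea is to exhibit a bounded, increasing Lyapunov function that is an approximate submartingale outside a compact set. I would take $V(x) = 1 - x^{-\alpha}$ for some $\alpha \in (0, \theta-1)$. A second-order Taylor expansion of $V(\next)$ about $\curr$, combined with $\exof{\snoise_\run} = 0$ and $\exof{\snoise_\run^{2}} = \variance$, yields
\begin{equation}
\exof{V(\next) - V(\curr) \given \curr = x} \geq \frac{\alpha\,\variance}{2}\,x^{-\alpha-2}\bigl(\theta - (\alpha+1)\bigr) + \bigoh(x^{-\alpha-2-\eps})
\end{equation}
for $x$ large, where the remainder is controlled by the $2+\eps$ moment assumption on $\snoise_\run$ via Taylor's theorem with integral remainder and a truncation of the noise tail. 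Since the right-hand side is strictly positive outside a compact set and $V$ is bounded above by $1$, Doob's submartingale convergence theorem together with a standard escape-probability estimate gives $V(\curr) \to 1$ almost surely, and strict monotonicity of $V$ then forces $\curr \to \infty$.

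For the recurrence part (i), the situation is more delicate because $f(x) \leq \variance/(2x)$ is the exact borderline case. The natural candidate is $V(x) = \log x$, for which the same expansion yields
\begin{equation}
\exof{V(\next) - V(\curr) \given \curr = x} \leq \frac{f(x)}{x} - \frac{\variance}{2x^{2}} + \bigoh(x^{-2-\eps}) = \bigoh(x^{-2-\eps})
\end{equation}
so $V(\curr)$ is no longer a strict supermartingale, but its excess drift is summable in $x$. From here one argues that $\log\curr$ behaves, on long time scales, like a driftless random walk with finite-variance increments; such a walk is recurrent on the real line and hence returns to any bounded interval infinitely often. Transferring this conclusion back to $\curr$ supplies the constant $\const$ with $\probof*{\liminf_{\run\to\infty} \curr \leq \const} = 1$.

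The main obstacle is precisely the borderline nature of (i): since $V(x) = \log x$ is unbounded below and the drift decays only polynomially in $x$, the standard Foster criterion for positive recurrence does not apply directly. To close this gap I would follow Lamperti's refined stopping-time argument, decomposing $\log\curr$ into a martingale plus a summable remainder (via a Doob decomposition in conjunction with Burkholder's inequality and the $2+\eps$ moment assumption) and using the decomposition to control the tail of the reflection at zero. The transience half (ii), by comparison, is fairly clean: the Lyapunov function is bounded, and submartingale convergence does most of the heavy lifting.
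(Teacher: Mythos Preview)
The paper does not prove this theorem at all: it is stated as an auxiliary result imported directly from \citet[Theorem~3.2]{Lam60}, with the surrounding text explicitly saying ``the third one is a simplified version of \cite[Theorem~3.2]{Lam60} on the recurrence of a nonnegative Markov process with diminishing drift.'' It is then used as a black box in the proof of \cref{prop:lim-boundary}. So there is no proof in the paper to compare your proposal against.

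That said, your sketch is broadly faithful to Lamperti's original argument. The Lyapunov functions $V(x)=1-x^{-\alpha}$ for transience and $V(x)=\log x$ for the borderline recurrent case are exactly the test functions Lamperti uses, and your identification of the critical exponent $\alpha<\theta-1$ and of the borderline nature of case~(i) is correct. One point worth tightening: in the recurrence part, the fact that the drift of $\log\curr$ is $\bigoh(x^{-2-\eps})$ does not by itself yield recurrence via a ``driftless random walk'' analogy, because the increment variance of $\log\curr$ is also decaying (it is $\Theta(x^{-2})$). Lamperti's actual argument for (i) does not go through $\log x$ directly; he uses a more delicate family of test functions and a supermartingale-based hitting-time estimate. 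Your acknowledgment that ``the standard Foster criterion does not apply directly'' is on target, but the fix you propose (Doob decomposition plus Burkholder) is not quite the route Lamperti takes, and would need more care to make rigorous at the critical drift rate.
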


\section{Analysis and results for strategic robustness}
\label{app:strat}

Our aim in this appendix is to provide a detailed proof for \cref{thm:geometry}, which we restate below for convenience.

\GEOM*

\begin{proof}
We will go full-circle by showing \ref{itm:robust} $\implies$ \ref{itm:growth} $\implies$ \ref{itm:cone} $\implies$ \ref{itm:robust}.

\para{\ref{itm:robust} $\implies$ \ref{itm:growth}}
Suppose that $\eq\in\strats$ is a strategically robust equilibrium, and let $\eps > 0$ be such that $\eq$ is an equilibrium of any $\pertgame$ with $ \dist\bigl(\game,\pertgame\bigr) \leq \eps$.

For the sake of contradiction, suppose that there exists $\dev\neq 0$, $\dev \in \tcone(\eq)$ such that 
\begin{equation}
\label{eq:dual-1}
\braket{\payfield(\eq)}{\dev}
	= 0
	\eqstop
\end{equation}
which readily implies that there exists player $\play\in\players$ and $\dev_\play\neq 0$, $\dev_\play\in\tcone_\play(\eq_\play)$, such that
\begin{equation}
\braket{\payfield_\play(\eq)}{\dev_\play}
	= 0
\eqstop
\end{equation}
Fix some $\dpoint_\play \in \dpoints_\play$ such that $\braket{\dpoint_\play}{\dev_\play} > 0$, and let $\dpoint \equiv (\dpoint_1,\dots,\dpoint_\nPlayers) \in \dpoints$ with $\dpoint_\playalt \equiv 0$ for $\playalt\neq\play$, $\playalt\in\players$.
Using \eqref{eq:dual-1} and the definition of $\dpoint$, we get that $\braket{\payfield(\eq) + \eps\dnorm{\dpoint}^{-1}\dpoint}{\dev} > 0$, and therefore, there exists $\test\in\strats$ such that:
\begin{equation}
\label{eq:dual-3}
\braket{\payfield(\eq) + \eps\dnorm{\dpoint}^{-1}\dpoint}{\test-\eq}
	> 0
\end{equation}
Now, define the game $\pertgame$ with payoff functions
\begin{equation}
\pertpay_\play(\strat) \defeq \pay_\play(\point)
	+ \eps\dnorm{\dpoint}^{-1}\braket{\dpoint_\play}{\point_\play-\eq_\play}
\end{equation}
and $\pertpay_\playalt \equiv \pay_\playalt$ for all $\playalt\in\players$, $\playalt\neq\play$.
Then, the individual gradient vector of player $\play\in\players$ is given by
\begin{equation}
\pertpayfield_\play(\point) = \payfield_\play(\point) + \eps\dnorm{\dpoint}^{-1}\,\dpoint_\play
\end{equation}
and the distance between $\game$ and $\pertgame$ is equal to
\begin{equation}
\dist\bigl(\game,\pertgame\bigr)
	= \sup_{\strat\in\strats}\dnorm{\payfield(\strat)-\pertpayfield(\strat)} = \eps\dnorm{\dpoint}^{-1}\dnorm{\dpoint}
	= \eps.
\end{equation}
Finally, we conclude that $\eq \in \strats$ is not an equilibrium of $\pertgame$, since for $\test\in\strats$ as above, we have
\begin{align}
\braket{\pertpayfield(\eq)}{\test-\eq} = \braket{\payfield(\eq) + \eps\dnorm{\dpoint}^{-1}\dpoint}{\test-\eq} > 0 
\end{align}
where the last inequality holds by \eqref{eq:dual-3}.
Thus, we arrive at a contradiction, \ie $\braket{\payfield(\eq)}{\dev} < 0$ for all $\dev\neq 0$, $\dev\in\tcone(\eq)$.
Finally, since $\setdef{\dev\in\vecspace}{\dev\in\tcone(\eq), \norm{\dev} = 1}$ is compact, we readily obtain
\begin{equation}
\sup
\setdef{\braket{\payfield(\eq)}{\dev}}{\dev\in\tcone(\eq), \norm{\dev}=1}
	\leq -\robust
\end{equation}
for some $\robust > 0$.
Therefore, for all $\dev\in\tcone(\eq)$, we have:
\begin{equation}
\braket{\payfield(\eq)}{\dev} \leq -\robust\norm{\dev}
\end{equation}
as was to be shown.

\para{\ref{itm:growth} $\implies$ \ref{itm:cone}}
First, note that the $\dnorm{\cdot}-$ ball of  radius $\eps > 0$ centered at $\payfield(\eq)$ can be written as:
\begin{equation}
\ball_{\eps}\parens*{\payfield(\eq)} = \payfield(\eq) + \eps\,\ball_{1}\parens*{0}
\end{equation}
where $\ball_\eps(\dpoint) \defeq \setdef*{\dpointalt \in \dpoints}{\dnorm{\dpointalt - \dpoint} \leq \radius}$ for $\dpoint\in\dpoints$.
Now, take any $y \in \ball_{1}\parens*{0}$ and $\dev\in\tcone(\eq)$.
Then, for $\eps > 0$ we have
\begin{align}
\braket{\payfield(\eq) + \eps \dpoint}{\dev} &= \braket{\payfield(\eq)}{\dev} + \eps \braket{\dpoint}{\dev}
	\notag\\
	&\leq -\robust\norm{\dev} + \eps\dnorm{\dpoint}\norm{\dev}
	\notag\\
	&\leq -(\robust - \eps)\norm{\dev}
\end{align}
Setting $\eps = \robust/2$, we have for all $\dev \in \tcone(\eq)$
\begin{align}
\braket{\payfield(\eq) + (\robust/2) \dpoint}{\dev} & < -(\robust/2) \norm{\dev}
\end{align}
which implies that $\payfield(\eq) + (\robust/2) \dpoint \in\pcone(\eq)$.
Thus, we readily get that $\ball_{\robust/2}(\payfield(\eq)) \subseteq \pcone(\eq)$, \ie $\payfield(\eq)\in\intr(\pcone(\eq))$.

\para{\ref{itm:cone} $\implies$ \ref{itm:robust}}
Suppose that $\payfield(\eq)\in\intr(\pcone(\eq))$.
First, it directly implies that $\braket{\payfield(\eq)}{\dev} \leq 0$, for all $\dev \in \tcone(\eq)$, \ie $\eq$ is an equilibrium of $\game$.
In addition, there exists $\eps >0$ such that $\ball_{\eps}\parens*{\payfield(\eq)} \subseteq \pcone(\eq)$.
Therefore, for any game $\pertgame$ with $\dist\bigl(\game,\pertgame\bigr) < \eps$, we immediately get that $\pertpayfield(\eq) \in \pcone(\eq)$, which implies that $\eq\in\strats$ is an equilibrium of $\pertgame$.
Thus, $\eq$ is strategically robust, and our proof is complete.
\end{proof}

\section{Analysis and results for dynamic robustness}
\label{app:convergence}

In this appendix, we provide detailed proofs of the statements presented in \cref{sec:convergence}, along with several intermediate results that will serve as key building blocks.

\subsection{Intermediate results}

We begin this section with two results establishing sufficient conditions for convergence, followed by a high-probability deviation bound for martingales.
We conclude with a variant of Farkas' Lemma, which will be instrumental in deriving convergence rates.

\begin{proposition}
\label{prop:aux-conv}
Let $\eq \in \strats$ and $\devs \defeq\braces{\dev_1,\dots,\dev_m} \subseteq \vecspace$ be a set of unit vectors, such that any $\dev \in \tcone(\eq)$ can be written as $\dev = \sum_{j=1}^m \coef_j \dev_j$ for some $\coef_j \geq 0$.
If\, $\lim_{\run\to\infty}\braket{\dpoint_\run}{\dev_j} = -\infty$ for all $\dev_j\in\devs$, then $\lim_{\run\to\infty}\mirror(\curr[\dpoint]) = \eq$.
\end{proposition}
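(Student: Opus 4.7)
My plan is to proceed by contradiction: assume $\mirror(\dpoint_\run) \not\to \eq$ and extract a subsequence $(\run_k)$ with $\norm{\mirror(\dpoint_{\run_k}) - \eq} \geq \eps$ for some $\eps > 0$. The key tool is the variational definition of $\mirror$: since $\mirror(\dpoint_\run)$ maximizes $\braket{\dpoint_\run}{\cdot} - \hreg(\cdot)$ over $\strats$, testing against $\eq$ yields the inequality
\[
\braket{\dpoint_\run}{\mirror(\dpoint_\run) - \eq} \geq \hreg(\mirror(\dpoint_\run)) - \hreg(\eq).
\]
Since $\hreg$ is $\R$-valued and lower semi-continuous on the compact set $\strats$, it attains its minimum there, so the right-hand side is uniformly bounded below by some constant $-M > -\infty$. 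The entire strategy reduces to showing that, along the chosen subsequence, the left-hand side is driven to $-\infty$, which gives the desired contradiction.

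To drive the left-hand side to $-\infty$, I will use the generating assumption on $\devs$. By convexity of $\strats$, the segment from $\eq$ to $\mirror(\dpoint_{\run_k})$ lies in $\strats$, so $\mirror(\dpoint_{\run_k}) - \eq \in \tcone(\eq)$, and there exist coefficients $\coef_{j,k} \geq 0$ with $\mirror(\dpoint_{\run_k}) - \eq = \sum_{j=1}^{m} \coef_{j,k} \dev_j$. The unit-norm property of the $\dev_j$ together with the triangle inequality gives $\sum_j \coef_{j,k} \geq \norm{\mirror(\dpoint_{\run_k}) - \eq} \geq \eps$, so at each $k$, at least one coefficient is at least $\eps/m$. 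A pigeonhole step over the finite index set then produces a further subsequence (still indexed by $k$) and a fixed index $j^\ast$ such that $\coef_{j^\ast, k} \geq \eps/m$ throughout. By the divergence hypothesis, every $\braket{\dpoint_{\run_k}}{\dev_j}$ is eventually negative, so combined with $\coef_{j,k} \geq 0$, each term in the expansion
\[
\braket{\dpoint_{\run_k}}{\mirror(\dpoint_{\run_k}) - \eq} = \sum_j \coef_{j,k} \braket{\dpoint_{\run_k}}{\dev_j}
\]
is non-positive for large $k$. Keeping only the $j^\ast$-term, I obtain $\braket{\dpoint_{\run_k}}{\mirror(\dpoint_{\run_k}) - \eq} \leq (\eps/m)\, \braket{\dpoint_{\run_k}}{\dev_{j^\ast}} \to -\infty$, contradicting the lower bound $-M$.

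The main obstacle is the coefficient decomposition: the $\coef_{j,k}$ depend on the particular point $\mirror(\dpoint_{\run_k})$ and could in principle spread the mass of $\mirror(\dpoint_{\run_k}) - \eq$ unevenly across generators, so we cannot simply fix a direction in advance. The triangle-inequality-plus-pigeonhole argument is precisely what resolves this: it guarantees that at least one fixed generator receives a uniformly positive share of the total mass along a subsequence, and this is exactly what is needed to transfer the scalar divergence $\braket{\dpoint_{\run_k}}{\dev_{j^\ast}} \to -\infty$ to the full inner product on the left-hand side of the variational inequality.
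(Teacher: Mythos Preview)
Your proposal is correct and follows essentially the same route as the paper's proof: both argue by contradiction via a subsequence bounded away from $\eq$, invoke the variational/subdifferential characterization of $\mirror$ to get a lower bound on $\braket{\dpoint_{\run_k}}{\mirror(\dpoint_{\run_k})-\eq}$, decompose the displacement along the generators $\dev_j$, and use the triangle inequality $\sum_j \coef_{j,k} \geq \norm{\mirror(\dpoint_{\run_k})-\eq}$ to force the left-hand side to $-\infty$. The only cosmetic difference is that the paper bounds the sum by $(\max_{j'}\braket{\dpoint_{\run_k}}{\dev_{j'}})\sum_j \coef_{j,k}$ directly---since the maximum over the finitely many $j'$ also tends to $-\infty$---thereby avoiding your pigeonhole pass to a further subsequence; both variants are equally valid.
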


\begin{proof}
Denote $\mirror(\curr[\dpoint])$ by $\curr$, and suppose that $\limsup_{\run\to\infty} \norm{\curr-\eq} > 0$.
Then, there exists a subsequence $(\state_{\run_\runalt})_{\runalt\in\N}$ such that $\norm{\state_{\run_\runalt}-\eq}$ stays bounded away from zero, \ie $\norm{\state_{\run_\runalt}-\eq} \geq \const$ for some $\const >0$ and all $\runalt\in\N$.
Since $\dpoint_{\run_\runalt} \in\partial\hreg(\point_{\run_\runalt})$, we readily get for $\dev_{\run_\runalt} = (\point_{\run_\runalt}-\eq)/\norm{\point_{\run_\runalt}-\eq}$:
\begin{align}
\hreg(\eq)
	&\geq \hreg(\point_{\run_\runalt}) +\braket{\dpoint_{\run_\runalt}}{\eq - \point_{\run_\runalt}}
	\notag\\
	&= \hreg(\point_{\run_\runalt}) -\braket{\dpoint_{\run_\runalt}}{\dev_{\run_\runalt}}\norm{\point_{\run_\runalt}-\eq}
	\notag\\
	&\geq \min\hreg -\braket{\dpoint_{\run_\runalt}}{\dev_{\run_\runalt}}\norm{\point_{\run_\runalt}-\eq}
	\eqstop
\end{align} 
Now, we have $\dev_{\run_k} \in\tcone(\eq)$, and by assumption, $\dev_{\run_\runalt} = \sum_{j=1}^m \coef_{j,\runalt} \dev_j$ for some coefficients $\coef_{j,\runalt} \geq 0$.
Therefore, the above inequality can be written as:
\begin{align}
\hreg(\eq)
	&\geq \min\hreg - \norm{\point_{\run_\runalt}-\eq}\sum_{j=1}^m \coef_{j,\runalt}\braket{\dpoint_{\run_\runalt}}{\dev_{j}}
	\notag\\
	&\geq \min\hreg - \parens*{\max_{j'} \braket{\dpoint_{\run_\runalt}}{\dev_{j'}}}\norm{\point_{\run_\runalt}-\eq}\sum_{j=1}^m \coef_{j,\runalt}
\end{align}
Now, note that by the definition of $\dev_{\run_\runalt}$, we have $\norm{\dev_{\run_\runalt}} = 1$, and, thus:
\begin{align}
	1= \norm{\dev_{\run_\runalt}} = \norm*{\sum_{j=1}^m \coef_{j,\runalt} \dev_j} \leq \sum_{j=1}^m \coef_{j,\runalt} \norm{\dev_j} = \sum_{j=1}^m \coef_{j,\runalt}
\end{align}
where we used that $\norm{\dev_j} = 1$ for all $j$.
Now, since $\lim_{\run\to\infty}\braket{\dpoint_\run}{\dev_j} = -\infty$ for all $\dev_j\in\devs$, it readily implies that 
\begin{equation}
\lim_{\run\to\infty}\max_{j'}\braket{\dpoint_\run}{\dev_{j'}} = -\infty
\end{equation}
Therefore, for all $\runalt$ large enough, we have $- {\max_{j'} \braket{\dpoint_{\run_\runalt}}{\dev_{j'}}} >0$, and, using that $\sum_{j=1}^m \coef_{j,\runalt} \geq 1$ and $\norm{\state_{\run_\runalt}-\eq} \geq \const$ for all $\runalt\in\N$, we obtain:
\begin{align}
\hreg(\eq) &\geq \min\hreg + \parens*{-\max_{j'} \braket{\dpoint_{\run_\runalt}}{\dev_{j'}}}\norm{\point_{\run_\runalt}-\eq}\sum_{j=1}^m \coef_{j,\runalt}
	\notag\\
	&\geq \min\hreg - \const\,{\max_{j'} \braket{\dpoint_{\run_\runalt}}{\dev_{j'}}}
\end{align}
Finally, letting $\runalt\to\infty$, we get that $\hreg(\eq)\geq\infty$, which is a contradiction.
Thus, the result follows.
\end{proof}

Finally, using the above proposition, we establish the following corollary.
\begin{corollary}
Let $\sublevel(M)\defeq\braces{\dpoint\in\dpoints: \max_{\dev\in\devs}\braket{\dpoint}{\dev} < -M}$ for $M>0$.
Then, for any $\eps >0$, there exists $M_\eps > 0$ such that for all $\dpoint \in \sublevel(M_\eps)$ it holds $\norm{\eq-\mirror(\dpoint)} < \eps$.
\end{corollary}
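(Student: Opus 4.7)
The plan is to derive this corollary as a direct consequence of the preceding proposition, using a straightforward contradiction/sequential compactness argument. The intuition is that the sublevel sets $\sublevel(M)$ are designed precisely so that membership in them forces every one of the finitely many inner products $\braket{\dpoint}{\dev_j}$ to be uniformly small, which is exactly the hypothesis that drove the previous proposition.

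First I would negate the conclusion: assume there is some $\eps_0 > 0$ such that for every $M > 0$ one can find $\dpoint \in \sublevel(M)$ with $\norm{\eq - \mirror(\dpoint)} \geq \eps_0$. Instantiating this with $M = n$ for each $n \in \N$, I obtain a sequence $(\dpoint_n)_{n\in\N}$ with $\dpoint_n \in \sublevel(n)$ and $\norm{\eq - \mirror(\dpoint_n)} \geq \eps_0$ for all $n$. The key elementary observation is that, since the defining inequality of $\sublevel(M)$ involves the maximum over $\devs$, for every $\dev_j \in \devs$ we have
\begin{equation}
\braket{\dpoint_n}{\dev_j} \leq \max_{\dev\in\devs}\braket{\dpoint_n}{\dev} < -n,
\end{equation}
so that $\braket{\dpoint_n}{\dev_j} \to -\infty$ as $n \to \infty$, separately for each of the finitely many directions $\dev_j$.

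At this point the preceding proposition applies verbatim, with the sequence $(\dpoint_n)$ in place of the abstract trajectory considered there, and yields $\mirror(\dpoint_n) \to \eq$. This contradicts the standing lower bound $\norm{\eq - \mirror(\dpoint_n)} \geq \eps_0$, completing the proof. There is no real obstacle here beyond recognizing that the $\max$ in the definition of $\sublevel$ dominates each coordinate $\braket{\dpoint}{\dev_j}$, so that uniform smallness in the max translates to simultaneous divergence to $-\infty$ along every generator. No new analytic estimate, no regularity of $\hreg$ beyond what was already invoked in the proposition, and no structural assumption on $\strats$ beyond what is implicit in the setup are needed.
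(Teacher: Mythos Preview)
Your proposal is correct and follows essentially the same route as the paper: both argue by contradiction, extract a sequence $\dpoint_n \in \sublevel(n)$ with $\norm{\eq - \mirror(\dpoint_n)} \geq \eps_0$, note that $\braket{\dpoint_n}{\dev_j} < -n \to -\infty$ for each $\dev_j \in \devs$, and invoke the preceding proposition to reach a contradiction. Your write-up is somewhat more explicit about why the max in the definition of $\sublevel$ controls each individual inner product, but the argument is the same.
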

\begin{proof}
Suppose it does not hold.
Then, there exists $\eps > 0$ such that for any $\run\in\N$, one can find $\dpoint_\run\in\dpoints$ such that $\max_{\dev\in\devs}\braket{\dpoint_\run}{\dev} < -\run$ and $\norm{\eq-\mirror(\dpoint_\run)} \geq \eps$.
Taking $\run\to\infty$ leads to a contradiction with \cref{prop:aux-conv}.
\end{proof}

\begin{lemma}
\label{lem:BDG}
Let $S_\run\defeq \step\sum_{\runalt=1}^{\run} \snoise_\runalt$ be a martingale with respect to a filtration $(\filter_\run)_{\run\in\N}$ such that $\exof*{\abs{\snoise_\run}^\qexp} \leq \sdev_\run^\qexp$ for all $\run\in\N$ and some $\qexp \geq 2$.
Then, for any $\mu\in (0,1)$ and $\const > 0$, it holds:
\begin{equation}
\probof*{\sup_{\runaltalt\leq\run}\,\abs{S_\runaltalt} > \const(\step\run)^\mu} 
	\leq \Const_\qexp \frac{\step^{\qexp(1-\mu)}\sum_{\runalt=1}^\run \sdev_\runalt^\qexp}{\run^{1+\qexp(\mu-1/2)}}
\end{equation}
where $\Const_\qexp$ is a constant that depends only on $\const$ and $\qexp$.
\end{lemma}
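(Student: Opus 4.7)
The plan is a three-ingredient sandwich: Doob's maximal inequality to trade the supremum for the terminal value, Burkholder's (BDG) inequality to control the $\qexp$-th moment of $S_\run$ by its quadratic variation, and a power-mean (Jensen) bound to convert the $L^{\qexp/2}$-norm of $\sum \snoise_\runalt^2$ into $\sum \sdev_\runalt^\qexp$.

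Since $x\mapsto\abs{x}^\qexp$ is convex, $\abs{S_\run}^\qexp$ is a nonnegative submartingale. Applying Doob's maximal inequality (\cref{thm:Doob}, in its $L^\qexp$ form) to this submartingale yields
\begin{equation}
\probof*{\sup_{\runaltalt\leq\run}\abs{S_\runaltalt} > \const(\step\run)^\mu}
= \probof*{\sup_{\runaltalt\leq\run}\abs{S_\runaltalt}^\qexp > \const^\qexp(\step\run)^{\mu\qexp}}
\leq \frac{\exof{\abs{S_\run}^\qexp}}{\const^\qexp (\step\run)^{\mu\qexp}}.
\end{equation}
So it remains to bound $\exof{\abs{S_\run}^\qexp}$.

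Writing $S_\run=\sum_{\runalt=1}^\run D_\runalt$ with $D_\runalt\defeq\step\snoise_\runalt$, the sequence $(D_\runalt)$ is a martingale difference sequence, so Burkholder's inequality (\cref{thm:BDG}) gives a constant $\Const'_\qexp$ such that
\begin{equation}
\exof{\abs{S_\run}^\qexp}
\leq \Const'_\qexp\, \exof*{\parens*{\step^{2}\sum_{\runalt=1}^\run \snoise_\runalt^{2}}^{\qexp/2}}
= \Const'_\qexp\, \step^{\qexp}\, \exof*{\parens*{\sum_{\runalt=1}^\run \snoise_\runalt^{2}}^{\qexp/2}}.
\end{equation}
Since $\qexp\geq 2$, the function $t\mapsto t^{\qexp/2}$ is convex, and Jensen's inequality applied to the uniform measure on $\{1,\dots,\run\}$ (equivalently, the power-mean inequality) yields $\parens*{\sum_{\runalt=1}^{\run}\snoise_\runalt^{2}}^{\qexp/2}\leq \run^{\qexp/2-1}\sum_{\runalt=1}^\run\abs{\snoise_\runalt}^{\qexp}$. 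Taking expectations and using the hypothesis $\exof{\abs{\snoise_\runalt}^\qexp}\leq \sdev_\runalt^\qexp$, we obtain
\begin{equation}
\exof{\abs{S_\run}^\qexp}
\leq \Const'_\qexp\, \step^{\qexp}\,\run^{\qexp/2-1}\sum_{\runalt=1}^\run \sdev_\runalt^{\qexp}.
\end{equation}

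Plugging this bound back into the Doob estimate gives
\begin{equation}
\probof*{\sup_{\runaltalt\leq\run}\abs{S_\runaltalt} > \const(\step\run)^\mu}
\leq \frac{\Const'_\qexp}{\const^{\qexp}}\cdot \frac{\step^{\qexp-\mu\qexp}\run^{\qexp/2-1}\sum_{\runalt=1}^\run\sdev_\runalt^\qexp}{\run^{\mu\qexp}}
= \Const_\qexp\,\frac{\step^{\qexp(1-\mu)}\sum_{\runalt=1}^\run\sdev_\runalt^\qexp}{\run^{1+\qexp(\mu-1/2)}},
\end{equation}
with $\Const_\qexp\defeq \Const'_\qexp/\const^{\qexp}$, which is the claimed inequality. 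No step is a genuine obstacle; the only subtlety is lining up the exponents of $\step$ and $\run$ correctly, which is why the application of the power-mean inequality (producing the crucial factor $\run^{\qexp/2-1}$) must be done carefully rather than using the cruder bound $(\sum \snoise_\runalt^{2})^{\qexp/2}\leq (\max_\runalt\snoise_\runalt^{2})^{\qexp/2}\cdot \run^{\qexp/2}$.
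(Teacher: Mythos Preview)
Your proof is correct and follows exactly the same route as the paper: Doob's maximal inequality to pass to $\exof{|S_\run|^\qexp}$, Burkholder's inequality to bound this by the quadratic variation, and the power-mean (Jensen) inequality to extract the factor $\run^{\qexp/2-1}$, with the final constant $\Const_\qexp=\Const'_\qexp/\const^\qexp$. The only cosmetic difference is that you make explicit the submartingale argument justifying the $L^\qexp$ form of Doob's inequality, which the paper leaves implicit.
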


\begin{proof}
To bound the maximum absolute deviation of $S_\run$, we apply Doob's maximal inequality (see \cref{thm:Doob}), and obtain:
\begin{equation}
\label{eq:Doob-1}
\probof*{\sup_{\runaltalt\leq\run}\,\abs{S_\runaltalt} > \const(\step\run)^\mu} 
	\leq \frac{\exof{\abs{S_\run}^\qexp}}{\const^\qexp(\step\run)^{\qexp\mu}}
\end{equation}
Now, we invoke Burkholder's inequality (see \cref{thm:BDG}), from which we get:
\begin{align}
\label{eq:BDG-1}
\exof{\abs{S_\run}^\qexp} 
	&\leq \Const'_q \exof*{\parens*{\sum_{\runalt=1}^\run \step^2\abs{\snoise_\runalt}^2}^{\qexp/2}}
	\leq \Const'_q \step^\qexp 
	\exof*{\parens*{\sum_{\runalt=1}^\run \abs{\snoise_\runalt}^2}^{\qexp/2}}
\end{align}
where $\Const_\qexp$ is a constant that depends only on $\qexp$.
Since $\qexp \geq 2$, applying Jensen's inequality, we obtain:
\begin{align}
\parens*{\frac{1}{\run}\sum_{\runalt=1}^\run \abs{\snoise_\runalt}^2}^{\qexp/2} \leq \frac{1}{\run}\parens*{\sum_{\runalt=1}^\run \abs{\snoise_\runalt}^\qexp}
\end{align}
and, therefore,
\begin{align}
\exof*{\parens*{\sum_{\runalt=1}^\run \abs{\snoise_\runalt}^2}^{\qexp/2}} 
	&\leq \run^{\qexp/2-1}\exof*{\sum_{\runalt=1}^\run \abs{\snoise_\runalt}^\qexp}
	\leq \run^{\qexp/2 - 1}\sum_{\runalt=1}^\run \sdev_\runalt^\qexp
\end{align}
Thus, combining the above with \eqref{eq:Doob-1} and \eqref{eq:BDG-1}, we obtain:
\begin{align}
\probof*{\sup_{\runaltalt\leq\run}\,\abs{S_\runaltalt} > \const(\step\run)^\mu} 
	&\leq \frac{\Const'_q \step^\qexp \run^{\qexp/2-1}\sum_{\runalt=1}^\run \sdev_\runalt^\qexp }{\const^\qexp(\step\run)^{\qexp\mu}}
	\notag\\
	&\leq \Const_q \frac{\step^{\qexp(1-\mu)} \sum_{\runalt=1}^\run \sdev_\runalt^\qexp }{\run^{1+\qexp(\mu-1/2)}}
\end{align}
for $\Const_q \equiv \Const_\qexp'/\const^\qexp$, and the proof is complete.
\end{proof}

We finally provide a separation result in the spirit of Farkas' lemma, that we will need for establishing the convergence rates.
\begin{lemma}
\label{lem:farkas}
Let $\points = \braces{\point\in\vecspace: A\point = b, \point \geq 0}$ for $A\in\R^{m\times d}$, $b\in\R^d$.
Then, for all $\eq\in\points$ with $\act(\eq) \defeq \braces{\idx \in \braces{1,\dots,d}: \eq_\idx = 0}$, there exists $\fconst \equiv \fconst(\eq) \geq 1$ such that for all $\idxs \subseteq \act(\eq)$ at least one of the following is true:
\begin{enumerate}[(i)]
\item
$\idxs\neq\emptyset$ and there exists $\idx\in\act(\eq)\setminus\idxs$ such that $\point_\idx \leq \fconst\max\braces{\point_{\idxalt}: \idxalt\in\idxs}$ for all $\point\in\points$.

\item
There exists $\dev\in\ker(A)$ such that $\norm{\dev}\leq \fconst$, $\dev_{\idx} = 0$ for $\idx\in\idxs$ and $1\leq\dev_\idx\leq \fconst$ for $\idx\in\act(\eq)\setminus\idxs$.
Then, there exists 
\end{enumerate}
\end{lemma}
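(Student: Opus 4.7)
The plan is to treat the statement as a theorem of the alternative and apply Farkas' lemma to the linear system that encodes condition (ii). For each subset $\idxs \subseteq \act(\eq)$, this system is
\begin{equation*}
A\dev = 0, \quad \dev_{\idx} = 0 \text{ for } \idx \in \idxs, \quad \dev_{\idx} \geq 1 \text{ for } \idx \in \act(\eq) \setminus \idxs,
\end{equation*}
and the entire dichotomy will hinge on whether this system is feasible. In the feasible case I would obtain (ii) by passing to a vertex of the associated polyhedron: a generic solution may be arbitrarily large, but any basic feasible solution has coordinates controlled by a Cramer's-rule bound depending only on the entries of $A$, which supplies the norm and entrywise upper bounds in (ii).

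In the infeasible case, Farkas' lemma produces multipliers $\coef \in \R^{m}$, a vector $\coefalt$ indexed by $\idxs$, and $\nu \geq 0$ indexed by $\act(\eq) \setminus \idxs$ with $\sum \nu_\idx > 0$, such that $A^{T}\coef + \sum_{\idx \in \idxs} \coefalt_\idx e_\idx = \sum_{\idx \in \act(\eq)\setminus\idxs}\nu_\idx e_\idx$. Read componentwise, this forces $(A^{T}\coef)_\idx = 0$ off $\act(\eq)$ and $(A^{T}\coef)_\idx \geq 0$ on $\act(\eq) \setminus \idxs$, with strict positivity for at least one index $\idx_{0}$. To extract (i), I would plug an arbitrary $\point \in \points$ into the identity $\coef^{T} A(\point - \eq) = 0$; since $\eq$ vanishes on $\act(\eq)$ and $(A^{T}\coef)$ vanishes off $\act(\eq)$, this collapses to
\begin{equation*}
\sum_{\idx \in \act(\eq)\setminus\idxs}(A^{T}\coef)_\idx \point_\idx \;=\; -\sum_{\idx \in \idxs}(A^{T}\coef)_\idx \point_\idx,
\end{equation*}
and bounding the right-hand side by $\parens*{\sum_{\idx \in \idxs}\abs{(A^{T}\coef)_\idx}} \max_{\idxalt \in \idxs}\point_\idxalt$ while isolating the positive term $(A^{T}\coef)_{\idx_{0}}\point_{\idx_{0}}$ on the left yields the pointwise estimate in (i); note that this argument automatically requires $\idxs \neq \emptyset$, since otherwise the right-hand side vanishes and the infeasibility can only reflect the degenerate situation $\point_{\idx_{0}} \equiv 0$ on $\points$, which can be handled as a trivial boundary case.

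The principal obstacle is to produce a single constant $\fconst = \fconst(\eq)$ that works uniformly across all $\idxs \subseteq \act(\eq)$. In both alternatives the relevant certificate---a vertex direction for (ii), a Farkas multiplier for (i)---can be chosen from the finite set of basic solutions of a polyhedron whose data is determined by $A$ and $\act(\eq)$, so their entries admit Cramer-type bounds depending only on $A$; a normalization in the Farkas case that keeps $(A^{T}\coef)_{\idx_{0}}$ bounded below is what controls the ratio appearing in the estimate of (i). Since $\act(\eq)$ has only finitely many subsets, taking the maximum of the resulting constants over $\idxs$ produces the uniform $\fconst(\eq) \geq 1$ claimed in the statement.
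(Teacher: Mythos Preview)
The paper does not actually prove this lemma: its entire proof is the sentence ``For the proof, see \citet[Lemma~6]{AIMM22}.'' So there is no in-paper argument to compare against, and your Farkas-based dichotomy is precisely the kind of argument one would expect the cited reference to contain. The overall strategy---treat (ii) as a linear feasibility problem, invoke a theorem of the alternative when it is infeasible, and take a finite maximum over subsets $\idxs$ to obtain a uniform $\fconst(\eq)$---is sound.

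Two places deserve a bit more care. First, in the feasible case the polyhedron $\{z: Az=0,\; z_\idx=0 \text{ on }\idxs,\; z_\idx\ge 1 \text{ on }\act(\eq)\setminus\idxs\}$ need not be pointed: its lineality space consists of all $w\in\ker A$ supported outside $\act(\eq)$, so a literal ``pass to a vertex'' may fail. The fix is easy---project out the lineality space (which does not affect the coordinates in $\act(\eq)\setminus\idxs$ nor membership in $\ker A$) and then take a vertex of the resulting pointed polyhedron---but you should say this explicitly. Second, your treatment of $\idxs=\emptyset$ is a bit hand-wavy: when the system in (ii) is infeasible and $\idxs=\emptyset$, your Farkas identity gives $\sum_{\idx\in\act(\eq)}\nu_\idx x_\idx=0$ for every $x\in\points$, forcing $x_{\idx_0}\equiv 0$; alternative (i) is then genuinely unavailable since it requires $\idxs\neq\emptyset$. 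This is a degeneracy of $\points$ (a coordinate that is identically zero on the feasible set) rather than a flaw in your argument, and the cited reference presumably either rules it out by assumption or absorbs it into the inductive use of the lemma---but as written, calling it a ``trivial boundary case'' understates the issue.
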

\begin{proof}
For the proof, see \citet[Lemma 6]{AIMM22}.
\end{proof}


\subsection{Main results of \cref{sec:convergence}}

With the necessary tools in place, we proceed to prove the main results stated in \cref{sec:convergence}.
We start with the first result, establishing that a non-equilibrium point cannot arise as a limit point of the sequence of play induced by \eqref{eq:FTRL}.
\NONEQ*
\begin{proof}
Since $\noneq$ is not an equilibrium, there exists $\test\in\strats$ with $\braket{\payfield(\noneq)}{\test-\noneq} > 0$.
Therefore, by continuity of the function $x\mapsto\braket{\payfield(x)}{\test-x}$, there exists a neighborhood $\nhd$ of $\noneq$ and $\const > 0$ such that $\braket{\payfield(x)}{\test-x} \geq \const$ for all $x\in\nhd$.

Moreover, since $\cl(\nhd)$ compact, we have $\sup_{x\in\cl(\nhd)}\dnorm{\payfield(x)} = B <\infty$.
For the sake of contradiction, suppose that $\curr\to\noneq$.
Then, $\curr\in\nhd\cap\ball_{\const/4\normbound}(\noneq)$ eventually, \ie there exists $n_0$ such that $\curr \in \nhd$ and $\norm{\curr-\noneq} < \const/4\normbound$ for all $\run\geq n_0$.

Finally, since $\dpoint_\run\in\partial\hreg(\curr)$, we have for $\run > \run_0$:
\begin{align}
\hreg(\test) &\geq \hreg(\curr) + \braket{\dpoint_\run}{\test-\curr}
	\notag\\
	&\geq \hreg(\curr) + \braket{\dpoint_{\run_0}}{\test-\curr} + \step\sum_{\runalt = \run_0}^{\run-1}\braket{\payfield(\state_\runalt)}{\test-\curr}
	\notag\\
	&\geq \hreg(\curr) + \braket{\dpoint_{\run_0}}{\test-\curr} + \step\sum_{\runalt = \run_0}^{\run-1}\braket{\payfield(\state_\runalt)}{\test-\state_\runalt +\state_\runalt - \curr}
	\notag\\
	&\geq \hreg(\curr) + \braket{\dpoint_{\run_0}}{\test-\curr} + \step\sum_{\runalt = \run_0}^{\run-1} \parens*{\braket{\payfield(\state_\runalt)}{\test-\state_\runalt} + \braket{\payfield(\state_\runalt)}{\state_\runalt-\state_\run}}
	\notag\\
	&\geq \hreg(\curr) + \braket{\dpoint_{\run_0}}{\test-\curr} + \step\sum_{\runalt = \run_0}^{\run-1} \parens*{\braket{\payfield(\state_\runalt)}{\test-\state_\runalt} - \dnorm{\payfield(\state_\runalt)}\norm{\state_\runalt - \curr}}
	\notag\\
	&\geq \hreg(\curr) + \braket{\dpoint_{\run_0}}{\test-\curr} + \step\sum_{\runalt = \run_0}^{\run-1} \parens*{\const - \normbound\norm{\state_\runalt - \curr}}
	\notag\\
	&\geq \hreg(\curr) - \dnorm{\dpoint_{\run_0}}\norm{\test-\curr} + \step\sum_{\runalt = \run_0}^{\run-1} \parens*{\const - \const/2}
	\notag\\
	&\geq \min \hreg - \dnorm{\dpoint_{\run_0}}\diam(\strats) + \step\const\parens*{\run-\run_0}/2
\end{align}
Taking $\run\to\infty$, we obtain $\hreg(\test) \geq \infty$, which is a contradiction.
Therefore, the result follows.
\end{proof}

Moving forward, we show that equilibrium points in the relative interior cannot be limit points of \eqref{eq:FTRL}, either.
Formally, we have:
\RELINT*
\begin{proof}

Since $\eq$ an equilibrium point in $\relint(\strats)$, we readily get that $\braket{\payfield(\eq)}{\point-\eq} = 0$ for all $\point\in\points$, and $\eq\in\proxdom$.
In view of this, there exists $\deq\in\dpoints$ such that $\deq \in \partial\hreg(\eq)$, \ie $\eq = \mirror(\deq)$.
Our goal is to show that the auxiliary process $\dstate_\run$ does not converge to $\partial\hreg(\eq)$.
However, there are infinitely many points in $\dpoints$ that belong to $\partial\hreg(\eq)$, so this attempt is insufficient, in the sense that, showing that $\deq$ is not a limit point of the $\dpoint_\run$ dynamics, does not preclude that some other $\dpoint \in\partial\hreg(\eq)$ is not.

To tackle this issue, we will show that the space $\dpoints$ can be decomposed as $\dpoints = \Prp \oplus \Prl$ where all the ``essential'' deviations of the problem is in $\Prl$.
For this, we define the set
\begin{equation}
\Prp = \setdef{\dpoint\in\dpoints}{\braket{\dpoint}{\test-\point} = 0,\;\text{for all $\point,\test\in\points$}}
\eqstop
\end{equation}
which is a subspace of $\dpoints$, and as the following lemma suggests, is equal to the polar cone at any point in the relative interior.
\begin{restatable}{lemma}{POLAR}
\label{lem:polar}
Let $\point_0\in\relint(\points)$ and $\Prp = \setdef{\dpoint\in\dpoints}{\braket{\dpoint}{\test-\point} = 0,\;\text{for all }\point,\test\in\points}$.
Then $\Prp = \pcone(\point_0)$.
\end{restatable}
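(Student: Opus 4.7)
The plan is to identify a common linear subspace $\subspace \subseteq \vecspace$ whose orthogonal complement equals both $\Prp$ and $\pcone(\point_0)$. Specifically, define $\subspace \defeq \aff(\points) - \point_0$, which is a linear subspace of $\vecspace$ that does not depend on the choice of base point in $\aff(\points)$ and contains $\test - \point$ for every $\point,\test\in\points$.

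First, I would show $\Prp = \subspace^{\perp}$. For the forward inclusion, fix $\dpoint\in\Prp$; setting $\point=\point_0$ in the defining identity of $\Prp$ shows that $\dpoint$ annihilates $\points-\point_0$, hence its linear span, which is exactly $\subspace$. For the reverse inclusion, if $\dpoint\in\subspace^{\perp}$, then for any $\point,\test\in\points$ we have $\test-\point = (\test-\point_0) - (\point-\point_0) \in \subspace$, so $\braket{\dpoint}{\test-\point}=0$.

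Second, I would show $\pcone(\point_0)=\subspace^{\perp}$ by establishing that $\tcone(\point_0)=\subspace$. Because $\point_0\in\relint(\points)$, there is a neighborhood of $\point_0$ inside $\aff(\points)$ that is entirely contained in $\points$; hence for every $\dev\in\subspace$ we have $\point_0+t\dev\in\points$ for all sufficiently small $t>0$, giving $\subspace\subseteq\tcone(\point_0)$. Conversely, every feasible direction of the form $(\point-\point_0)/t$ with $\point\in\points$ and $t>0$ already lies in $\subspace$, and $\subspace$ is closed as a finite-dimensional subspace, so taking closures yields $\tcone(\point_0)\subseteq\subspace$. Once $\tcone(\point_0)$ is recognized as a linear subspace, the defining inequality $\braket{\dpoint}{\dev}\leq 0$ applied to both $\dev$ and $-\dev$ forces equality, so $\pcone(\point_0)=\subspace^{\perp}$. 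Combining the two identifications yields $\Prp=\subspace^{\perp}=\pcone(\point_0)$.

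No single step is genuinely hard; the only real subtlety is the use of the relative-interior hypothesis, which is what collapses the tangent cone to a full linear subspace and thereby turns the polar cone into an orthogonal complement. Without this hypothesis the two inclusions would break, since $\tcone(\point_0)$ would only be a pointed cone and $\pcone(\point_0)$ would strictly contain $\subspace^{\perp}$.
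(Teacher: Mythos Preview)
Your proof is correct. The paper's argument is organized a bit differently: rather than introducing the auxiliary subspace $\subspace = \aff(\points) - \point_0$ and computing the tangent cone, it works directly from the polar-cone definition, showing that for any $\dpoint \in \pcone(\point_0)$ and $\test \in \points$ the inequality $\braket{\dpoint}{\test-\point_0} \leq 0$ must in fact be an equality, because the reflected point $\point_0 - \alpha(\test-\point_0)$ also lies in $\points$ for small $\alpha > 0$ (this is where the relative-interior hypothesis enters). Once $\pcone(\point_0)$ is identified with $\{\dpoint : \braket{\dpoint}{\test-\point_0}=0 \text{ for all } \test\in\points\}$, the match with $\Prp$ is a two-line computation. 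Your route via $\tcone(\point_0) = \subspace$ is equivalent in substance---the reflection you use to get $\subspace \subseteq \tcone(\point_0)$ is the same one the paper applies on the dual side---but it has the advantage of making the linear-algebraic structure explicit: both $\Prp$ and $\pcone(\point_0)$ are exhibited as the orthogonal complement of a single named subspace, which is arguably cleaner.
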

To preserve the clarity of the argument, we defer the proof of \cref{lem:polar} until the end of this proposition.
Letting $\Prl$ be the orthocomplement of $\Prp$,
we readily get that $\dpoints = \Prp \oplus \Prl$, and any point $\dpoint$ in $\dpoints$ can be uniquely written as $\dpoint = \prppoint + \prlpoint$ with $\prppoint \in \Prp$ and $\prlpoint \in \Prl$.
Defining the linear map $\Pi:\dpoints \to \dpoints$ as $\Pi\dpoint = \prlpoint$, and more importantly, under all points in $\partial\hreg(\eq)$ under $\Pi$ are essentially unique.

This is formalized in the following lemma, whose proof is relegated after this proposition.
\begin{restatable}{lemma}{PIUNIQ}
Let $\point_0\in \relint(\points)$ and $\dpoint, \dpoint' \in \partial\hreg(\point_0)$.
Then $\Pi\dpoint\in\partial\hreg(\point_0)$, and $\Pi\dpoint = \Pi\dpoint'$.
\end{restatable}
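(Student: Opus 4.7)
The plan is to treat the two claims separately. Membership $\Pi\dpoint \in \partial\hreg(\point_0)$ will follow from a translation-invariance property of the subdifferential under $\Prp$, requiring no regularity of $\hreg$ beyond convexity; uniqueness $\Pi\dpoint = \Pi\dpoint'$ is more delicate and will use differentiability of $\hreg$ at $\point_0$, which is the standard Legendre-type assumption underlying the Euclidean and entropic regularizers considered throughout the paper.

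For membership, I would start from the observation that every $w \in \Prp$ satisfies $\braket{w}{\pointalt - \point_0} = 0$ for all $\pointalt \in \points$: this is just the definition of $\Prp$ with $\point = \point_0$ and $\test = \pointalt$. Consequently, the subgradient inequality $\hreg(\pointalt) \geq \hreg(\point_0) + \braket{\dpoint}{\pointalt - \point_0}$ is preserved when $\dpoint$ is replaced by $\dpoint + w$, so $\partial\hreg(\point_0)$ is invariant under translations by $\Prp$. Since $\dpoint - \Pi\dpoint \in \Prp$ by the orthogonal decomposition $\dpoints = \Prp \oplus \Prl$, applying this invariance with $w = -(\dpoint - \Pi\dpoint)$ gives $\Pi\dpoint = \dpoint + w \in \partial\hreg(\point_0)$.

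For uniqueness, I would exploit that $\point_0 \in \relint(\points)$ to run a two-sided directional argument. For any direction $\dev$ in the tangent subspace of $\points$ at $\point_0$, both $\point_0 + t\dev$ and $\point_0 - t\dev$ lie in $\points$ for all small $t > 0$; feeding each into the subgradient inequality, dividing by $t$, and letting $t \to 0^+$ squeezes $\braket{\dpoint}{\dev}$ between the two one-sided derivatives $-\hreg'(\point_0; -\dev)$ and $\hreg'(\point_0; \dev)$. Differentiability of $\hreg$ at $\point_0$ collapses both to $\braket{\nabla\hreg(\point_0)}{\dev}$, so $\braket{\dpoint - \nabla\hreg(\point_0)}{\dev} = 0$ for every tangent $\dev$. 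By \cref{lem:polar}, $\Prp$ is precisely the annihilator of the tangent subspace at $\point_0$, hence $\dpoint - \nabla\hreg(\point_0) \in \Prp$, which in turn forces $\Pi\dpoint = \Pi\nabla\hreg(\point_0)$. The identical computation for $\dpoint'$ yields $\Pi\dpoint' = \Pi\nabla\hreg(\point_0)$, and therefore $\Pi\dpoint = \Pi\dpoint'$.

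The main obstacle is the uniqueness step, which genuinely requires differentiability of $\hreg$ at $\point_0$: for a strongly convex but nonsmooth regularizer, the two one-sided derivatives $\hreg'(\point_0; \pm\dev)$ could differ, and $\braket{\dpoint}{\dev}$ would be free to range anywhere in between, leaving $\Pi\dpoint$ non-unique. Under the smoothness enjoyed by all regularizers used in the paper, this issue does not arise, and the argument reduces to the clean identifications $\partial\hreg(\point_0) = \nabla\hreg(\point_0) + \Prp$ and $\Pi|_{\Prp} = 0$.
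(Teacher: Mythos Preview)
Your argument for membership is exactly the paper's: both show that $\braket{\dpoint}{\test - \point_0} = \braket{\Pi\dpoint}{\test - \point_0}$ because the $\Prp$-component annihilates $\test - \point_0$, so the subgradient inequality transfers to $\Pi\dpoint$.

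For uniqueness, the paper is more direct: it simply asserts that $\braket{\dpoint - \dpoint'}{\test - \point_0} = 0$ for all $\test \in \points$ (citing $\point_0 \in \relint(\points)$ and the subgradient property), concludes $\dpoint - \dpoint' \in \pcone(\point_0) = \Prp$ via \cref{lem:polar}, and applies $\Pi$ by linearity. Your approach reaches the same destination by routing through the distinguished element $\nabla\hreg(\point_0)$ and a directional-derivative sandwich, obtaining $\dpoint - \nabla\hreg(\point_0) \in \Prp$ for each subgradient separately. The paper's route is shorter but leaves implicit the very step you spell out: the equality $\braket{\dpoint - \dpoint'}{\test - \point_0} = 0$ is precisely what your two-sided argument justifies, and it does require differentiability of $\hreg$ at $\point_0$ (strong convexity alone is insufficient, as $\hreg(x) = |x| + x^{2}$ on $[-1,1]$ shows). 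So your version is slightly longer but more explicit about the smoothness hypothesis; the paper's is cleaner but relies on the same fact without comment.
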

In view of the above, we are now ready to prove the result.
Namely, fix some $\test\in\strats$, $\test \neq \eq$ and let $\snoise_\run \defeq \braket{\Pi\noise_\run}{\test-\eq}$.

Then, setting $\variance \equiv (\test-\eq)^\top\Sigma(\test-\eq) > 0$, we have $\snoise_\run\sim (0,\variance)$ i.i.d., and, so, there exists $\eps,\delta >0$ such that $\probof{\snoise_\run > \eps} = \delta$ for all $\run\in\N$.
Therefore, by the second Borel-Cantelli lemma \cite{Bil96}, we get $\probof*{A} = 1$ for $A \equiv \braces{\snoise_\run > \eps \; \text{infinitely often}}$.
For the sake of contradiction, suppose that $\probof{B} > 0$ for $B\equiv\braces{\lim_{\run\to\infty}\curr = \eq}$.
Fix some $\omega\in B$.
Then, for all $\run$ large enough, we readily get that $\curr(\omega)\in\relint(\strats)$, and, denoting $\dstatealt_\run \defeq \Pi\dpoint_\run$ and $\dstatealt^* \defeq \Pi\deq$, we readily get that
\begin{equation}
\lim_{\run\to\infty}\dstatealt_\run(\omega) = \dstatealt^*
\end{equation}
Thus, setting $\alpha_\run \equiv \braket{\dstatealt_\run-\dstatealt^*}{\test-\eq}$ we conclude by the above equality that $\lim_{\run\to\infty}\alpha_\run =0$, and therefore it holds
\begin{align}
0
	&= \lim_{\run\to\infty}(\alpha_\run-\alpha_{\run-1})
	\notag\\
	&= \lim_{\run\to\infty} \braket{\Pi\signal_\run}{\test-\eq}
	\notag\\
	&= \lim_{\run\to\infty} \braket{\Pi\payfield(\curr)}{\test-\eq} + \braket{\Pi\noise_\run}{\test-\eq}
	\notag\\
	&= \braket{\Pi\payfield(\eq)}{\test-\eq} + \lim_{\run\to\infty} \snoise_\run
	\notag\\
	&= \lim_{\run\to\infty} \snoise_\run
\end{align}
Therefore, $\omega \notin A$, which implies that $B\subseteq A^c$, with $\probof{A^c} = 0$.
Thus, $\probof{B} = 0$, which is a contradiction, and the result follows.

\end{proof}

We now prove the two auxiliary lemmas presented in the proof of \cref{prop:lim-interior}.
\POLAR*
\begin{proof}
First, we will show that 
\begin{equation}
\label{eq:pcone}
\pcone(\point_0) = \braces{\dpoint\in\dpoints:\braket{\dpoint}{\test-\point_0} = 0 \text{ for all $\test\in\strats$}}
\end{equation}
For this, suppose that there exist $\dpoint\in\pcone(\point_0)$ and $\test'\in\strats$ such $\braket{\dpoint}{\test'-\point_0} < 0$.
Then, since $\point_0\in\relint(\strats)$, there exists $\alpha > 0$ such that $\point_0 - \alpha(\test'-\point_0) \in\strats$.
By the definition of the polar cone, $\braket{\dpoint}{\point_0-\alpha(\test'-\point_0)-\point_0} \leq 0$, or equivalently, $\braket{\dpoint}{\test'-\point_0} \geq 0$, which is a contradiction.
Therefore, \eqref{eq:pcone} holds, which implies that $\Prp\subseteq\pcone(x)$.

Now, for the inverse inclusion, let $\dpoint \in \pcone(\point_0)$ and $\test, \point\in\points$.
Then, we have:
\begin{align}
\braket{\dpoint}{\test-\point}
	&= \braket{\dpoint}{\test-\point_0+\point_0-\point} \notag\\
	&= \braket{\dpoint}{\test-\point_0} + \braket{\dpoint}{\point_0-\point} \notag\\
	&=0
\end{align}
where the last equality follows by \eqref{eq:pcone}.
Thus, $\dpoint\in\Prp$, and we conclude the result.
\end{proof}
\PIUNIQ*
\begin{proof}

For the first part, note that 
\begin{align}
\braket{\dpoint}{\test-\point_0}
	= \braket{\prppoint + \prlpoint}{\test-\point_0} 
	&= \braket{\prppoint}{\test-\point_0} + \braket{\prlpoint}{\test-\point_0} \notag\\
	&= \braket{\prlpoint}{\test-\point_0} \notag\\
	&= \braket{\Pi\dpoint}{\test-\point_0}
\end{align}
which directly implies that $\Pi\dpoint\in\partial\hreg(\point_0)$.
For the second part, since $\point_0 \in \relint(\strats)$, and $\dpoint,\dpoint'\in\partial\hreg(\point_0)$, we have that 
\begin{equation}
\braket{\dpoint-\dpoint'}{\test-\point_0}=0 \quad \text{for all $\test\in\strats$}
\end{equation}
Thus $\dpoint-\dpoint' \in \pcone(\point_0)$, and invoking \cref{lem:polar} we obtain that $\dpoint-\dpoint'\in\Prp$.
Therefore, applying the linear projection operator $\Pi$, we readily get that $\Pi(\dpoint-\dpoint')=0$, and, using linearity, the result follows.
\end{proof}

We now turn to our main convergence theorems, showing that the iterates of \eqref{eq:FTRL} converge with high probability under both gradient-based and payoff-based feedback

\ConvergenceSFO*

\begin{proof} 
Since $\eq$ strategically robust, $\payfield(\eq)$ lies in the interior of the $\pcone(\eq)$.
By \cref{lem:cone}, this implies in turn that there exists a polyhedral cone $\cone$ generated by $\devs \equiv\braces{\dev_1,\dots,\dev_r}$ for $r\in\N$, such that $\tcone(\eq)\subseteq\cone$ and $\braket{\payfield(\eq)}{\dev} < 0$ for all $\dev\in\devs$.%
\footnote{To resolve any ambiguities, the cone in question here is the polar of the cone provided by \cref{lem:cone}.}
Therefore, for all $\dev\in \devs$, we have $\braket{\payfield(\eq)}{\dev}\leq -\robust$, and by continuity of the vector field $\payfield$, there exists a neighborhood $\nhd$ of $\eq$ and $\const > 0$ such that $\braket{\payfield(\point)}{\dev}\leq -\const$ for all $\dev\in\devs$ and $\point\in\nhd$.

Fixing some $\dev\in\devs$, we obtain:
\begin{align}
\braket{\next[\dpoint]}{\dev}&= \braket{\curr[\dpoint]}{\dev} + \step\braket{\signal_\run}{\dev}
	\notag\\
	&= \braket{\curr[\dpoint]}{\dev} + \step\braket{\payfield(\curr)}{\dev} + \step\braket{\noise_\run}{\dev}
	\notag\\
	&= \braket{\init[\dpoint]}{\dev} + \step\sum_{\runalt=1}^\run\braket{\payfield(\state_\runalt)}{\dev} + \step\sum_{\runalt=1}^\run\braket{\noise_\runalt}{\dev}
\end{align}
Now, we define the stochastic process $(S_\run)_{\run\in\N}$ via $S_\run \defeq \step\sum_{\runalt=1}^\run \braket{\noise_\runalt}{\dev}$, which is a martingale, since $\exof{\braket{\noise_\runalt}{\dev}\given\filter_\runalt} =0$.

Therefore, by \cref{lem:BDG} for $\sdev_\run\equiv\sdev$, $\qexp > 2$ and $\mu\in(0,1)$, whose value is determined later, we get:
\begin{equation}
\label{eq:probbound-1}
\conf_\run \defeq\probof*{\sup_{\runaltalt\leq\run}\,\abs{S_\runaltalt} > \const(\step\run)^\mu} 
	\leq \Const_\qexp \frac{\step^{\qexp(1-\mu)}\sdev^\qexp}{\run^{\qexp(\mu-1/2)}}
\end{equation}
where $\Const_\qexp$ is a constant that depends only on $\const$ and $\qexp$.
Thus, we readily have that:
\begin{align}
\probof*{\bigcap_{\run\geq 1}\braces*{\sup_{\runaltalt\leq\run}\,\abs{S_\runaltalt} \leq \const(\step\run)^\mu}}
	&= 1 - \probof*{\bigcup_{\run\geq 1}\braces*{\sup_{\runaltalt\leq\run}\,\abs{S_\runaltalt} > \const(\step\run)^\mu}}
	\notag\\
	&\geq 1 - \sum_{\run=1}^\infty\probof*{\sup_{\runaltalt\leq\run}\,\abs{S_\runaltalt} > \const(\step\run)^\mu}
	\notag\\
	&\geq 1-\sum_{\run=1}^\infty \conf_\run
\end{align}
where the second inequality comes from the union bound.
Now, we need to ensure that $\sum_{\run=1}^\infty \conf_\run \leq \conf/r$.
For this, we need the sequence to be summable, which, using \eqref{eq:probbound-1}, is guaranteed for $\qexp(\mu-1/2) > 1$, or equivalently, $\mu \in (1/2 + 1/\qexp, 1)$.
Therefore, for $\step > 0$ small enough, we obtain that $\sum_{\run=1}^\infty \conf_\run \leq \conf/r$.

Therefore, with probability at least $1-\conf/r$, the template inequality becomes:
\begin{align}
\braket{\next[\dpoint]}{\dev} &\leq \braket{\init[\dpoint]}{\dev} + \step\sum_{\runalt=1}^\run\braket{\payfield(\state_\runalt)}{\dev} + \const(\step\run)^\mu
\end{align}
If we initialize $\init[\dpoint]$ such that $\braket{\init[\dpoint]}{\dev'} < -M - \const$ for all $\dev'\in\devs$, we get that $\braket{\curr[\dpoint]}{\dev} < -M$ for all $\run\in\N$ with probability at least $1-\conf/r$.
To see this, suppose that $\braket{\dstate_{\runalt}}{\dev} < -M$ for all $\runalt =1,\dots,\run$.
Then
\begin{align}
\braket{\next[\dpoint]}{\dev}
	&= \braket{\init[\dpoint]}{\dev} + \step\sum_{\runalt=1}^\run\braket{\payfield(\state_\runalt)}{\dev} + \step\sum_{\runalt=1}^\run\braket{\noise_\runalt}{\dev}
	\notag\\
	&\leq -M - \const - \const\step\run + \const(\step\run)^\mu
\end{align}
For $\run \in \N$ with $\step\run < 1$, we have $-\const + \const(\step\run)^\mu < 0$, while for $\step\run \geq 1$, it holds $-\const\step\run + \const(\step\run)^\mu < 0$.
In both cases, we conclude that $\braket{\next[\dpoint]}{\dev} < -M$, and by induction, we get the inequality.

Therefore, with probability at least $1-\conf/r$, we have:
\begin{align}
\braket{\next[\dpoint]}{\dev} &\leq -M - \const - \const\step\run + \const(\step\run)^\mu
\end{align}
and sending $\run\to\infty$, we get $\braket{\curr[\dpoint]}{\dev} \to -\infty$.

Finally, repeating the same argument for all $\dev\in\devs$ and applying a union bound, we readily get that $\braket{\dpoint_\run}{\dev} \to -\infty$ with probability at least $1-\conf$, and invoking \cref{prop:aux-conv}, the result follows.
\end{proof}

Having established the local convergence to $\eq$ with high probability, we proceed to the convergence rate in the case of affinely constrained $\points$ and decomposable regularizer $\hreg$.

\RATES*
\begin{proof}
By the definition of the iterates of \eqref{eq:FTRL}, we have:
\begin{equation}
\mirror(\dpoint_\run)
	= \argmin_{\point\in\points}\braces*{\hreg(\point) - \braket{\curr[\dpoint]}{\point}: A\point = b, \point \geq 0}
\end{equation}
Introducing the Lagrangian
\begin{equation}
\lagr(\point,\coef,\coefalt) = \hreg(\point) - \braket{\curr[\dpoint]}{\point} +\sum_{i = 1}^m \coef_i (a_{i}^\top\point - b_i) - \sum_{j = 1}^d \coefalt_j\point_j 
\end{equation}
with $\coef_i \in\R$ and $\coefalt_j \geq 0$, by the KKT conditions, we readily obtain:
\begin{equation}
\label{eq:KKT}
\dpoint_\run = \nabla\hreg(\curr) + \sum_{\runaltalt = 1}^m \coef_i a_i -\coefalt
\end{equation}
where $\nabla\hreg(\point) = \sum_{\idx=1}^d \theta'(\state_{\idx,\run})e_\idx$, since $\theta$ is continuously differentiable.
	
	For the sequel, we define the set of active constraints at $\eq$ as $\act(\eq) \defeq \braces{\idx\in\braces{1,\dots,d}:\eq_\idx = 0}$.
	Note that on the event of $\braces{\lim_{\run\to\infty}\curr = \eq}$, the iterates $\curr$ lie in a neighborhood of $\eq$, as shown in \cref{thm:sfo}.
Thus, all non-active indices $\idxalt \notin\act(\eq)$ stay bounded away from zero, and so $\abs{\theta\parens{\state_{\idxalt,\run}}}$ remains bounded for all $\run$.

	We treat the two cases separately: (i) the steep case, where $\hreg$ is steep -- equivalently $\theta(0^+) = -\infty$, and (ii) the non-steep case, where is $\hreg$ not steep, \ie $\theta'(0^+) > -\infty$.

	\para{The steep case}
	We define the set of ``good'' indices $\idxs$ at step $\run$ as: $\idx \in\idxs$ if $\theta'(\point_{\idx,\run}) \leq -\Theta(\run)$.
	Our goal is to show that all indices $\act(\eq)$ of $\curr$ are ``good''.
Fix some $\run\in\N$.
	
	Suppose that $\act(\eq)\setminus\idxs\neq\emptyset$, and let $\fconst\geq 1$, as per \cref{lem:farkas}.
Then, 
	\begin{itemize}
		\item If condition \emph{(i)} of $\cref{lem:farkas}$ holds, there exists $\idx'$ such that $\point_{\idx',\run}\leq \fconst\max\braces{\point_{\idxalt,\run}: \idxalt\in\idxs}$, and thus, $\idxs\leftarrow\idxs\cup\braces{\idx'}$.

		\item If condition \emph{(ii)} of $\cref{lem:farkas}$ holds, there exists $\devalt\in\ker(A)$ such that $\norm{\devalt}\leq \fconst$, $\devalt_{\idx} = 0$ if $\idx\in\idxs$ and $1\leq\devalt_\idx\leq \fconst$ if $\idx\in\act(\eq)\setminus\idxs$.
		By \eqref{eq:KKT}, and noting that $\devalt\in\ker(A)$ and $\coefalt = 0$, since all constraints are non-active due to steepness of $\hreg$, we have:
		\begin{align}
		\label{eq:tmp-rate-1}
			\braket{\nabla\hreg(\curr)}{\devalt} = \braket{\dpoint_\run}{\devalt}
		\end{align}
		Moreover, it holds:
		\begin{align}
			\braket{\nabla\hreg(\curr)}{\devalt} 
			= \sum_{\idx=1}^d \theta'(\state_{\idx,\run})\devalt_\idx 
			&= \sum_{\idx\in\idxs}\theta'(\state_{\idx,\run})\devalt_\idx 
			+ \sum_{\idx\in\act(\eq)\setminus\idxs}\theta'(\state_{\idx,\run})\devalt_\idx 
			+ \sum_{\idx\notin\act(\eq)}\theta'(\state_{\idx,\run})\devalt_\idx
			\notag\\
			\label{eq:tmp-rate-2}
			&= \sum_{\idx\in\act(\eq)\setminus\idxs}\theta'(\state_{\idx,\run})\devalt_\idx + \Const
		\end{align}
		for a constant $\Const$, since all non-active indices remain bounded away from zero, as explained in the beginning.
Now, note that $\devalt \in \tcone(\eq)$, and thus, by \cref{lem:cone}, we can write $\devalt$ as $\devalt = \sum_{i=1}^r\ell_i\dev_i$ with $\ell_i\geq 0$, such that $\braket{\curr[\dpoint]}{\dev_i} \leq -\Theta(\run)$ for all $i=1,\dots,r$ as in the proof of \cref{thm:sfo}.
So, combining it with \eqref{eq:tmp-rate-1}, \eqref{eq:tmp-rate-2}, we obtain:
		\begin{equation}
			\sum_{\idx\in\act(\eq)\setminus\idxs}\theta'(\state_{\idx,\run})\devalt_\idx \leq -\Theta(\run)
		\end{equation}
		and therefore, there exists at least one $\idx'\in\act(\eq)\setminus\idxs$ such that 
		\begin{equation}
			\theta'(\state_{\idx',\run})\devalt_{\idx'} \leq -\Theta(\run)
		\end{equation}
		Thus, $\idxs\leftarrow\idxs\cup\braces{\idx'}$.
	\end{itemize}
	Therefore, as $\act(\eq)$ is finite, we conclude inductively that $\theta'(\state_{\idx,\run}) \leq -\Theta(\run)$ for all $\idx\in\act(\eq)$.
	Finally, we have that $\R^d = \row(A) + \spanof\braces{e_\idx:\idx\in\act(\eq)}$, and thus, for all $i$, we can write the standard basis vector $e_i$ as $e_i = \sum_{\idx\in\act(\eq)}\coef_{i,\idx}e_\idx + a_i$ for some $a_i\in\row(A)$
	\begin{align}
		\point_{i,\run} - \eq_i = \braket{\curr-\eq}{e_i} &= \braket*{\curr-\eq}{\sum_{\idx\in\act(\eq)}\coef_{i,\idx}e_\idx+a_i}
		\notag\\
		&= \braket*{\curr-\eq}{\sum_{\idx\in\act(\eq)}\coef_{i,\idx}e_\idx}
		\notag\\
		&= \sum_{\idx\in\act(\eq)}\coef_{i,\idx}\state_{\idx,\run}
	\end{align}
	where we used that $\braket{\curr-\eq}{a_i} = 0$.
Thus, since $\theta'(\state_{\idx,\run}) \leq -\Theta(\run)$ for all $\idx\in\act(\eq)$, by the equivalence of norms and the above, we conclude that
	\begin{equation}
		\norm{\curr-\eq} = (\theta')^{-1}\parens*{-\Theta(\run)}
	\end{equation}
	\para{The non-steep case}
	For the non-steep case, we follow a similar approach, but with some modifications since the iterates of \eqref{eq:FTRL} are not always in the interior of $\points$.

	Specifically, let the set of ``good'' indices $\idxs$ be defined as: 
	$\idx \in\idxs$ if $\point_{\idx,\run} = 0$ or $\theta'(\point_{\idx,\run}) \leq -\Theta(\run)$.
	Our goal is to show that all indices $\act(\eq)$ of $\curr$ are ``good''.
We construct $\idxs$ sequentially, as before.
	
	Suppose that $\act(\eq)\setminus\idxs\neq\emptyset$, and let $\fconst\geq 1$, as per \cref{lem:farkas}.
Then, 
	\begin{itemize}
		\item If condition \emph{(i)} of $\cref{lem:farkas}$ holds, there exists $\idx'$ such that $\point_{\idx',\run}\leq \fconst\max\braces{\point_{\idxalt,\run}: \idxalt\in\idxs}$, and thus, $\idxs\leftarrow\idxs\cup\braces{\idx'}$.

		\item If condition \emph{(ii)} of $\cref{lem:farkas}$ holds, there exists $\devalt\in\ker(A)$ such that $\norm{\devalt}\leq \fconst$, $\devalt_{\idx} = 0$ if $\idx\in\idxs$ and $1\leq\devalt_\idx\leq \fconst$ if $\idx\in\act(\eq)\setminus\idxs$.
		Therefore, we have
		\begin{align}
		\label{eq:tmp-rate-1}
			\braket{\nabla\hreg(\curr)}{\devalt} = \braket{\dpoint_\run}{\devalt} + \braket{\coefalt}{\devalt} 
			&= \braket{\dpoint_\run}{\devalt} + \sum_{\idx\in\idxs}\coefalt_\idx\devalt_\idx 
			+ \sum_{\idx\in\act(\eq)\setminus\idxs}\coefalt_\idx\devalt_\idx 
			+ \sum_{\idx\notin\act(\eq)}\coefalt_\idx\devalt_\idx
			\notag\\
			&= \braket{\dpoint_\run}{\devalt}
		\end{align}
		where, in this case, we used that 
		\begin{enumerate*}[(i)]
			\item $\devalt_\idx = 0$ for $\idx\in\idxs$,
			\item $\coefalt_\idx = 0$ by complementary slackness for $\idx\notin\act(\eq)$ since these constraints remain non-active for the whole process, and
			\item $\coefalt_\idx = 0$, again by complementary slackness for $\idx\in\act(\eq)\setminus\idxs$ since if they were active, we would have $\idx \in \idxs$.
		\end{enumerate*}
		This, with the same argument as before, we conclude that
		\begin{equation}
			\sum_{\idx\in\act(\eq)\setminus\idxs}\theta'(\state_{\idx,\run})\devalt_\idx \leq -\Theta(\run)
		\end{equation}
		and therefore, there exists at least one $\idx'\in\act(\eq)\setminus\idxs$ such that 
		\begin{equation}
			\theta'(\state_{\idx',\run})\devalt_{\idx'} \leq -\Theta(\run)
		\end{equation}
		This holds until ${\idx'}$ vanishes, which can lead to $\coefalt_{\idx'} > 0$.
In either case, we have $\idxs\leftarrow\idxs\cup\braces{\idx'}$.
	\end{itemize}
	Finally, since $\act(\eq)$ is finite, we conclude inductively that all for all $\idx\in\act(\eq)$, we have either $\theta'(\state_{\idx,\run}) \leq -\Theta(\run)$ or $\state_{\idx,\run} = 0$.
	As in the steep case, we conclude
	\begin{equation}
		\norm{\curr-\eq} = \rate\parens*{-\Theta(\run)}
	\end{equation}
	and our proof is complete.
\end{proof}

We now shift to the payoff-based setting.
The relevant result is restated below.

\ConvergenceSPSA*

\begin{proof}
 First of all, we write $\signal_\run$ in the following convenient form: 
 \begin{equation}
	\signal_\run = \payfield(\curr) + \noise_\run + \bias_\run
 \end{equation}
with
\begin{equation}
	\noise_\run = \signal_\run - \exof{\signal_\run\given\filter_\run}
	\qquad \text{and}
	\qquad
	\bias_\run = \exof{\signal_\run\given\filter_\run} - \payfield(\curr)
\end{equation}
which, by \cref{prop:SPSA}, satisfy the bounds $\dnorm{\noise_\run} = \bigoh(1/\mix_\run)$ and $\dnorm{\bias_\run} = \bigoh(\mix_\run)$.
Now, as in the proof of \cref{thm:sfo}, $\payfield(\eq)$ lies in the interior of the $\pcone(\eq)$.
By \cref{lem:cone} this in turn implies that there exists a polyhedral cone $\cone$ generated by $\devs \equiv\braces{\dev_1,\dots,\dev_r}$ for $r\in\N$, such that $\tcone(\eq)\subseteq\cone$ and $\braket{\payfield(\eq)}{\dev} < 0$ for all $\dev\in\devs$.
\footnote{As before, to resolve any ambiguities, the cone in question here is the polar of the cone provided by \cref{lem:cone}.}
Therefore, for all $\dev\in \devs$, we have $\braket{\payfield(\eq)}{\dev}\leq -\robust$, and by continuity of the vector field $\payfield$, there exists a neighborhood $\nhd$ of $\eq$ and $\const > 0$ such that $\braket{\payfield(\point)}{\dev}\leq -\const$ for all $\dev\in\devs$ and $\point\in\nhd$.
Fix some $\dev\in\devs$.
Then, unfolding the evolution of $\dstate_\run$, we have:
\begin{align}
		\braket{\next[\dpoint]}{\dev}&= \braket{\curr[\dpoint]}{\dev} + \step\braket{\signal_\run}{\dev}
		\notag\\
		&= \braket{\curr[\dpoint]}{\dev} + \step\braket{\payfield(\curr)}{\dev} + \step\braket{\noise_\run}{\dev} + \step\braket{\bias_\run}{\dev}
		\notag\\
		&= \braket{\init[\dpoint]}{\dev} + \step\sum_{\runalt=1}^\run\braket{\payfield(\state_\runalt)}{\dev} + \step\sum_{\runalt=1}^\run\braket{\noise_\runalt}{\dev} + \step\sum_{\runalt=1}^\run\braket{\bias_\runalt}{\dev}
	\end{align}
	Now, we define the stochastic process $(S_\run)_{\run\in\N}$ via $S_\run \defeq \step\sum_{\runalt=1}^\run \braket{\noise_\runalt}{\dev}$, which is a martingale, since $\exof{\braket{\noise_\runalt}{\dev}\given\filter_\runalt} =0$, and $\exof{\abs{\braket{\noise_\runalt}{\dev}}^\qexp\given\filter_\runalt} \leq \exof{\dnorm{\noise_\runalt}^\qexp\given\filter_\runalt} = \bigoh((1/{\mix_\run})^\qexp)$.

	Therefore, by \cref{lem:BDG} for $\sdev_\run = \Theta(1/\mix_\run)$, $\qexp > 2$ and $\mu\in(0,1)$, whose value is determined later, we get:
	\begin{equation}
	\label{eq:probbound-2}
		\conf_\run \defeq\probof*{\sup_{\runaltalt\leq\run}\,\abs{S_\runaltalt} > (\const/2)(\step\run)^\mu} 
		\leq \Const_\qexp \frac{\step^{\qexp(1-\mu)}\sum_{\runalt=1}^\run\sdev_\runalt^\qexp}{\run^{1+\qexp(\mu-1/2)}}
	\end{equation}
	where $\Const_\qexp$ is a constant that depends only on $\const$ and $\qexp$.
Thus, for $\mix_\run = \mix/\run^\mixexp$, there exist $B > 0$ such that:
	\begin{equation}
		\sum_{\runalt=1}^\run\sdev_\runalt^\qexp \leq B \mix^{-\qexp}\sum_{\runalt=1}^\run \runalt^{\mixexp\qexp} \leq B' \mix^{-\qexp}\run^{1+\mixexp\qexp}
	\end{equation}
	where we used that $\sum_{\runalt=1}^\run \runalt^{\mixexp\qexp} = \Theta(\run^{1+\mixexp\qexp})$.
So, using the above bound, \eqref{eq:probbound-2} becomes:
	\begin{equation}
	\label{eq:probbound-2}
		\conf_\run
		\leq \Const'_\qexp \frac{\step^{\qexp(1-\mu)}\mix^{-\qexp}\run^{1+\mixexp\qexp}}{\run^{1+\qexp(\mu-1/2)}} 
		\leq \Const'_\qexp \frac{\step^{\qexp(1-\mu)}\mix^{-\qexp}}{\run^{\qexp(\mu-1/2-\mixexp)}}
	\end{equation}	
	Thus, we readily have that:
	\begin{align}
		\probof*{\bigcap_{\run\geq 1}\braces*{\sup_{\runaltalt\leq\run}\,\abs{S_\runaltalt} \leq \const(\step\run)^\mu}} &= 1 - \probof*{\bigcup_{\run\geq 1}\braces*{\sup_{\runaltalt\leq\run}\,\abs{S_\runaltalt} > \const(\step\run)^\mu}}
		\notag\\
	&\geq 1 - \sum_{\run=1}^\infty\probof*{\sup_{\runaltalt\leq\run}\,\abs{S_\runaltalt} > \const(\step\run)^\mu}
	\notag\\
	&\geq 1-\sum_{\run=1}^\infty \conf_\run
\end{align}
 Now, we need to show that there exists $\mu\in(0,1)$ and $\qexp > 2$ such that $\sum_{\run=1}^\infty \conf_\run \leq \delta/r$.
In order for the sum to be finite, we need $\qexp(\mu-1/2-\mixexp) > 1$ which readily implies that $\mixexp < \mu-1/2-1/\qexp$.

For the bias term, since $\dnorm{\bias_\runalt} = \Theta(\mix_\runalt)$, we have:
\begin{align}
\sum_{\runalt=1}^\run\braket{\bias_\runalt}{\dev} 
	&\leq \sum_{\runalt=1}^\run\dnorm{\bias_\runalt}\norm{\dev} \leq \sum_{\runalt=1}^\run\dnorm{\bias_\runalt} \leq D\sum_{\runalt=1}^\run\mix_\runalt 
	\leq D\sum_{\runalt=1}^\run\mix/\runalt^\mixexp 
	\leq D'\mix\run^{1-\mixexp}
\end{align}
for some $D' > 0$, where in the last inequality we used that $\sum_{\runalt=1}^\run 1/\runalt^{\mixexp} = \Theta(\run^{1-\mixexp})$.

Therefore, for $1-\mu < \mixexp$, and $\step < 1$, we readily get that 
\begin{align}
\label{eq:mixexp-bound}
\step\sum_{\runalt=1}^\run\braket{\bias_\runalt}{\dev} 
	\leq D' \step\mix\run^{\mu} \leq D'\mix(\step\run)^\mu
\end{align}
Therefore, we need to satisfy
\begin{equation}
	1-\mu < \mixexp < \mu -1/2-1/\qexp
\end{equation}
from which we obtain $\mu \in (3/4,1)$.
Thus, for $\mixexp \in (0,1/2)$, there exist $\mu \in (3/4,1)$ and $\qexp > 2$ that satisfy \eqref{eq:mixexp-bound}.
So, for $\mix,\step$ sufficiently small we can guarantee that 
\begin{equation}
	\step\sum_{\runalt=1}^\run\braket{\bias_\runalt}{\dev} 
	\leq (\const/2)(\step\run)^\mu \qquad \text{and} \qquad \sum_{\run=1}^\infty \conf_\run \leq \delta/r
\end{equation}
Therefore, with probability at least $1-\conf/r$, the template inequality becomes:
\begin{align}
\braket{\next[\dpoint]}{\dev}
	&\leq \braket{\init[\dpoint]}{\dev} + \step\sum_{\runalt=1}^\run\braket{\payfield(\state_\runalt)}{\dev} + (\const/2)(\step\run)^\mu + (\const/2)(\step\run)^\mu
	\notag\\
	&\leq \braket{\init[\dpoint]}{\dev} + \step\sum_{\runalt=1}^\run\braket{\payfield(\state_\runalt)}{\dev} + \const(\step\run)^\mu
\end{align}
Initializing $\init[\dpoint]$ such that $\braket{\init[\dpoint]}{\dev'} < -M - \const$ for all $\dev'\in\devs$, we have $\braket{\curr[\dpoint]}{\dev} < -M$ for all $\run\in\N$ with probability at least $1-\conf/r$.
To see this, we proceed by induction, and suppose that $\braket{\dstate_{\runalt}}{\dev} < -M$ for all $\runalt =1,\dots,\run$.
Then
\begin{align}
\braket{\next[\dpoint]}{\dev}
	&= \braket{\init[\dpoint]}{\dev} + \step\sum_{\runalt=1}^\run\braket{\payfield(\state_\runalt)}{\dev} + \step\sum_{\runalt=1}^\run\braket{\noise_\runalt}{\dev} + \step\sum_{\runalt=1}^\run\braket{\bias_\runalt}{\dev}
	\notag\\
	&\leq -M - \const - \const\step\run + \const(\step\run)^\mu 
\end{align}
For $\run \in \N$ with $\step\run < 1$, we have $-\const + \const(\step\run)^\mu < 0$, while for $\step\run \geq 1$, it holds $-\const\step\run + \const(\step\run)^\mu < 0$.
In both cases, we conclude that $\braket{\next[\dpoint]}{\dev} < -M$, and by induction, we get the inequality.

Therefore, with probability at least $1-\conf/r$, we have:
\begin{align}
\braket{\next[\dpoint]}{\dev} &\leq -M - \const - \const\step\run + \const(\step\run)^\mu
\end{align}
and sending $\run\to\infty$, we get $\braket{\curr[\dpoint]}{\dev} \to -\infty$.

As a final step, using the same argument for all $\dev\in\devs$ and applying a union bound, we have that $\braket{\dpoint_\run}{\dev} \to -\infty$ with probability at least $1-\conf$, and by \cref{prop:aux-conv}, we get that 
\begin{equation}
\probof*{\lim_{\run\to\infty}\curr = \eq} \geq 1-\conf
\eqstop
\end{equation}
If, in addition, $\strats$ is affinely constrained and $\hreg$ is decomposable with kernel $\theta$, then the argument in the proof of \cref{thm:rate} applies verbatim, yielding:
\begin{equation}
\norm{\curr-\eq} = \rate\parens*{-\Theta(\run)}
\eqstop
\end{equation}
whenever $\curr$ converges to $\eq$.
\end{proof}

We conclude this appendix with \cref{prop:lim-boundary}, which illustrates that an extreme, non-strategically robust equilibrium may exhibit fundamentally different behavior depending on the choice of regularizer.

\CONVERSE*

\begin{proof}
We show each case separately.\\[.5em]
(i)
Writing down the \eqref{eq:FTRL} dynamics, we have
\begin{equation}
\begin{aligned}
	&\next[\dstate] = \curr[\dstate] + \step(-\curr^{1/3} + \noise_\run)
	\\
	&\curr = \sup_{x\in[0,1]}\parens*{\curr[\dpoint]\, x - x\log x}
\end{aligned}
\end{equation}
Solving the maximization problem in the definition of $\curr$, we obtain:
\begin{equation}
\curr = \begin{cases}
	\exp(\dpoint_\run-1),\; \text{if $\dpoint_\run\leq 1$}
	\notag\\
		1, \;\text{if $\dpoint_\run > 1$}
	\end{cases}
\end{equation}
or, equivalently, $\curr = \one\parens{\curr[\dstate] > 1} + \one\parens{\curr[\dstate] \leq 1} \exp(\dpoint_\run-1)$
and the dual process can be written as:
\begin{equation}
\next[\dstate] = \curr[\dstate] - \step\one\parens{\curr[\dstate] > 1} -\step \one\parens{\curr[\dstate] \leq 1} \exp\big((\curr[\dstate]-1)/3\big) + \step\noise_\run
\end{equation}
It is clear that $\curr \to 0$ if and only if $\dpoint_\run \to -\infty$ as $\run$ goes to infinity.
For notational convenience, set $\dstatealt_n \equiv -\dstate_\run$.
Then, the evolution of the dual process becomes:
\begin{align}
\label{eq:z-1}
\next[\dstatealt] &= \curr[\dstatealt] + \step\one\parens{\curr[\dstatealt] < -1} + \step \one\parens{\curr[\dstatealt] \geq -1} \exp\big((-\curr[\dstatealt]-1)/3\big) - \step\noise_\run 
\end{align}

Now, define the process
\begin{equation}
\next[\dstatealt'] \equiv \parens*{\curr[\dstatealt'] + \step\one\parens{\curr[\dstatealt'] < -1} + \step \one\parens{\curr[\dstatealt'] \geq -1} \exp\big((-\curr[\dstatealt']-1)/3\big) - \step\noise_\run}^+, \quad \init[\dstatealt'] = \init[\dstatealt]
\end{equation}
where $\noise_\run$ is the same random variable as in \eqref{eq:z-1}.

The rest of our proof relies on a series of claims, which we state and prove one-by-one.

\begin{claim} 
\label{cl:dom-1}
The process $\parens{\dstatealt'_\run}_{\run\in\N}$ dominates $\parens{\dstatealt_\run}_{\run \in\N}$, \ie $\curr[\dstatealt'] \geq \curr[\dstatealt]$ for all $\run \in \N$.
\end{claim}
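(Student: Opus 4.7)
The plan is to prove \cref{cl:dom-1} by induction on $\run \in \N$, with the base case $\init[\dstatealt'] = \init[\dstatealt]$ holding by construction. For the inductive step, I introduce the scalar map
\begin{equation}
f(z) \defeq z + \step\,\one\braces{z < -1} + \step\,\one\braces{z \geq -1}\exp\parens*{(-z-1)/3}
\eqcomma
\end{equation}
so that the evolution of the two processes can be written as $\next[\dstatealt] = f(\curr[\dstatealt]) - \step\noise_\run$ and $\next[\dstatealt'] = \parens*{f(\curr[\dstatealt']) - \step\noise_\run}^{+}$, driven by the same realization of $\noise_\run$.

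The key technical ingredient will be monotonicity of $f$ on $\R$. On the half-line $z < -1$ this is immediate from $f(z) = z + \step$; on $z \geq -1$, a direct computation gives $f'(z) = 1 - (\step/3)\exp\parens*{(-z-1)/3}$, and since $\exp\parens*{(-z-1)/3} \leq 1$ for $z \geq -1$, we have $f'(z) \geq 1 - \step/3 > 0$ under the standing assumption $\step < 1$. The two branches agree at $z = -1$ (both yielding $-1+\step$), so $f$ is continuous and strictly increasing throughout $\R$.

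Granted the monotonicity of $f$ and the inductive hypothesis $\curr[\dstatealt'] \geq \curr[\dstatealt]$, we obtain $f(\curr[\dstatealt']) \geq f(\curr[\dstatealt])$; subtracting the common quantity $\step\noise_\run$ preserves this inequality. Using $(\argdot)^{+} \geq \argdot$, it follows that
\begin{equation}
\next[\dstatealt'] = \parens*{f(\curr[\dstatealt']) - \step\noise_\run}^{+} \geq f(\curr[\dstatealt']) - \step\noise_\run \geq f(\curr[\dstatealt]) - \step\noise_\run = \next[\dstatealt]
\eqcomma
\end{equation}
closing the induction. No serious obstacle is anticipated here: the only delicate point is monotonicity across the piecewise definition at $z = -1$, which is safely covered by the hypothesis $\step < 1$. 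This pathwise dominance will then enable a Lamperti-type recurrence argument on the nonnegative reflected process $\dstatealt'_\run$ to be transferred back to $\dstatealt_\run$, ultimately preventing $\curr[\dstate] \to -\infty$ and hence convergence of $\curr$ to $\eq = 0$.
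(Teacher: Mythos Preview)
Your proposal is correct and follows essentially the same approach as the paper: both define the same one-step map, establish its strict monotonicity on each branch via the same derivative computation (using $\step < 1$), and then leverage monotonicity together with the shared noise to propagate the dominance inequality. The only cosmetic difference is that you argue by direct induction and invoke $(\argdot)^{+} \geq \argdot$, whereas the paper frames it as a contradiction at the first time dominance fails; your additional check of continuity at $z = -1$ is a nice touch the paper leaves implicit.
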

The proof of \cref{cl:dom-1} lies at the end.
Now, invoking \cref{thm:lamperti} with 
\begin{equation}
f(\dstatealt) \equiv \step\one\parens{\dstatealt < -1} + \step \one\parens{\dstatealt \geq -1} \exp\big((-\dstatealt-1)/3\big)
\end{equation}
bounded and $\variance = \step^2$, it holds that \begin{equation}
f(z) \leq \frac{\variance}{2z} \quad \text{for all $z$ large enough}
\end{equation}
Thus, $\parens*{\curr[\dstatealt']}_{\run\in\N}$ is recurrent, which implies that 
\begin{equation}
\probof*{\,\lim_{\run\to\infty}\curr[\dstatealt'] = \infty } = 0
\end{equation}
Finally, since $\parens*{\curr[\dstatealt']}_{\run\in\N}$ dominates $\parens*{\curr[\dstatealt]}_{\run\in\N}$ by \cref{cl:dom-1}, we obtain
\begin{equation}
\braces*{\lim_{\run\to\infty}\curr[\dstatealt] = \infty }\subseteq \braces*{\lim_{\run\to\infty}\curr[\dstatealt'] = \infty}
\end{equation}
which implies that 
\begin{equation}
\probof*{\lim_{\run\to\infty}\curr = \eq} = \probof*{\lim_{\run\to\infty}\curr[\dstatealt] = \infty} \leq \probof*{\lim_{\run\to\infty}\curr[\dstatealt'] = \infty} = 0
\end{equation}
and the result follows.

\textit{Proof of \cref{cl:dom-1}.} 
Consider the function
\begin{equation}
g(z) \defeq z + \step\one\parens{\dstatealt < -1} + \step \one\parens{\dstatealt \geq -1} \exp\big((-\dstatealt-1)/3\big)
\end{equation}
Then:
\begin{itemize}
\item For $z < -1$: $g'(z) = 1$
\item For $z > -1$: $g'(z) = 1 - \step/3\,\exp\big((-z-1)/3\big) > 1 - \step/3 > 0$
\end{itemize}

Thus, $g$ is strictly increasing for all $z\in\R$.
Now, for the sake of contradiction, suppose that there exists $\omega\in\Omega$ and a first time $k+1 \in\N$ where the dominance does not hold, \ie
\begin{equation}
\dstatealt_k(\omega) \leq \dstatealt'_k(\omega) \quad \text{and} \quad \dstatealt'_{k+1}(\omega) < \dstatealt_{k+1}(\omega)
\end{equation}
By the monotonicity property of $g$, we get that 
\begin{equation}
g(\dstatealt_k(\omega)) \leq g(\dstatealt'_k(\omega))
\end{equation}
and, therefore, adding $-\step\noise_\runalt(\omega)$ in both sides
\begin{align}
\dstatealt_{\runalt+1}(\omega)
	&\leq \dstatealt'_\runalt + \step\one\parens{\curr[\dstatealt'] < -1} + \step \one\parens{\curr[\dstatealt'] \geq -1} \exp\big((-\curr[\dstatealt']-1)/3\big) - \step\noise_\runalt
\notag\\
	&\leq \parens*{\dstatealt'_\runalt + \step\one\parens{\curr[\dstatealt'] < -1} + \step \one\parens{\curr[\dstatealt'] \geq -1} \exp\big((-\curr[\dstatealt']-1)/3\big) - \step\noise_\runalt}^+
	\notag\\
	&\leq \dstatealt'_{\runalt+1}(\omega)
\end{align}
which is a contradiction.
Thus, the proof of \cref{cl:dom-1} is complete.
\\[1em]
(ii) In this setup, the \eqref{eq:FTRL} dynamics are described by the system
\begin{equation}
\begin{aligned}
	&\next[\dstate] = \curr[\dstate] + \step(-\curr^{1/3} + \noise_\run) 
	\\
	&\curr = \sup\nolimits_{x\in[0,1]}\parens*{\curr[\dpoint]\, x + 2\sqrt{x}}
\end{aligned}
\end{equation}
Solving the maximization problem in the definition of $\curr$, we obtain:
\begin{equation}
\curr = \begin{cases}
	(-\dpoint_\run)^{-2},\; \text{if $\dpoint_\run\leq -1$}
	\\
		1, \;\text{if $\dpoint_\run > -1$}
\end{cases}
\end{equation}
or, equivalently, $\curr = \one\parens{\curr[\dstate] > -1} + \one\parens{\curr[\dstate] \leq -1} (-\curr[\dpoint])^{-2}$
and the dual process can be written as:
\begin{equation}
\next[\dstate] = \curr[\dstate] - \step \one\parens{\curr[\dstate]> -1} - \step\one\parens{\curr[\dstate] \leq -1} (-\curr[\dpoint])^{-2/3} + \step\noise_\run
\end{equation}
For notational convenience, set $\dstatealt_n \equiv -\dstate_\run$.
Then, the evolution of the dual process becomes:
\begin{align}
\label{eq:z-2}
\next[\dstatealt] &= \curr[\dstatealt] + \step \one\parens{\curr[\dstatealt] < 1} + \step \one\parens{\curr[\dstatealt] \geq 1} \curr[\dstatealt]^{-2/3} - \step\noise_\run 
\end{align}

It is clear that $\curr \to 0$ if and only if $\curr[\dstatealt] \to \infty$ as $\run$ goes to infinity.
Now, define the process
\begin{equation}
\next[\dstatealt'] = \parens*{\curr[\dstatealt'] + \step \one\parens{\curr[\dstatealt'] < 1} + \step \one\parens{\curr[\dstatealt'] \geq 1} {\dstatealt'_\run}^{-2/3} - \step\noise_\run}^+, \quad \init[\dstatealt'] = \init[\dstatealt]
\end{equation}
where $\noise_\run$ is the same randomness as in \eqref{eq:z-2}.
\begin{claim} 
\label{cl:dom-2}
The process $\parens{\curr[\dstatealt']}_{\run\in\N}$ dominates $\parens{\curr[\dstatealt]}_{\run \in\N}$, \ie $\curr[\dstatealt'] \geq \curr[\dstatealt]$ for all $\run \in \N$.
\end{claim}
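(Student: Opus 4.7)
The plan is to mirror the proof of \cref{cl:dom-1} by identifying a ``drift'' function and arguing by induction. Concretely, set
\begin{equation}
g(z) \defeq z + \step\one\parens{z < 1} + \step\one\parens{z \geq 1}\,z^{-2/3}
\end{equation}
so that the two recursions take the form $\next[\dstatealt] = g(\curr[\dstatealt]) - \step\noise_\run$ and $\next[\dstatealt'] = \parens*{g(\curr[\dstatealt']) - \step\noise_\run}^+$, driven by the same noise realization. As in the steep case, the whole claim reduces to showing that $g$ is strictly increasing on $\R$.

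To check this, I would compute piecewise: $g'(z) = 1$ for $z < 1$ and $g'(z) = 1 - (2\step/3)\,z^{-5/3}$ for $z > 1$; since $\step < 1$ and $z^{-5/3} < 1$ whenever $z > 1$, the latter is bounded below by $1 - 2/3 > 0$. At the knot $z = 1$ both branches produce the common value $1 + \step$, so $g$ is continuous and hence strictly increasing on all of $\R$.

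With monotonicity of $g$ in hand, the induction is immediate. Both processes start from the same initial datum, so $\init[\dstatealt'] = \init[\dstatealt]$. Assuming $\dstatealt'_\runalt \geq \dstatealt_\runalt$ for some $\runalt$, monotonicity of $g$ yields $g(\dstatealt'_\runalt) - \step\noise_\runalt \geq g(\dstatealt_\runalt) - \step\noise_\runalt = \dstatealt_{\runalt+1}$; since $u \mapsto u^+$ is itself nondecreasing and satisfies $u^+ \geq u$, we conclude
\begin{equation}
\dstatealt'_{\runalt+1} = \parens*{g(\dstatealt'_\runalt) - \step\noise_\runalt}^+ \geq g(\dstatealt'_\runalt) - \step\noise_\runalt \geq \dstatealt_{\runalt+1} \eqstop
\end{equation}

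The main (mild) obstacle is verifying monotonicity at the kink $z = 1$, which reduces to matching one-sided values (in contrast with \cref{cl:dom-1}, here the nonlinear branch lies to the \emph{right} of the kink, but the same continuity argument goes through). Once \cref{cl:dom-2} is established, part (ii) of \cref{prop:lim-boundary} will follow by invoking \cref{thm:lamperti} on $\parens{\curr[\dstatealt']}_{\run\in\N}$ in its \emph{transient} branch: for large $z$, the drift $\step z^{-2/3}$ dominates $\variance/(2z) = \step^2/(2z)$ (indeed, $\step z^{-2/3} \geq \theta\variance/(2z)$ for any $\theta > 1$ once $z$ is sufficiently large), so $\curr[\dstatealt'] \to \infty$ almost surely; by the domination just proved, the same holds for $\curr[\dstatealt]$, whence $\curr \to 0$ a.s.
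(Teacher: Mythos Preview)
Your proof of \cref{cl:dom-2} itself is correct and follows the same route as the paper: define $g$, verify it is strictly increasing (you argue by direct induction, the paper phrases it as a contradiction at the first failure time---these are equivalent), and use that $u \mapsto u^{+}$ is nondecreasing with $u^{+} \geq u$.

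However, your final paragraph---which goes beyond the claim to sketch how part~(ii) of \cref{prop:lim-boundary} follows---contains a real gap. You write that $\curr[\dstatealt']\to\infty$ almost surely and then ``by the domination just proved, the same holds for $\curr[\dstatealt]$''. But the domination is $\curr[\dstatealt'] \geq \curr[\dstatealt]$, so $\curr[\dstatealt']\to\infty$ does \emph{not} by itself force $\curr[\dstatealt]\to\infty$; the inequality points the wrong way. The paper closes this by a drift comparison: once $\dstatealt'_\runalt > 1$ for all large $\runalt$, the per-step drift of $\dstatealt_\runalt$ (which is $\step$ if $\dstatealt_\runalt<1$ and $\step\dstatealt_\runalt^{-2/3}$ otherwise) is at least the per-step drift $\step{\dstatealt'_\runalt}^{-2/3}$ of $\dstatealt'_\runalt$, because $\dstatealt_\runalt \leq \dstatealt'_\runalt$ and $s\mapsto s^{-2/3}$ is decreasing. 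Summing, the telescoped increments of $\dstatealt_\run$ dominate those of $\dstatealt'_\run$; since the latter diverge, so do the former, and only then does $\curr[\dstatealt]\to\infty$ follow.
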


The proof of \cref{cl:dom-2} lies at the end.
Now, invoking \cref{thm:lamperti} with
\begin{equation}
f(z) \equiv \step \one\parens{\dstatealt < 1} + \step \one\parens{\dstatealt \geq 1} \dstatealt^{-2/3}
\end{equation} 
bounded, $\variance = \step^2$, and $\theta > 1$, we have
\begin{equation}
f(z) \geq \frac{\variance \theta}{2z} \quad \text{for all $z$ large enough}
\end{equation}
Thus, $\parens*{\curr[\dstatealt']}_{\run\in\N}$ is transient, which implies that $\probof*{A} = 1$ for $A = \braces{\omega\in\Omega: \lim_{\run\to\infty}\curr[\dstatealt'](\omega) = \infty}$.

Now, fix some $\omega\in A$.
Since $\lim_{\run\to\infty}\curr[\dstatealt'](\omega) = \infty$, there exists $n_\omega \in\N$ such that $\curr[\dstatealt'] > 1$ for all $n\geq n_\omega$, and therefore 
\begin{align}
\next[\dstatealt'] &= {\curr[\dstatealt'] + \step {\curr[\dstatealt']}^{-2/3} - \step\noise_\run}
	\notag\\
	&= \dstatealt'_{n_\omega} + \step\sum_{\runalt = n_\omega+1}^{\run}\parens*{{\dstatealt'_\runalt}^{-2/3} - \noise_\runalt}
\end{align}
from which we conclude that
\begin{equation}
\label{eq:limit-1}
\sum_{\runalt = n_\omega+1}^{\run}\parens*{{\dstatealt'_\runalt}^{-2/3} - \noise_\runalt} \to \infty \quad \text{as $\run\to\infty$}
\end{equation}
Finally, we have that 
\begin{align}
\next[\dstatealt] &= \dstatealt_{n_\omega} + \step\sum_{\runalt = n_\omega+1}^{\run}\parens*{ \one\parens{\dstatealt_\runalt < 1} + \one\parens{\dstatealt_\runalt \geq 1} \dstatealt_\runalt^{-2/3} - \noise_\runalt }
\end{align}
and, since $\parens{\curr[\dstatealt']}_{\run\in\N}$ dominates $\parens{\curr[\dstatealt]}_{\run \in\N}$, and $\curr[\dstatealt'] > 1$ for all $\run\geq\run_{\omega}$, we readily get that
\begin{equation}
\sum_{\runalt = n_\omega+1}^{\run}\parens*{ \one\parens{\dstatealt_\runalt < 1} + \one\parens{\dstatealt_\runalt \geq 1} \dstatealt_\runalt^{-2/3} - \noise_\runalt } \geq \sum_{\runalt = n_\omega+1}^{\run}\parens*{{\dstatealt'_\runalt}^{-2/3} - \noise_\runalt} 
\end{equation}
Thus, by \eqref{eq:limit-1}, we conclude that 
\begin{equation}
	\sum_{\runalt = n_\omega+1}^{\run}\parens*{ \one\parens{\dstatealt_\runalt < 1} + \one\parens{\dstatealt_\runalt \geq 1} \dstatealt_\runalt^{-2/3} - \noise_\runalt } \to \infty \quad \text{as $\run\to\infty$}
\end{equation}
which implies that $\lim_{\run\to\infty}\dstatealt_\run(\omega) = \infty$.
Therefore, we obtain that $\lim_{\run\to\infty}\dstatealt_\run(\omega) = \infty$ for all $\omega\in A$, and since $\prob(A)=1$, it follows that 
\begin{equation}
\probof*{\lim_{\run\to\infty}\state_\run = \eq} = \probof*{\lim_{\run\to\infty}\dstatealt_\run = \infty} = 1
\end{equation}
and the proof is complete.

\textit{Proof of \cref{cl:dom-2}.} 
Consider the function
\begin{equation}
g(z) \defeq z + \step \one\parens{z < 1} + \step \one\parens{z \geq 1} z^{-2/3}
\end{equation}
Then:
\begin{itemize}
\item
For $z < 1$: $g'(z) = 1 + \step > 0$
\item
For $z > 1$: $g'(z) = 1 - 2\step z^{-5/3}/3 > 1 - 2\step/3 > 0$
\end{itemize}
Thus, $g$ is strictly increasing for all $z\in\R$.
Now, for the sake of contradiction, suppose that there exists $\omega\in\Omega$ and a first time $k+1 \in\N$ where the dominance does not hold, \ie
\begin{equation}
\dstatealt_k(\omega) \leq \dstatealt'_k(\omega) \quad \text{and} \quad \dstatealt'_{k+1}(\omega) < \dstatealt_{k+1}(\omega)
\end{equation}
By the monotonicity property of $g$, we get that 
\begin{equation}
g(\dstatealt_k(\omega)) \leq g(\dstatealt'_k(\omega))
\end{equation}
and therefore, adding $-\step\noise_\runalt(\omega)$ in both sides
\begin{align}
\dstatealt_{\runalt+1}(\omega)
	&\leq \dstatealt'_\runalt + \step \one\parens{\dstatealt'_\runalt < 1} + \step \one\parens{\dstatealt'_\runalt \geq 1} {\dstatealt'_\runalt}^{-2/3} - \step\noise_\runalt
	\notag\\
	&\leq \parens*{\dstatealt'_\runalt + \step \one\parens{\dstatealt'_\runalt < 1} + \step \one\parens{\dstatealt'_\runalt \geq 1} {\dstatealt'_\runalt}^{-2/3} - \step\noise_\runalt}^+
	\notag\\
	&\leq \dstatealt'_{\runalt+1}(\omega)
\end{align}
which is a contradiction.
Thus, the proof of \cref{cl:dom-2} is complete.
\end{proof}


\section*{Acknowledgments}
\begingroup
\small
%
%
Jose Blanchet gratefully acknowledges support from the Department of Defense through the Air Force Office of Scientific Research (Award FA9550-20-1-0397) and the Office of Naval Research (Grant 1398311), as well as from the National Science Foundation (Grants 2229012, 2312204, and 2403007).
Panayotis Mertikopoulos is also a member of Archimedes/Athena RC and acknowledges financial support by the French National Research Agency (ANR) in the framework of the PEPR IA FOUNDRY project (ANR-23-PEIA-0003), the project IRGA-SPICE (G7H-IRG24E90), and project MIS 5154714 of the National Recovery and Resilience Plan Greece 2.0, funded by the European Union under the NextGenerationEU Program.
\endgroup

\bibliographystyle{icml}
\bibliography{bibtex/IEEEabrv,bibtex/Bibliography-PM}

\end{document}